\setlist{midsep}
\let\@algcomment\relax
\newcommand\algcomment[1]{\def\@algcomment{#1}}
\renewcommand\fs@ruled{\def\@fs@cfont{\bfseries}\let\@fs@capt\floatc@ruled
  \def\@fs@pre{\hrule height.8pt depth0pt \kern2pt}%
  \def\@fs@post{}%
  \def\@fs@mid{\kern2pt\hrule\kern2pt}%
  \let\@fs@iftopcapt\iftrue}
\definecolor{shadecolor}{gray}{0.90}
\newcommand\Ps@textstyle[2]{\mathbb{P}_{#1}\left[{#2}\right]}
\newcommand\Es@textstyle[2]{\mathbb{E}_{#1}\left[{#2}\right]}
\newcommand\Ps[2]{%
  \mathchoice % special stiling in display mode, regular elsewhere.
  {\underset{{#1}}{\mathbb{P}}\left[{#2}\right]
  }{\Ps@textstyle{#1}{#2}}{\Ps@textstyle{#1}{#2}}{\Ps@textstyle{#1}{#2}}
}
\newcommand\Es[2]{%
  \mathchoice % special stiling in display mode, regular elsewhere.
  {\underset{{#1}}{\mathbb{E}}\left[{#2}\right]
  }{\Es@textstyle{#1}{#2}}{\Es@textstyle{#1}{#2}}{\Es@textstyle{#1}{#2}}
}
\newcommand{\M}{\mathcal{M}}
\newcommand{\Mmatched}{\mathcal{M}_{\mathrm{matched}}}
\newcommand{\Msingle}{\mathcal{M}_{\mathrm{single}}}
\newcommand{\W}{\mathcal{W}}
\newcommand{\Wmatched}{\mathcal{W}_{\mathrm{matched}}}
\newcommand{\Wsingle}{\mathcal{W}_{\mathrm{single}}}
\renewcommand{\L}{\mathcal{L}}
\newtheorem{definition}{Definition}[section]
\newtheorem{corollary}[definition]{Corollary}
\newtheorem{theorem}[definition]{Theorem}
\newtheorem{claim}[definition]{Claim}
\tikzset{cross/.style={cross out, draw=black, minimum size=2*(#1-\pgflinewidth), inner sep=0pt, outer sep=0pt},
%default radius will be 1pt. 
cross/.default={1pt}}
\begin{document}

\title{
  Representing All Stable Matchings by Walking a Maximal Chain
}
\author{
  Linda Cai\\
  tcai@princeton.edu
  \and
  Clay Thomas\\
  claytont@princeton.edu
}
\maketitle

\begin{abstract}
  % Stable matchings are a classical area of study, and
  % the well-known book of Gusfield and
  % Irving~\cite{GusfieldStableStructureAlgs89} gives strong structural and
  % algorithmic results useful for understanding this problem. 
  The seminal book of Gusfield and
  Irving~\cite{GusfieldStableStructureAlgs89} provides a compact and
  algorithmically useful way to represent the collection of stable matches
  corresponding to a given set of preferences.
  In this paper, we reinterpret the main results
  of~\cite{GusfieldStableStructureAlgs89}, giving
  a new proof of the characterization which is able to bypass a lot
  of the ``theory building'' of the original works.
  We also provide a streamlined and efficient way
  to compute this representation.
  Our proofs and algorithms emphasize the connection to
  well-known properties of the deferred acceptance algorithm.
\end{abstract}

\section{Introduction} \label{sectionIntro}

Stable matching mechanisms are ubiquitous in theory and in practice,
especially in the ``bipartite case'' where agents lie in two disjoint groups
and one-to-one matches are made between members of different groups.
% See, for example, 
% and~\ref{roth2002economist} for real-world applications.
% Stable matchings present one of the rare examples in mechanism design without
% money where many positive results are possible.
The most commonly used stable matching mechanism is the Gale-Shapley algorithm,
i.e. ``one-side proposing deferred acceptance''.
This algorithm has the nice properties of being simple to implement,
fast to execute, and strategyproof for the proposing side.
% \cite{DubinsMachiavelliGaleShapley81}.
However, deferred acceptance always returns the best stable
matching for the proposing side and the worst stable match for the receiving side.
% In the traditional example of matching men to women, this distinguishes
% two stable matchings: the man-optimal and the woman-optimal outcome.
This leads to a basic question: what lies in between?
% (moreover, it is the unique mechanism satisfying strategyproofness for 
% one of the sides \cite{GaleMsMachiavelli85}).

% \subsection{The Structure of the Set of Stable Matchings}

This question can be rephrased as follows:
how can one understand, represent, and traverse the set of all 
stable matchings for a given set of preference?
An excellent answer to this question was given
by~\cite{GusfieldStableStructureAlgs89}, based on the
works~\cite{irving1986complexity, irving1985efficient, irving1987efficient}.
Despite the fact that there can be exponentially many stable
matchings\footnote{
  For an easy example, consider $n/2$ ``disjoint copies'' of an instance with
  two men and two women which has two stable matchings.
  This has $2^{n/2}$ stable outcomes.
},
the collection of \emph{all} stable matching
can be compactly represented in a form which is efficient to construct and
algorithmically useful, and sheds light on the structure of the stable
matching instance.

  % While the treatment in~\cite{GusfieldStableStructureAlgs89} is quite elegant,
  % in our opinion it is somewhat unnecessarily abstract and formal. %hard to approach.
  % We draw inspiration for many of our proofs from~\cite{ImmorlicaHonestyStability05}.
  % A major technical tool in our proofs is using the differed acceptance
  % algorithm to understand the collection of all stable matchings.
  % , as we use basic properties of 
  % the differed acceptance algorithm to understand the \emph{differences}
  % between stable matchings.
  % , closely resembling the main algorithm of~\cite{AshlagiUnbalancedCompetition17}.

  In this paper, we reinterpret and simplify the classification provided
  by~\cite{GusfieldStableStructureAlgs89}.
  We provide full proofs which characterize the ``lattice structure'' 
  of the set of stable matchings and culminate in a theorem equivalent to
  the main characterization of~\cite{GusfieldStableStructureAlgs89}:

  \begin{theorem}[Combination of theorems~\ref{theoremRotationsBijection}
    and~\ref{theoremAlgRuntime}]

    For any stable matching instance, there is a 
    directed acyclic graph $G$,
    computable in $O(n^2)$ time,
    such that there is a bijection between the set of all stable matchings and
    the collection of closed
    subsets of $G$ (i.e. the subsets $S$ of vertices of $G$ such that
    no directed edge $(u,v)$ of $G$ has $u\in S$ but $v\notin S$).
  \end{theorem}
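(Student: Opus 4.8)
The plan is to construct the DAG $G$ as the \emph{rotation poset} and to realize the bijection by ``walking'' a maximal chain from the man-optimal to the woman-optimal stable matching, exactly as the title suggests. First I would establish the lattice structure of stable matchings. Running deferred acceptance with men proposing produces the man-optimal stable matching $\mu_M$ (which is simultaneously woman-pessimal), and with women proposing produces $\mu_W$; more generally the stable matchings form a distributive lattice under the order in which $\mu \preceq \mu'$ iff every man weakly prefers his partner under $\mu'$. The extremal elements $\mu_M$ and $\mu_W$ will serve as the two endpoints of every maximal chain, so the whole argument amounts to understanding the single-step moves between adjacent matchings.

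Next I would define \emph{rotations}. Given a stable matching $\mu \neq \mu_W$, a rotation exposed in $\mu$ is a cyclic list $(m_0,w_0),\dots,(m_{k-1},w_{k-1})$ of matched pairs (indices mod $k$) in which each $w_i = \mu(m_i)$ and $w_{i+1}$ is the next woman after $w_i$ whom $m_i$ could be rematched to while preserving stability. \emph{Eliminating} the rotation rematches each $m_i$ to $w_{i+1}$, yielding a new stable matching $\mu'$ that covers $\mu$ in the lattice. The two structural facts I must prove here are: (i) every $\mu \neq \mu_W$ exposes at least one rotation, so $\mu_W$ is reachable from $\mu_M$ by repeated elimination; and (ii) each (man, woman) pair is eliminated by at most one rotation over the entire lattice, so there is a finite set $\Pi$ of rotations, each firing exactly once along any path from $\mu_M$ to $\mu_W$, and $|\Pi| = O(n^2)$.

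I would then take the vertex set of $G$ to be $\Pi$ and orient its edges so that the ``closed'' subsets named in the theorem coincide with the sets closed under taking prerequisites: put an edge $\rho' \to \rho$ whenever $\rho$ must be eliminated before $\rho'$ can become exposed. With this orientation, a vertex set $S$ has no edge leaving it precisely when $S$ contains every prerequisite of each of its rotations. The core of the theorem is the map sending a stable matching $\mu$ to the set $S(\mu) \subseteq \Pi$ of rotations eliminated along any path from $\mu_M$ down to $\mu$. I must show that $S(\mu)$ is closed, that it is independent of the chosen path, and that $\mu \mapsto S(\mu)$ is a bijection onto the closed subsets, with inverse ``eliminate exactly the rotations of $S$ in any order consistent with $G$.'' Path-independence reduces to a local commutation (diamond) lemma: if two rotations are simultaneously exposed in $\mu$, eliminating them in either order reaches the same matching, so any two linear extensions of a closed set produce the same $\mu$.

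The hard part will be this bijection, and in particular its injectivity and surjectivity. Surjectivity follows from fact (i): since $\mu_W$ is reachable and the lattice is graded by rotation elimination, every stable matching appears as some $S(\mu)$. Injectivity is where fact (ii) does the real work: because each (man, woman) pair is eliminated by a unique rotation, the partner of each man under $\mu$ pins down exactly which rotations have already fired, so $S(\mu)$ is recoverable from $\mu$ and distinct closed sets give distinct matchings. Finally, for the $O(n^2)$ bound I would give an algorithm that computes $\mu_M$ by deferred acceptance and then repeatedly locates and eliminates exposed rotations using ``next stable partner'' pointers; since each of the $O(n^2)$ pairs participates in at most one rotation and the precedence edges can be recorded as rotations are discovered, the full graph $G$ is produced in $O(n^2)$ total time, completing both halves of the combined statement.
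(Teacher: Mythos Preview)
Your outline is broadly correct and close in spirit to the paper, but it diverges at exactly the point the paper singles out as its contribution. You define the edges of $G$ abstractly as ``$\rho'\to\rho$ whenever $\rho$ must be eliminated before $\rho'$ can become exposed.'' The paper instead gives a \emph{concrete} syntactic definition: $\rho_1$ is a predecessor of $\rho_2$ either because $\rho_1$ moves some man $m$ to a woman $w$ with $(w,m)\in\rho_2$ (type~1), or because $\rho_1$ moves some woman from below a man $m$ to above him while $\rho_2$ moves $m$ from above her to below her (type~2). This matters twice. First, for the $O(n^2)$ bound: with your abstract definition there could be $\Theta(|\Pi|^2)=\Theta(n^4)$ precedence pairs, and ``record edges as rotations are discovered'' does not by itself say which sparse subset you record or why its transitive closure is the full order; the paper's type~1/type~2 edges are what make the edge count and the recording both $O(n^2)$. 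Second, for the bijection: with an abstract ``must precede'' relation, the statement ``every closed set yields a stable matching'' is nearly tautological, whereas with concrete edges it is the nontrivial direction, and the paper proves it via a swapping lemma (if $\rho_0,\rho_1$ are consecutive in a maximal chain and not in a type~1/type~2 relation, they can be interchanged) together with an induction over inversions showing that \emph{every} topological sort of $G(\Pi)$ corresponds to a maximal chain. Your diamond lemma (``simultaneously exposed rotations commute'') is the same idea, but you would still need to argue that ``no type~1/type~2 edge'' implies ``simultaneously exposed,'' which is precisely the content of the paper's swap lemma.

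Two smaller points. The paper's technical engine throughout is \emph{MPDA on truncated preferences} $P_w(\mu)$: covering relations, the ``next stable partner'' structure, and the characterization of when a rejection chain yields a rotation all come from analyzing this restricted run of deferred acceptance, rather than from the lattice axioms directly. And your surjectivity/injectivity paragraph has the directions slightly tangled: surjectivity of $\mu\mapsto S(\mu)$ onto closed sets is the hard direction (it needs the swap lemma above, not just reachability of $\mu_W$), while what you wrote under ``surjectivity'' is really well-definedness of the map.
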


  There is a compelling interpretation of the vertices of
  $G$. They are called ``rotations'', and represent the fact that, starting from
  some stable matching, some set of men $(m_0, \ldots, m_{k-1})$ can ``cyclically
  move partners'' (i.e. each $m_i$ gets re-matched to the partner of $m_{i+1}$
  (with indices mod $k$)) to arrive at a new stable matching.
  A full description of the rotations, and the dependencies between them,
  is given in this paper.
  Indeed, the primary simplifying contribution of this paper is in focusing on
  rotations ``from the start'' instead of going through other notions.

  We believe that our proof strategies and presentation is more intuitive and more
  ``fundamentally algorithmic'', as we focus on how simple
  properties of the ubiquitous differed acceptance algorithm can lead us
  to understand the full set of stable matchings.
  Furthermore, we give a new perspective on the algorithm used
  in~\cite{GusfieldStableStructureAlgs89} to construct the compact
  representation.
  Along the way, we correct a minor error in the original algorithm
  from~\cite{GusfieldStableStructureAlgs89} (for details,
  see appendix~\ref{sectionCorrection}).

  While the stable matching problem is a classic and well-studied problem,
  there are many exciting contemporary developments in the theory,
  from worst-case upper bounds on the number of stable matchings
  \cite{KarlinExpUBNumberStable18} to
  the communication complexity of finding stable 
  matchings~\cite{gonczarowski2019stable} to
  % surprising average-case behavior in unbalanced markets
  % \cite{AshlagiUnbalancedCompetition17} to
  detailed studies of different incentives properties
  \cite{AshlagiStabOSP18,Gonczarowski14}.
  Our intent is for this paper to provide a starting point for researchers
  interested in studying stable matching markets from an algorithmic perspective.
  % We believe that this paper is an excellent starting point for someone
  % interested in studying matching markets from a theoretical perspective.

  % In appendix~\ref{sectionRandom}, we present new intuition and proofs
  % for a preliminary version of the results of~\cite{AshlagiUnbalancedCompetition17},
  % which show that in unbalanced matching markets, the short side has a
  % significant advantage in every stable matching.

\subsection{Organization and relation to prior work}
  % Many parts of this paper can be read largely independently of the others.
  For completeness, we prove every result about stable matchings which we will
  need in this paper.
  In section~\ref{sectionReview}, we make our formal definitions
  and review the basic properties of deferred acceptance and the set of stable
  matchings. Readers familiar with stable matchings can likely skip this section
  (possibly reviewing the lattice-theoretic vocabulary given in
  section~\ref{subsectionLatticeIntro}).
  The core technical material is presented in sections~\ref{sectionLattice}
  to~\ref{sectionAlgo}.
  % , with section~\ref{sectionRotation} providing the main
  % characterization.
  \begin{itemize}
    \item In section~\ref{sectionLattice}, we discuss how to traverse the stable
      matching lattice algorithmically.
      Intuitively, this involves women ``rejecting'' their current match, and
      continuing running differed acceptance to get to a better stable matching
      (for the women).
      Our core technical tool, inspired by \cite{ImmorlicaHonestyStability05}
      is to use the concept of \emph{differed acceptance with truncated preferences}.

    \item In section~\ref{sectionRotation}, we define a compact representation
      of the stable matching lattice in terms of ``minimal differences'' called
      rotations, and prove that the representation is correct.
      Our definitions and theorems are as in~\cite{GusfieldStableStructureAlgs89},
      but we are able to significantly simplify our treatment by focusing
      on rotations ``from the start'' and avoiding intermediate representations.
      In particular, claim~\ref{claimToplSortsAreChains} and its proof using
      claim~\ref{claimToplSortsInterchange} are the key new ideas,
      which provide a way to show that a graph represents a lattice using a
      proof approach which (to the best of our knowledge) is brand new.
      Appendix~\ref{sectionContrastGI} provides a detailed
      comparison of our methods and those of~\cite{GusfieldStableStructureAlgs89}.

    \item In section~\ref{sectionAlgo}, we show how to efficiently
      construct the compact
      representation defined in section~\ref{sectionRotation}.
      While our algorithm is essentially
      equivalent to that in~\cite{GusfieldStableStructureAlgs89}
      (figure 3.2 on page 110),
      we provide a more streamlined way to find the ``predecessor relations''
      between rotations, which are the edges of the graph $G$,
      and thus avoid a minor error in the way
      that~\cite{GusfieldStableStructureAlgs89} finds these predecessor relations.
      In appendix~\ref{sectionCorrection}, we point out and correct this minor error.
      Our presentation is similar to that of the ``MOSM to WOSM'' algorithm
      in~\cite{AshlagiUnbalancedCompetition17},
      which relates more clearly to our conceptual use of differed acceptance.
  \end{itemize}

  % \input{IntroductionRewrite.tex}

%%%%%%%%% %%%%%%%%% %%%%%%%%% %%%%%%%%% %%%%%%%%% %%%%%%%%%
% \part{The Set of Stable Matchings}
%%%%%%%%% %%%%%%%%% %%%%%%%%% %%%%%%%%% %%%%%%%%% %%%%%%%%%

\section{Stable Matchings and Deferred Acceptance} \label{sectionReview}

  We start with the basic definitions.
  A matching market is a collection $\M$ of ``men'' and $\W$ of ``women'', where
  each man $m\in \M$ has a ranking over women in $\W$, represented
  as list ordered from most preferred to least preferred, and vice versa.
  Lists may be partial, and agents included on the list of some $a \in \M\cup\W$
  are called the acceptable partners of $a$.
  We write $w_1 \succ_m w_2$ if $w_1$ is ranked higher than $w_2$ on $m$'s list
  (or if $w_1$ is acceptable but $w_2$ is not ranked at all).
  We also denote the fact that $w$ is not an acceptable partner of $m$ by
  $\emptyset \succ_m w$, and conversely if $w$ is an acceptable partner of $m$
  we write $w \succ_m \emptyset$. A \emph{matching} is a set of vertex disjoint edges in the bipartite graph $G(M, W)$, where $(m, w) \in E(G)$ if and only if $m$ is acceptable to $w$ and vice versa. We denote a matching by $\mu : \M\cup \W\to \M\cup \W\cup\{\emptyset\}$, where $\mu(i)$ is the matched partner of agent $i$. We write $\mu(i) = \emptyset$ if
  agent $i$ is unmatched. 

  For a set of preferences $P = \{\succ_w\}_{w\in\W} \cup \{\succ_m\}_{m\in\M}$
  and any matching $\mu$, a man/woman pair
  $(m,w)$ is called \emph{blocking} if we simultaneously have
  $m \succ_w \mu(w)$ and $w \succ_m \mu(w)$.
  A matching $\mu$ is \emph{stable} for a set of preferences $P$
  if no unmatched man/woman pair is blocking for $P$.
  % we do not simultaneously have
  % $m \succ_w \mu(w)$ and $w \succ_m \mu(w)$.
  A \emph{pair} $(m,w)$ is called stable for $P$ if $\mu(m)=w$ in 
  \emph{some} stable matching,
  and $m$ is called a stable partner of $w$ (and vice-versa).

  % We also call stable matchings \emph{outcomes}.

\subsection{$MPDA$ and the man-optimal stable matching}

  The most natural way to find stable matchings is with
  the celebrated \emph{deferred acceptance algorithm}.
  In this paper, we consider man proposing deferred acceptance ($MPDA$)
  as given in Algorithm~\ref{algoMPDA}.
  For completeness, here we provide simple proofs of the basic
  but crucially important properties of this algorithm.

  \begin{algorithm}
    \caption{$MPDA$: Men-proposing deferred acceptance}
    \label{algoMPDA}
  \begin{algorithmic}[0]
    \State Let $U = \M$ be the set of unmatched men
    \State Let $\mu$ be an all empty matching
    \While { $U\ne \emptyset$ and some $m\in U$ has not proposed to every woman
      on his list}
      \State Pick such a $m$ (in any order)
      \State $m$ ``proposes'' to their highest-ranked woman $w$ which 
        they have not yet proposed to
      \If {$m \succ_w \mu(w)$} 
        \State If $\mu(w)\ne \emptyset$, add $\mu(w)$ to $U$
        \State Set $\mu(w) = m$, remove $m$ from $U$ 
      % \EndIf
      % \If {$\mu(h) = d_0$ and $d\succ_h d_0$} 
      %   \State Set $\mu(h) = d$, remove $d$ from $U$, add $d_0$ to $U$
      % \Else \ \ $h$ remains matched to $d_0$ and $d$ remains in $U$
      \EndIf
    \EndWhile
  \end{algorithmic}
  \end{algorithm}

  Intuitively, this algorithm starts with the men doing whatever they prefer
  the most, then doing the minimal amount of work to make the matching stable.
  Indeed, men propose in their order of preference. If a woman $w$ ever
  rejected a man $m$ they prefer over their current match,
  then \emph{remained} with their current match,
  then $(m,w)$ would clearly create an instability in the final matching.

  \begin{claim}
    The output of $MPDA$ is a stable matching.
  \end{claim}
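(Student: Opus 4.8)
The plan is to prove stability by establishing two key properties of $MPDA$ and then combining them. The core observation is captured by the intuition already stated in the excerpt: once a woman rejects a man, she is matched (now and forever after) to someone she strictly prefers. I want to formalize this as an invariant maintained throughout the execution.

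\medskip

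\noindent\textbf{Approach.} First I would argue that the quantity $\mu(w)$, from the perspective of each woman $w$, only improves over the course of the algorithm. Precisely: whenever $\mu(w)$ changes, it changes to a man $w$ strictly prefers (this is immediate from the \texttt{if} condition $m \succ_w \mu(w)$ guarding every reassignment), and once $w$ has any match, she never becomes unmatched again (she only releases her current partner to accept a strictly better one). Thus for each woman, her tentative partner is nondecreasing in her preference order as the algorithm runs.

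\medskip

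\noindent\textbf{Key steps, in order.} (1) Establish the monotonicity invariant for women just described. (2) Observe the dual fact for men: each man proposes to women in strictly decreasing order of his preference, so if a man $m$ is ever rejected by (or releases) a woman $w$, every woman to whom $m$ subsequently proposes — and in particular $m$'s final partner $\mu(m)$ — satisfies $w \succ_m \mu(m)$, i.e. $m$ weakly prefers $w$ to his final assignment, with the rejection meaning $m$ only ever does worse than $w$ afterward. (3) Argue the algorithm terminates: each man has a finite list and proposes to each woman at most once, so the \texttt{while}-loop guard eventually fails. (4) Argue that at termination no man is left having been rejected by a woman who would have accepted him — more carefully, I would show the output has no blocking pair. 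Take any pair $(m,w)$ with $w \succ_m \mu(m)$, so that $m$ would prefer $w$ to his assigned partner (this also covers $\mu(m)=\emptyset$ and the acceptability conventions via the $\emptyset$ notation). Since $m$ proposes in decreasing preference order and exhausted the relevant portion of his list before settling, $m$ must have proposed to $w$ at some point. By the women's monotonicity invariant, $w$ rejected $m$ only because she held (then or later) a partner she strictly prefers, so $\mu(w) \succ_w m$ in the final matching. Hence $(m,w)$ is not blocking.

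\medskip

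\noindent\textbf{Main obstacle.} The delicate part is step (4): carefully handling the case analysis around \emph{why} $m$ must have proposed to $w$, and ensuring the acceptability conventions (the $\emptyset \succ_m w$ and $w \succ_m \emptyset$ notation) are respected so that unmatched agents and unacceptable pairs are treated correctly. In particular I must confirm that if $w \succ_m \mu(m)$ then $w$ is acceptable to $m$ and $m$ did not terminate the loop with unproposed-to women he strictly prefers to his final match — this follows because the loop only stops proposing for $m$ once he is matched or has exhausted his list, but it requires threading the monotonicity facts together precisely rather than any deep new idea.
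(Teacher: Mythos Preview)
Your proposal is correct and follows essentially the same approach as the paper: both proofs establish the two invariants (women's tentative matches only improve; men propose in decreasing preference order), argue termination via finite lists, and then show any potential blocking pair $(m,w)$ is impossible because $m$ must have proposed to $w$ and $w$'s final partner can only be weakly better. Your write-up is more explicit about edge cases (unmatched agents and acceptability conventions), but the underlying argument is the same.
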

  \begin{proof}
    First, observe that the $MPDA$ algorithm terminates
    because every man will propose to every woman at most once.
    The claim follows from two simple invariants of the algorithm:
    \begin{itemize}
      \item Men propose in their order of preference.
      \item Women can only increase the rank of their tentative match over time
        (and once they are matched, they stay matched).
    \end{itemize}
    Formally, consider a pair $m\in \M$, $w\in \W$ which
    is unmatched in the output matching
    $\mu$. Suppose for contradiction $w\succ_m \mu(m)$ and $m\succ_w \mu(w)$.
    In the $MPDA$ algorithm, $m$ would propose to $w$ before $\mu(m)$.
    This means that $w$ received a proposal from a man she preferred over her
    eventual match $\mu(w)$, a contradiction.
  \end{proof}

  Note that this algorithm gives us a very interesting existence result: it was
  not at all clear that stable matching existed before we had this algorithm.

  We can now formalize our intuition that $MPDA$ does the least amount of work
  needed to result in a stable outcome
  (after men propose to their favorite women).
  We show that every rejection which happens in $MPDA$ \emph{must}
  happen in order for a stable matching to result.
  The proof uses the following technique:
  although it's not immediately easy to show an
  event can't happen, you can show it \emph{can't happen for the first time}.

  \begin{claim}\label{claimRejectionUnstable}
    If a man $m\in \M$ is ever rejected by a woman $w\in \W$ during some run
    of $MPDA$ (that is, $m$ proposes to $w$ and $w$ does not accept) then no stable
    matching can pair $m$ to $w$.
  \end{claim}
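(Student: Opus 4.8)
The plan is to use the ``can't happen for the first time'' technique suggested just before the statement. I would fix one arbitrary run of $MPDA$ and look at the chronological sequence of rejections it produces. Call a rejection \emph{bad} if the man $m$ and woman $w$ involved form a stable pair, i.e. some stable matching $\mu$ has $\mu(m)=w$. The goal is to show that no bad rejection ever occurs. Suppose for contradiction that one does, and focus on the \emph{first} bad rejection in the whole run, say $m$ being rejected by $w$, witnessed by a stable matching $\mu$ with $\mu(m)=w$.

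Next I would identify the man responsible for this rejection. Since $w$ rejects $m$, at that moment $w$ holds (or simultaneously receives a proposal from) some man $m'$ with $m'\succ_w m$; in particular $m'$ has already proposed to $w$, so $w$ is acceptable to $m'$. I claim $(m',w)$ is a blocking pair for $\mu$, which contradicts the assumed stability of $\mu$. The woman's half of the blocking condition is immediate, since $\mu(w)=m$ and $m'\succ_w m$. The crux is the man's half, $w\succ_{m'}\mu(m')$.

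To establish $w\succ_{m'}\mu(m')$ I would combine the invariant that men propose in order of preference with the minimality of the bad rejection. Because $m'$ has proposed all the way down to $w$, he has already been rejected by every woman he strictly prefers to $w$. If $\mu(m')\succ_{m'}w$ held, then $m'$ would earlier have proposed to, and been rejected by, $\mu(m')$; but $\mu(m')$ is a stable partner of $m'$ via $\mu$, so that earlier rejection would itself be bad, contradicting that the rejection of $m$ by $w$ is the \emph{first} bad one. Hence $w\succeq_{m'}\mu(m')$, and since $\mu(w)=m\ne m'$ forces $\mu(m')\ne w$, the preference is strict. (If $\mu(m')=\emptyset$ this is immediate, as $w$ is acceptable to $m'$.) This completes the blocking pair and the contradiction, so no bad rejection ever happens, which is exactly the statement.

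The main obstacle I anticipate is purely bookkeeping around the ``first bad rejection'' step: one must take ``first'' over the full chronological order of rejections so that \emph{every} earlier rejection---in particular every rejection of $m'$ before he reaches $w$---is guaranteed to be non-bad, and one must treat the boundary cases where the displacing proposal arrives at the same moment and where $\mu(m')=\emptyset$. Once the ordering is pinned down, the blocking-pair deduction is short.
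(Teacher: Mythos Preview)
Your proposal is correct and follows essentially the same argument as the paper: both use the ``first-time'' technique, identify the displacing man $m'$ at the first bad rejection, and derive that $(m',w)$ blocks the witnessing stable matching $\mu$ by noting that $m'$ cannot have been rejected earlier by $\mu(m')$ without contradicting minimality. The only cosmetic difference is that the paper fixes an arbitrary matching $\mu$ up front and shows it must be unstable if any $\mu$-pair is rejected, whereas you quantify over all stable matchings in defining ``bad''; the logic is identical once you invoke the specific witness $\mu$.
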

  \begin{proof}
    Let $\mu$ be any matching.
    Suppose that some pair, matched in $\mu$, is rejected during $MPDA$.
    Consider the first time during in the run of $MPDA$ where such a rejection
    occurs, i.e. a woman $w$ rejects $\mu(w)$ but no other woman $w'$ has
    rejected $\mu(w')$ so far.
    In particular, let $w$ reject $m=\mu(w)$ in favor of $m'\ne m$
    (either because $m'$ proposed to $w$,
    or because $m'$ was already matched to $w$ and $m$ proposed).
    We have $m'\succ_w m$, so if $m'$ is unmatched in $\mu$, then $\mu$ is
    unstable.
    Thus we have $\mu(m') = w' \ne w$,
    and because this is the first time any man has been rejected by a match from
    $\mu$, $m'$ has not yet proposed to $w'$.
    Because men propose in their preference order, we have $w \succ_{m'} w'$.
    However, this means $\mu$ is not stable.

    Thus, no woman can ever reject a stable partner in $MPDA$.
  \end{proof}

  By the previous claim, $MPDA$ moves the men down their preference
  lists the minimal amount required to enforce stability.
  Interestingly, a completely dual phenomenon occurs for the women's preferences.

  \begin{corollary}\label{claimMenBestStable}\label{claimWomenWorstStable}
    Let the set of men and women who receive a match at the end of $MPDA$
    be denote $\Mmatched$ and $\Wmatched$, respectively.
    In this matching $\mu$:
    \begin{enumerate} 
      \item every $m\in \Mmatched$ is paired to his best stable match.
      \item every $w\in \Wmatched$ is paired to their worst stable match.
    \end{enumerate}
  \end{corollary}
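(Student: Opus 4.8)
The plan is to derive part 1 (man-optimality) almost immediately from Claim~\ref{claimRejectionUnstable}, and then to bootstrap part 2 (woman-pessimality) from part 1 via a blocking-pair argument. Throughout, write $a \succeq_x b$ to mean $a \succ_x b$ or $a = b$.

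First, for part 1: fix a matched man $m \in \Mmatched$ with final match $\mu(m) = w$. Since men propose in order of preference, and a man only advances to the next woman on his list after being rejected by (or displaced from) the current one, the set of women $m$ proposes to during the run is a \emph{prefix} of his preference list, and $w$ is the last, hence least preferred, woman he proposes to. Thus every woman $w'$ with $w' \succ_m w$ was proposed to and rejected $m$ at some point. By Claim~\ref{claimRejectionUnstable}, no such $w'$ is a stable partner of $m$. Since $\mu$ is itself stable, $w$ \emph{is} a stable partner of $m$, and no strictly better one exists; hence $w$ is $m$'s best stable match.

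Next, for part 2: fix $w \in \Wmatched$ with $\mu(w) = m$, and suppose toward contradiction that some stable matching $\nu$ makes $w$ strictly worse off, i.e. $m \succ_w \nu(w)$ (this includes the case $\nu(w) = \emptyset$, since $m$ is acceptable to $w$ and so $m \succ_w \emptyset$). Applying part 1 to $m$, his best stable partner is $w = \mu(m)$, so in the stable matching $\nu$ we have $w \succeq_m \nu(m)$. I then split into two cases. If $w \succ_m \nu(m)$, then $m$ and $w$ each strictly prefer one another to their $\nu$-partners, so $(m,w)$ is a blocking pair for $\nu$, contradicting stability. If instead $w = \nu(m)$, then $\nu(w) = m$, contradicting $m \succ_w \nu(w)$. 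Either way we reach a contradiction, so no stable matching can make $w$ worse off than $m$; that is, $m$ is $w$'s worst stable match.

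The conceptual content lies entirely in part 1, and through it in Claim~\ref{claimRejectionUnstable}; part 2 is then a formal consequence. The step I would be most careful with is the prefix observation underlying part 1 --- namely that a matched man's final partner is the least preferred woman he ever proposes to, so that Claim~\ref{claimRejectionUnstable} rules out all strictly preferred women simultaneously --- together with the boundary bookkeeping in part 2 (the cases $\nu(w) = \emptyset$ and $\nu(m) = w$), which is where a careless argument could silently drop a case.
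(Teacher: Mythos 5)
Your proof is correct and follows essentially the same route as the paper: part 1 comes directly from Claim~\ref{claimRejectionUnstable} via the observation that a matched man was rejected by every woman he prefers to his final partner, and part 2 is deduced from part 1 by a blocking-pair argument. Your version is slightly more explicit about the boundary cases ($\nu(m)=w$ and $\nu(w)=\emptyset$), which the paper handles implicitly by restricting to stable matchings that do not pair $m$ and $w$, but the substance is identical.
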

  \begin{proof}
    Over the course of $MPDA$, each man $m\in\Mmatched$ was rejected by every
    woman which he prefers to his partner in $MPDA$.
    By claim~\ref{claimRejectionUnstable}, this means his partner in $MPDA$ is
    his top stable match.

    Let $m\in \M$ and $w\in \W$ be paired by $MPDA$.
    Let $\mu$ be any stable matching which does not pair $m$ and $w$.
    We must have $w \succ_m \mu(m)$,
    because $w$ is the best stable partner of $m$.
    If $m \succ_w \mu(w)$, then $\mu$ is not stable.
    Thus, $w$ cannot be stably matched to any man she prefers less than $m$.
  \end{proof}

  The last claim also implies that
  the matching output by $MPDA$ is independent of the order in which
  men are selected to propose.

\subsection{General stable matchings}

  Interestingly, our claim~\ref{claimRejectionUnstable}, which related to
  $MPDA$, can be used to prove a fundamental property of the set of \emph{all}
  stable matchings.
  Specifically, we can prove the following weaker version of the rural
  hospital theorem\footnote{
    The full rural hospital theorem \cite{RothRuralHospital86}
    applies to many-to-one matching markets
    (i.e. the residents and hospitals problem).
    The conclusion is that if a hospital does not
    fill \emph{all} its openings in \emph{some} stable outcome,
    then it will fail to fill all its openings
    (and indeed receive exactly the same doctors) in
    \emph{every} stable outcome.
  } which will be key for much of our discussion in
  section~\ref{sectionLattice}.
  \begin{claim}[Rural Hospital Theorem] \label{claimRuralDoctors}
    Then the set of unmatched agents is the
    same across every stable outcome.
    % If a set of men $\overline \M$ are rejected by every woman during MPDA,
    % then no stable matching will match any man in $\overline \M$.
    % Moreover, in every stable matching, the set of unmatched men is the same.
  \end{claim}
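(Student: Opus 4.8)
The plan is to compare an arbitrary stable matching $\mu$ against the man-optimal matching $\mu_M$ produced by $MPDA$, and show that the two leave exactly the same agents unmatched. Write $M_\mu$ and $W_\mu$ for the sets of men and women matched in $\mu$. Since $\mu_M$ pairs $\Mmatched$ with $\Wmatched$ bijectively we have $|\Mmatched| = |\Wmatched|$, and likewise $|M_\mu| = |W_\mu|$. I would establish two containments which, together with this cardinality bookkeeping, pin everything down.

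First I would show $\Wmatched \subseteq W_\mu$: every woman matched by $MPDA$ stays matched in $\mu$. Take $w \in \Wmatched$ with $m = \mu_M(w)$, and suppose toward contradiction that $w$ is unmatched in $\mu$. Since $m$ is matched in $\mu_M$, Corollary~\ref{claimMenBestStable} tells us that $w$ is $m$'s best stable partner, so $m$'s partner in $\mu$ is weakly worse than $w$ for $m$ (it is either $\emptyset$, which $m$ ranks below the acceptable $w$, or a genuine stable partner, which cannot beat his best). As $w$ is unmatched we cannot have $\mu(m) = w$, so in fact $w \succ_m \mu(m)$; and since $m \succ_w \emptyset = \mu(w)$, the pair $(m,w)$ blocks $\mu$, a contradiction.

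Second I would show $M_\mu \subseteq \Mmatched$: every man matched in $\mu$ is matched by $MPDA$. Take $m \in M_\mu$ with $w = \mu(m)$, so $(m,w)$ is a stable pair. If $m$ were unmatched in $\mu_M$, then during $MPDA$ he would have proposed to every woman on his list and been rejected by all of them, in particular he would have been rejected by $w$. But Claim~\ref{claimRejectionUnstable} says that a rejection of $m$ by $w$ rules out $(m,w)$ ever being a stable pair, contradicting the fact that $\mu$ matches them.

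Combining the two containments gives $|\Wmatched| \le |W_\mu| = |M_\mu| \le |\Mmatched| = |\Wmatched|$, so every inequality is an equality; with the containments and equal (finite) cardinalities this forces $W_\mu = \Wmatched$ and $M_\mu = \Mmatched$. As $\mu$ was arbitrary, every stable matching matches precisely the agents matched by $\mu_M$, so the unmatched set is common to all stable outcomes. The point worth flagging is the asymmetry of the two steps: one is tempted to prove directly that \emph{every} $MPDA$-matched agent remains matched in every stable matching, but the clean blocking-pair argument succeeds only for the women (through the men's best-stable-partner property), while in the reverse direction one only obtains the containment $M_\mu \subseteq \Mmatched$ via the rejection claim. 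Neither containment alone suffices, and it is precisely the counting step using $|\Mmatched| = |\Wmatched|$ that squeezes the two into equality.
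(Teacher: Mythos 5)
Your proof is correct, but it takes a different route from the paper's. Both arguments use claim~\ref{claimRejectionUnstable} to show that a man unmatched by $MPDA$ (having been rejected by every acceptable woman) can never be matched in a stable outcome, which gives your containment $M_\mu \subseteq \Mmatched$. For the other half, the paper simply appeals to symmetry: it reruns the whole argument with \emph{women-proposing} deferred acceptance to control the set of matched women, and then asserts the conclusion, leaving the reader to stitch the two one-sided containments together. You instead stay entirely inside $MPDA$: you derive $\Wmatched \subseteq W_\mu$ from corollary~\ref{claimMenBestStable} (each $MPDA$-matched man holds his \emph{best} stable partner, so a woman he is matched to cannot go unmatched in $\mu$ without creating a blocking pair), and then you make the cardinality bookkeeping $|\Wmatched| \le |W_\mu| = |M_\mu| \le |\Mmatched| = |\Wmatched|$ explicit to force both containments into equalities. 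What your version buys is self-containedness (no second run of deferred acceptance with roles reversed) and an explicit record of the counting step that the paper's proof elides; what the paper's version buys is brevity and a cleaner symmetry between the two sides. Your closing observation about the asymmetry is on point: the naive blocking-pair argument that ``every $MPDA$-matched man stays matched'' fails because his $MPDA$-partner may hold a strictly better stable partner in $\mu$, which is exactly why some form of counting (or a second, role-reversed run of deferred acceptance) is unavoidable.
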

  \begin{proof}
    Let $\Msingle$ be the set of men unmatched at the end of MPDA.
    Observe that each man in $\Msingle$ has proposed to every acceptable
    partner he has over the run of MPDA. Thus,
    claim~\ref{claimRejectionUnstable} implies that $\Msingle$ is unmatched
    in every stable outcome.
    On the other hand, reversing the roll of men and women and considering
    women-proposing deferred acceptance, we can see that the set of
    (un)matched women is also identical across every stable outcome.
  \end{proof}

  \newcommand{\graphwidth}{2.5}
  \newcommand{\graphheight}{1.5}
  \newcommand{\radius}{0.3}
    % claim 2.5 graph 
  \begin{wrapfigure}{rh}{3cm}
    \begin{tikzpicture}
      \draw[ultra thick] (0, \graphheight) -- (\graphwidth, \graphheight) node[pos=0.7, above] {$\mu$};
      \draw[thick, dashed] (0, \graphheight) -- (\graphwidth, 2*\graphheight) node[midway, above] {$\mu'$};
      \draw[thick, dashed] (0, 0) -- (\graphwidth, \graphheight) node[midway, below] {$\mu'$};
      \filldraw[fill=white] (0,0) circle[radius=\radius] node {$m'$};
      \filldraw[fill=white] (0,\graphheight) circle[radius=\radius] node {$m$};
      \filldraw[fill=white] (\graphwidth,\graphheight) circle[radius=\radius] node {$w$};
      \filldraw[fill=white] (\graphwidth,  2*\graphheight) circle[radius=\radius] node {$w'$};
    \end{tikzpicture}
  \end{wrapfigure}

  Claim~\ref{claimWomenWorstStable} seems to indicate that
  that the incentives of women and men are exactly opposite
  with regards to the results of man-proposing or women-proposing deferred
  acceptance.
  These next two claims prove that this is true for \emph{all} stable matchings.
  In \ref{subsectionLatticeIntro}, we investigate these ``order theoretic''
  properties further.

  \begin{claim}\label{claimOneUpOneDown}
    Let $\mu, \mu'$ be stable matchings, and say $\mu(m) = w$, but $\mu'(m)\ne w$.
    Then $\mu'(m) \succ_m w$ if and only if $\mu'(w) \prec_w m$.
  \end{claim}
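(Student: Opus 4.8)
The plan is to split the biconditional into two impossibility statements and to handle them separately, since they turn out to have very different difficulty. First I would invoke the rural hospital theorem (claim~\ref{claimRuralDoctors}): because $m$ and $w$ are matched in $\mu$, both are matched in $\mu'$, and since $\mu'(m)\ne w$ (equivalently $\mu'(w)\ne m$), each of $m$ and $w$ either \emph{strictly improves} or \emph{strictly worsens} in passing from $\mu$ to $\mu'$ — there is no ``stays the same'' possibility for this particular pair. Under this dichotomy, the asserted equivalence ``$\mu'(m)\succ_m w$ iff $m\succ_w\mu'(w)$'' is precisely the statement that $m$ and $w$ can neither both worsen nor both improve.

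Ruling out ``both worsen'' is immediate and yields the reverse implication. If $w\succ_m\mu'(m)$ and $m\succ_w\mu'(w)$, then $(m,w)$ — a pair that is mutually acceptable, being matched in the stable matching $\mu$, yet unmatched in $\mu'$ — is a blocking pair for $\mu'$, contradicting stability of $\mu'$.

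The forward implication amounts to ruling out ``both improve,'' and this is where the real work lies. The obstacle is that no bounded, local blocking argument can refute it: if both $m$ and $w$ improve, stability of $\mu$ only forces their new partners $\mu'(m)$ and $\mu'(w)$ to be \emph{worse off}, and chasing these forced worsenings simply walks along the alternating cycle of $\mu\oplus\mu'$ without ever closing. To finish, I would set up two ``deflection'' injections on the common set of matched agents and sandwich their cardinalities. Stability of $\mu$ shows that if a man improves then his $\mu'$-partner (a woman) strictly worsens — otherwise that man--woman pair blocks $\mu$ — so $\mu'$ injects the improving men into the worsening women. Symmetrically, stability of $\mu'$ shows that if a woman worsens then her $\mu$-partner strictly improves, so $\mu$ injects the worsening women into the improving men. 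Because these sets are finite and the two injections run in opposite directions, both must be bijections; in particular $\mu$ carries the improving men \emph{onto} the worsening women. Hence whenever $m$ improves, its $\mu$-partner $w=\mu(m)$ worsens, which is exactly the forward implication. (This bijection also re-proves the reverse implication, but I prefer to isolate the one-line blocking argument above, since the ``both improve'' case is the only one that genuinely requires the global counting.)
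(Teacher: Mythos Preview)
Your proof is correct but takes a different route from the paper. Both handle the reverse implication identically, as the one-line observation that $(m,w)$ would block $\mu'$ if both worsened. For the forward implication, the paper argues locally on four agents: setting $m'=\mu'(w)$ and $w'=\mu'(m)$, stability of $\mu'$ forces $w\succ_{m'}w'$ or $m\succ_{w'}m'$, and in each case the paper asserts that one of $(m',w)$ or $(m,w')$ blocks $\mu$. You instead run a global counting argument: stability of $\mu$ makes $\mu'$ inject improving men into worsening women, stability of $\mu'$ makes $\mu$ inject worsening women back into improving men, and finiteness turns both into bijections, so $\mu$ sends every improving man to a worsening woman.

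Your assertion that ``no bounded, local blocking argument can refute'' the both-improve case is directly at odds with the paper's four-agent argument, and in fact your instinct has substance: the paper's step ``$(m',w)$ form a blocking pair in $\mu$'' requires $w\succ_{m'}\mu(m')$, whereas only $w\succ_{m'}w'$ has been established, and nothing in the hypotheses forces $\mu(m')=w'$ (likewise for the other case). Your bijection argument sidesteps this cleanly; the trade-off is that it invokes the rural hospital theorem and a finiteness count rather than staying local to the agents named in the claim.
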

  \begin{proof}
    $(\Leftarrow)$ ``If $w$ downgrades, then $m$ upgrades''.
    Suppose $\mu'(w) \prec_w m$. Because $\mu'$ is stable, yet $m$ and $w$
    are not matched in $\mu'$, we must have $\mu'(m) \succ_m w$,
    or else $(m,w)$ would form a blocking pair.
    (A rephrasing: this direction is easy because the definition of stability
    immediately makes it impossible for $m$ and $w$ to both downgrade).

    $(\Rightarrow)$ ``If $w$ upgrades, then $m$ downgrades''.
    Let $m' = \mu'(w) \ne m$ and $w' = \mu'(m) \ne w$.
    Suppose that $m' \succ_w m$, and for contradiction suppose that $w' \succ_m w$.
    Because $\mu'$ is stable, $(m', w')$ is not a blocking pair,
    so either $w\succ_{m'} w'$ or $m\succ_{w'} m'$.
    In the first case, $(m',w)$ form a blocking pair in $\mu$,
    and in the second case, $(m,w')$ form a blocking pair in $\mu$.
    Thus, in either case $\mu$ is not stable.
  \end{proof}
  \begin{claim}\label{claimDominateOposites}
    Let $\mu$ and $\mu'$ be stable matchings.
    Every man (weakly) prefers their match in $\mu$ over $\mu'$ if and only if
    every woman (weakly) prefers their match in $\mu'$ over $\mu$.
  \end{claim}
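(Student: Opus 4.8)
The plan is to reduce the global two-sided statement to the per-pair biconditional already established in Claim~\ref{claimOneUpOneDown}. The key observation is that, since every preference list is a strict ranking, ``every man weakly prefers $\mu$ to $\mu'$'' is exactly the negation of ``some man strictly prefers his $\mu'$-partner'', and likewise ``every woman weakly prefers $\mu'$ to $\mu$'' is exactly the negation of ``some woman strictly prefers her $\mu$-partner''. So I would restate the two sides of the claim as: (A) no man strictly improves when passing from $\mu$ to $\mu'$, and (B) no woman strictly worsens when passing from $\mu$ to $\mu'$, and prove $A \iff B$.

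First I would use the rural hospital theorem (Claim~\ref{claimRuralDoctors}) to dispose of unmatched agents: the set of matched agents is the same in $\mu$ and $\mu'$, and any agent unmatched in both is matched to $\emptyset$ in both, hence contributes neither a strict improvement nor a strict worsening to either condition. So it suffices to reason about agents matched in $\mu$ (equivalently in $\mu'$), and the map $m \mapsto \mu(m)$ is a bijection from the matched men onto the matched women.

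The heart of the argument is to range Claim~\ref{claimOneUpOneDown} over this bijection. For each matched pair $(m,w)$ with $w=\mu(m)$ there are two cases: if $\mu'(m)=w$ the pair is unchanged, so neither ``$m$ improves'' nor ``$w$ worsens'' holds and the equivalence ``$m$ strictly improves $\iff$ $w$ strictly worsens'' is trivially true; and if $\mu'(m)\neq w$, this equivalence is precisely Claim~\ref{claimOneUpOneDown} (its left side $\mu'(m)\succ_m w$ says $m$ improves, its right side $\mu'(w)\prec_w m$ says $w$ worsens). Taking the conjunction of the negations of these per-pair equivalences, and using that the pairs $(m,\mu(m))$ enumerate every matched man exactly once and, via the bijection, every matched woman exactly once, yields directly that ``no man strictly improves'' $\iff$ ``no woman strictly worsens'', i.e.\ $A \iff B$. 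This proves both directions of the claim at once.

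The main thing to be careful about is the quantifier bookkeeping in that last step: I must verify that the pair-indexed family of equivalences really covers all matched men and all matched women with no omissions or double counting, which is exactly what the bijection $m\mapsto\mu(m)$ guarantees, and that the strictness of preferences makes ``weakly prefers'' and ``does not strictly prefer the other'' genuinely interchangeable (so that $\emptyset$-comparisons, e.g.\ $\emptyset\prec_w m$ for an acceptable $m$, behave as expected). I expect no genuinely hard step to remain once Claim~\ref{claimOneUpOneDown} is in hand; the work is purely in lining up the one-pair statement with the all-agents statement.
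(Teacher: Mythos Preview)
Your proposal is correct and follows essentially the same approach as the paper: both reduce the global statement to the per-pair biconditional of Claim~\ref{claimOneUpOneDown}, applied to each matched pair $(m,\mu(m))$. You are more explicit than the paper about invoking the rural hospital theorem to handle unmatched agents and about the bijection bookkeeping, but the underlying argument is identical.
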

  \begin{proof}
    Suppose each $m\in\M$ has $\mu'(m)\succeq_m\mu(m)$.
    For each $w\in\W$ with $\mu(w)\ne\mu'(w)$, we must have
    $\mu'(w)\prec_w \mu(w)$ by claim~\ref{claimOneUpOneDown}.
    The proof for the other direction is identical.
  \end{proof}

  \subsection{The Lattice of Stable Matchings}\label{subsectionLatticeIntro}

  Given that men and women have strictly opposite incentives, it is natural to
  define a dominance relationship over all stable matchings according to the
  preferences of one side of the market.

  \begin{definition}
    We say that a stable matching $\mu$
    \emph{woman-dominates} $\mu'$, written $\mu \ge \mu'$, if
    for every $w\in\W$, we have $\mu(w)\succeq_w \mu'(w)$
    (that is, every woman is at least as happy with her match in $\mu$ as in
    $\mu'$).
    For some fixed set of preferences,
    we let $\L$ denote the set of stable matchings
    of $P$, ordered by the relation $\ge$.
  \end{definition}
  We can define man-dominance analogously, and by
  claim~\ref{claimDominateOposites}, $\mu$ man-dominates $\mu'$ if and only if
  $\mu \le \mu'$.
  Now, one can visualize
  the collection of all stable matchings as starting with the unique man-optimal
  outcome at the bottom, the unique woman-optimal outcome at the top, and all
  other stable matching in between.
  % But what does this partial order graph
  % actually look like? Does it have any special properties? 

  In this section we show that the set of all stable
  matchings for a set of preferences $P$ forms what's called a
  \emph{distributive lattice} under the women-dominance order.
  % Then we give more intuition on the structure of the
  % stable matching lattice by showing the connection between the lattice and the
  % $MPDA$ algorithm.
  For the sake of completeness, we first discuss the relevant definitions.
  Informally, a lattice is a partial order in which, for any two elements $a,b$,
  there is a unique ``lowest element above $a$ and $b$'' (the join) and a
  ``highest element below $a$ and $b$'' (the meet)\footnote{
    Note that it follows from the definition that join and meet operations,
    if they exist, are unique.
  }.

  \begin{definition}
    A \emph{partial order} $\le$ is a reflexive, transitive, antisymmetric relation.
    We write $a < b$ when $a \le b$ and $a \ne b$.

    For elements $a,b$ of a partial order, a \emph{least upper bound} $a\vee b$ is
    an element such that $a\le a\vee b$ and $b\le a \vee b$, and for any 
    $c$ such that $a\le c$ and $b\le c$, we have $a\vee b\le c$.
    A \emph{greatest lower bound} $a\wedge b$ is defined analogously, interchanging $\le$
    with $\ge$.
    We also call $a\vee b$ the \emph{join} of $a$ and $b$ and $a\wedge b$ the
    \emph{meet} of $a$
    and $b$.

    A \emph{lattice} $L$ is a partial order in which there exist greatest lower
    bounds and least upper bounds for any $a,b\in L$.

    A \emph{chain} in a lattice is any ``totally ordered'' sequence
    $a_1 \le a_2 \le \ldots \le a_k$.

    A lattice $L$ is \emph{distributive} if the join and meet operations satisfy
    the following equations:
    \[ a \wedge (b \vee c) = (a\wedge b)\vee (a\wedge c) \]
    \[ a \vee (b \wedge c) = (a\vee b)\wedge (a\vee c) \]
  \end{definition}

  % Denote the woman dominance order as follows: If $\mu$ woman-dominates $\mu'$
  % we write $\mu\ge\mu'$. Now, we are ready to prove that under woman dominance
  % order $\ge$, the collection of all stable matching form a distributive
  % lattice. 

  The join and meet operations in $\L$ are very natural:
  the join of $\mu$ and $\mu'$ corresponds to the matching $\tilde\mu$ where
  each woman gets the better of her two partners from $\mu$ and $\mu'$.
  This is exactly the operation one would hope would work --
  clearly $\tilde\mu$ is the worst matching (for the women)
  in which women do at least as well as in $\mu$ and $\mu'$.
  % clearly $\mu, \mu' \le \tilde\mu$ and whenever $\mu, \mu' \le \mu^*$,
  % we have $\tilde\mu \le \mu^*$.
  We prove below that this operation always yields a stable matching.

  \begin{definition}
    Given stable matchings $\mu$ and $\mu'$, define 
    $\mu\vee\mu'$ such that, for each woman $w$,
    $(\mu\vee\mu')(w)$ is the most preferred partner
    of $w$ among $\mu(w)$ and $\mu'(w)$.
    Similarly, define
    $\mu\wedge\mu'$ such that each woman is matched to their least preferred
    partner from $\mu$ or $\mu'$.
    \end{definition}
  % With these operations spelled out,
  % we are ready to prove that $\L$ is a distributive lattice.

  \begin{theorem} \label{theoremDistributiveLattice}
    The collection $\L$ of all stable matchings of some instance
    form a distributive lattice under the dominance ordering $\le$,
    with join and meet given by $\vee$ and $\wedge$.
    % Moreover, the join and meet operations are given by $\vee$ and $\wedge$ as
    % defined above.
  \end{theorem}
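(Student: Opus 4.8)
The plan is to reduce everything to the observation that the woman-dominance order $\le$ is \emph{pointwise}: $\mu \ge \mu'$ holds exactly when $\mu(w) \succeq_w \mu'(w)$ for every $w$, and the proposed operations $\vee$ and $\wedge$ are the pointwise maximum and minimum of the partners with respect to each woman's (totally ordered) preference list. Granting that $\vee$ and $\wedge$ always produce \emph{stable} matchings, the lattice axioms are then immediate: $\mu\vee\mu'$ dominates both $\mu$ and $\mu'$ since each woman weakly prefers her better partner, and any stable $\nu$ with $\nu\ge\mu$ and $\nu\ge\mu'$ satisfies $\nu(w)\succeq_w\mu(w)$ and $\nu(w)\succeq_w\mu'(w)$, hence $\nu(w)\succeq_w(\mu\vee\mu')(w)$ for every $w$, so $\nu\ge\mu\vee\mu'$; this exhibits $\mu\vee\mu'$ as the least upper bound, and $\wedge$ is handled symmetrically. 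Thus the entire theorem rests on showing $\L$ is closed under $\vee$ and $\wedge$.

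The hard part is verifying that $\tilde\mu := \mu\vee\mu'$, defined by handing each woman her preferred partner from $\mu(w),\mu'(w)$, is even a well-defined matching --- a priori two women could lay claim to the same man. First I would use claim~\ref{claimRuralDoctors} to reduce to the case where every partner in sight is a genuine agent: the set of matched agents is identical in $\mu$ and $\mu'$, so I never compare a real man against $\emptyset$, and women unmatched in both simply stay unmatched in $\tilde\mu$. Then, fixing a man $m$ with $w=\mu(m)$ and $w'=\mu'(m)$, I would apply claim~\ref{claimOneUpOneDown} in both directions to show that the woman to whom $m$ is \emph{worse} is exactly the one who keeps him: if, say, $w'\succ_m w$, then claim~\ref{claimOneUpOneDown} forces $\mu'(w)\prec_w m$ (so $w$ takes $m$) and, applied with the roles of $\mu,\mu'$ swapped, $\mu(w')\succ_{w'} m$ (so $w'$ drops $m$ in favor of her other, better partner). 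Since the only women who could possibly select $m$ in $\tilde\mu$ are $w$ and $w'$, exactly one of them does; running this over all men shows $\tilde\mu$ assigns each man to exactly one woman, i.e.\ it is a matching, and moreover it pairs each man with his \emph{less} preferred of the two partners.

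Finally I would check stability and distributivity. For stability, suppose $(m,w)$ blocks $\tilde\mu$; then $m\succ_w\tilde\mu(w)$, and since $\tilde\mu(w)$ is $w$'s better partner we get $m\succ_w\mu(w)$ and $m\succ_w\mu'(w)$, while $\tilde\mu(m)$ is $m$'s worse partner, say $\tilde\mu(m)=\mu(m)$, giving $w\succ_m\mu(m)$. Then $(m,w)$ is a blocking pair for $\mu$, contradicting its stability. The operation $\wedge$ is symmetric (each woman takes her worse partner, equivalently each man his better, by the dual of claim~\ref{claimOneUpOneDown} already packaged in claim~\ref{claimDominateOposites}), so $\L$ is closed under both operations and is a lattice. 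Distributivity then comes for free from the pointwise description: for each fixed woman $w$ her preference is a total order, and $\max$ and $\min$ over a chain satisfy the two distributive laws; since $\big(\mu\wedge(\mu'\vee\mu'')\big)(w)$ and $\big((\mu\wedge\mu')\vee(\mu\wedge\mu'')\big)(w)$ are the same $\max/\min$ expression in $w$'s chain, they agree for every $w$, hence the matchings are equal, and likewise for the dual identity. I expect the well-definedness argument of the middle paragraph to be the only genuinely delicate step; stability and distributivity are short once the pointwise picture and claim~\ref{claimOneUpOneDown} are in hand.
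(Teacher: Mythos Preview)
Your proposal is correct and follows essentially the same approach as the paper: both use claim~\ref{claimOneUpOneDown} twice to show $\mu\vee\mu'$ is a well-defined matching, check stability directly by showing a blocking pair for $\tilde\mu$ would block one of $\mu,\mu'$, handle $\wedge$ by the man/woman duality, and derive distributivity from the pointwise $\min/\max$ description. The only cosmetic differences are that you frame well-definedness constructively (showing each man is kept by exactly the woman for whom he is the worse option, and invoking claim~\ref{claimRuralDoctors} to dispose of unmatched agents) whereas the paper argues by contradiction that no man can be claimed twice; your stability argument also leverages the ``each man gets his worse partner'' observation you established along the way, while the paper argues this step slightly differently---but the substance is the same.
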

  \begin{proof}
    It's easy to see that $\le$ forms a partial order on $\L$.
    We'll show that $\vee$ gives least upper bounds in $\L$.
    It's easy to see that, if $\tilde\mu = \mu\vee\mu'$ is a stable matching,
    then it must be the least upper bound of $\mu$ and $\mu'$.

    First, we claim that $\tilde\mu$ is a matching.
    Suppose for contradiction that some man $m$
    is the match of two women $w$ and $w'$ in
    $\tilde\mu$. Without loss of generality suppose $\mu(w)=m$,
    so $m=\mu(w)\succ_w \mu'(w)$,
    and $\mu'(w')=m$, so $m=\mu'(w')\succ_{w'}\mu(w')$.
    Applying claim~\ref{claimOneUpOneDown} twice,
    we get that $w=\mu(m)\prec_m \mu'(m)=w'$
    and also that $w'=\mu'(m)\prec_m \mu(m)=w$,
    a contradiction.

    Second, we claim that $\tilde\mu$ is stable.
    Suppose that $(m,w)$ is a blocking pair for $\tilde\mu$,
    Certainly the partners of $m$ and $w$ must be from different matchings among
    $\mu$ or $\mu'$, say $\tilde\mu(m)=\mu'(m)$ and 
    $\tilde\mu(w)=\mu(w)\ne \mu'(w)$.
    As $(m,w)$ is blocking, $w\succ_m\mu'(m)$ and $m\succ_w\mu(w)$.
    But by the definition of $\tilde\mu$, we have $\mu(w)\succ_w\mu'(w)$,
    so $m\succ_w\mu'(w)$ as well, and $\mu'$ is not stable.

    Now we show that $\wedge$ gives the greatest lower bound in $\L$.
    % For each man $m$, I 
    By claim~\ref{claimOneUpOneDown}, this is equivalent to defining
    $\mu\wedge\mu'$ such that every man
    gets their best partner from $\mu$ or $\mu'$
    (because $m = \mu(w) \prec_w \mu'(w)$ if and only if
    $w = \mu(m) \succ_m \mu'(m)$).
    Thus, the proof is identical to the proof given for $\vee$,
    interchanging men with women.

    % Finally, we sketch the proof that the join and meet operations 
    % In $\L$ are distributive.
    % Analogously to the above, we would define $\mu\wedge\mu'$ such that every
    % man gets their preferred partner from $\mu$ or $\mu'$.
    Finally, the join and meet operations are distributive for the same reason that
    the operations of min and max distribute over each other.
    In particular, we can fix a woman $w$ and see that (with max and min taken
    according to $\succ_w$)
    \[ \big(\mu_1 \wedge (\mu_2 \vee \mu_3) \big)(w)
      = \min \big\{ \mu_1(w), \max\big\{\mu_2(w), \mu_3(w)\big\} \big\} \]
    \[ \max \big\{ \min \big\{ \mu_1(w), \mu_2(w) \big\},
      \min \big\{\mu_1(w), \mu_3(w)\big\} \big\}
      = \big( (\mu_1 \wedge \mu_2) \vee (\mu_1 \wedge \mu_3) \big)(w) \]
    % Thus, for every woman $w$,
    % equal to the worse of $\mu_1(w)$ and (the better of $\mu_2(w)$ and
    % $\mu_3(w)$).
    % On the other hand, we have the better of
    % (the worse of $\mu_1(w)$ and $\mu_2(w)$)
    % and (the worse of $\mu_1(w)$ and $\mu_3(w)$),
    % which with a little thought you can realize that these two equations are
    % equal.
  \end{proof}

  % Now, one may ask why it is interesting that the collection of stable matchins
  % form a distributive lattice. It turns out distributive lattices has the
  % following interesting property called Birkhoff's representation theorem.  
  % \begin{theorem}[Birkhoff's representation theorem]
  %   For any finite distributive lattice $L$, there exists a subset $P$ of $L$
  %   (namely, the ``join irreducible elements'' of $L$, which is the set of all $c$ where for all $a, b$ such that $a\vee b = c$, we have $a=c$ or $b=c$) such that
  %   $L$ is isomorphic to the collection of downward-closed subsets of $P$
  %   (under the partial order induced by restricting $L$ to $P$), which
  %   forms a lattice under set containment.
  % \end{theorem}

  % The representation theorem gives us reason to believe that a compact
  % representation of the collection of stable matching $\L$ exist. Indeed, we
  % will show in section~\ref{sectionRotation} that there is a natural algorithmic
  % interpretation of the join irreducible elements of $\L$. 

  % We will not directly use this theorem, but it serves as motivation for why one might expect a compact representation of $\L$ to exist.
  % In general, when one encounters a distributive lattice, it's always useful to ask what its join irreducible elements are, and see if there's a natural mathematical or algorithmic interpretation.

  The most important lattice-theoretic concept we will need is the notion of
  \emph{covering}. Informally, an element covers another in a lattice if there
  is no element between them in the ordering.

  \begin{definition}
    For $a, b$ elements of a lattice, we say $a$ \emph{covers} $b$
    % , denoted $a \gtrdot b$,
    when $a > b$ and no element $c$ exists with $a > c > b$.
  \end{definition}

  A useful equivalent definition of covering is the following:
  $a$ covers $b$ if and only if whenever $a \ge c > b$, we have $a = c$.
  Although the concept of covering relations is central to our paper,
  we need remarkably few formal properties of covering relations
  (or of lattices for that matter).
  Here is what we will need:
  \begin{claim}
    In any finite lattice and
      % \item If $a \le b$ and $a \le c$, then $a \le b \wedge c$.
      %   Conversely, if $a \ge b$ and $a \ge c$, then $a \ge b \vee c$.
    for any $a \le b$, there exists a sequence
    $a = a_0 < a_1 < \ldots < a_k = b$ (for some $k\ge 0$)
    such that $a_i$ covers $a_{i-1}$ for each $i$.
    Such a sequence is called a \emph{maximal chain} between $a$ and $b$.
  \end{claim}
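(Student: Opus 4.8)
The plan is to proceed by induction on the size of the interval $[a,b] = \{c : a \le c \le b\}$, which is a finite set because the lattice is finite. It is worth noting up front that the argument uses only that $\le$ is a partial order on a finite set, and nothing about the join and meet operations; so really I would phrase everything in terms of the finite poset obtained by restricting $\le$ to $[a,b]$.

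For the base case, if $a = b$ then the length-zero sequence consisting of $a_0 = a$ alone (i.e.\ $k = 0$) is vacuously a maximal chain, since there are no consecutive pairs to check. Otherwise $a < b$, and I would consider the set $S = \{c : a < c \le b\}$, which is nonempty because $b \in S$. By finiteness, $S$ has an element $a_1$ that is minimal with respect to $\le$. The key step is to verify that $a_1$ covers $a$: we have $a_1 > a$ by membership in $S$, and if some $c$ satisfied $a_1 > c > a$, then $c \le a_1 \le b$ would place $c$ in $S$ with $c < a_1$, contradicting the minimality of $a_1$. Hence no element lies strictly between $a$ and $a_1$, which is exactly the covering relation $a_1$ covers $a$.

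Then I would apply the inductive hypothesis to the pair $a_1 \le b$. Since $a \in [a,b]$ but $a \notin [a_1,b]$ (as $a < a_1$ forces $a \not\ge a_1$), the interval $[a_1,b]$ is a strictly smaller subset of $[a,b]$, so the induction applies and produces a maximal chain from $a_1$ to $b$, say $a_1 < a_2 < \ldots < a_k = b$ with each term covering its predecessor. Prepending $a_0 = a$ yields the sequence $a = a_0 < a_1 < \ldots < a_k = b$; every consecutive pair among $a_1, \ldots, a_k$ is a covering relation by the inductive hypothesis, and $a_1$ covers $a_0$ by the previous paragraph, so the full sequence is a maximal chain.

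I do not anticipate a genuine obstacle here: the only point demanding care is the verification that a $\le$-minimal element of $S$ covers $a$, which combines transitivity of $\le$ with finiteness (to guarantee that $S$ has a minimal element at all). The finiteness hypothesis is essential and is the sole place it is invoked, apart from ensuring the induction on $|[a,b]|$ is well-founded.
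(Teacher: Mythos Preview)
Your proof is correct. Both your argument and the paper's proceed by induction on the size of the interval $[a,b]$, but the decomposition differs slightly: the paper picks an \emph{arbitrary} element $c$ with $a < c < b$ and recurses on both halves $[a,c]$ and $[c,b]$, concatenating the resulting chains; you instead pick a $\le$-minimal element $a_1$ of $\{c : a < c \le b\}$, observe directly that $a_1$ covers $a$, and recurse only on $[a_1,b]$. Your version is marginally cleaner in that it produces the covering relation explicitly at each step and needs only one recursive call, whereas the paper's split-in-the-middle approach is a touch more symmetric but requires verifying that both subintervals are strictly smaller. Neither approach uses the lattice operations, as you rightly note; only finiteness of the poset is needed.
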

  \begin{proof}
    If $a = b$ we are done. Otherwise, let $S$ be the set of
    all elements $c$ such that $a < c < b$,
    and induct on $|S|$.
    If $|S|=0$, then $b$ covers $a$ and we are done.
    Otherwise, take any $c\in S$.
    Consider the set of all $d$ such that $a < d < c$.
    For such a $d$, we have $a < d < b$ and also $d \ne c$.
    Thus, there are strictly fewer than $|S|$ such $d$.
    Thus, by induction, there exists a maximal chain between $a$ and $c$.
    Similarly, there exists a maximal chain between $c$ and $d$,
    so the concatenation of these two chains gives us a maximal
    chain between $a$ and $b$.

    % Given any $a < b$, if no $c$ exists with $a < c < b$, then we are done.
    % Otherwise, consider $a < c$ and $c < b$. If any $d$ exists between either of
    % these pairs, say $a < d < c$, then consider $a < d < c < b$.
    % Condtinuing in this way, we always have a chain of distinct elements
    % between $a$ and $b$.
    % As the lattice is finite, this process cannot continue indefinitely,
    % so we must reach a chain where every adjacent element is in a covering
    % relation.
  \end{proof}

  We are interested in covering relations in $\L$ because they describe
  the ``minimal differences'' needed to go from one stable matching to another.
  The previous claim hints that one can describe any matching $\mu$ by giving
  the ``covering relations leading up to $\mu$''.
  Eventually, we will describe all covering relations (using ``rotations'') and
  show how you can represent all stable matchings as certain subsets of these
  ``minimal differences''.

%%%%%%%%% %%%%%%%%% %%%%%%%%% %%%%%%%%% %%%%%%%%% %%%%%%%%%
% \part{The Structure of Stable Matchings}
%%%%%%%%% %%%%%%%%% %%%%%%%%% %%%%%%%%% %%%%%%%%% %%%%%%%%%

\section{Navigating the Lattice of Stable Matchings}\label{sectionLattice}

In this section, we study how the lattice-theoretic properties of $\L$ start to
manifest algorithmically in certain special cases of $MPDA$.
We'll characterize the covering relations (and thus the entire structure of the
lattice), essentially in terms of execution traces of $MPDA$.
Intuitively, the main result is that, starting from any stable matching,
if a woman has a better stable partner, then she can ``divorce'' her husband,
and if we keep running $MPDA$, we will arrive at a stable matching preferred by
that woman.

Consider a fixed set of input preferences $P$.
\begin{definition}
  For a set of preferences $P$ and matching $\mu$ stable under $P$, define
  $P(\mu)$ as follows: every woman $w$ matched in $\mu$ truncates 
  the end of their preference
  list just after $\mu(w)$ (removing all men ranked worse than
  their current match), and every man $m$ matched in $\mu$
  truncates the beginning their preference list just before $\mu(m)$
  (removing all women ranked better than their current match).
  Women unmated in $\mu$ keep their full preference list,
  and men unmatched in $\mu$ are removed from $P(\mu)$.

  For a woman $w$ matched in $\mu$,
  define $P_w(\mu)$ the same as $P(\mu)$ with one additional change:
  woman $w$ truncates her preference list one more place
  by removing her current match $\mu(m)$.
\end{definition}

Intuitively, $P(\mu)$ defines the state we are in after a deferred acceptance
type algorithm reaching matching $\mu$: the women are still seeking to improve
beyond their current match and the men are still proposing down their lists.
On the other hand, $P_w(\mu)$ represents preferences corresponding to woman $w$
attempting to reach next to a better match than $\mu(w)$
(by rejecting $\mu(w)$ in $MPDA$).

In what follows, we call a match stable if it is stable for the original set
of preferences $P$. If we need to refer to the fact that a match is stable for
some truncated set of preferences $P_w(\mu)$, we will specify so.
We denote $\Mmatched$ and $\Msingle$ as the set of men who are matched
and unmatched respectively in the stable matchings with the original
preferences $P$ (recall from \ref{claimRuralDoctors} that these sets are
uniquely determined). Define $\Wmatched$ and $\Wsingle$ analogously.

Note that the execution of $MPDA(P_w(\mu_0))$ is quite a bit 
more simple than a general
execution of $MPDA$. After the first proposal of each man in $\Mmatched$
(i.e. each $m\in \Mmatched \setminus \{\mu_0(w)\}$ proposes to and is accepted
by $\mu_0(m)$), there is exactly one ``free'' man from $\Mmatched$
at a time (i.e. one man who is not tentatively matched and
still proposing down his list),
until there is no longer a free man and the execution terminates.
Specifically, the free man is initially $\mu_0(w)$, and if a proposal from
the free man is accepted by a woman $w'\in \Wmatched\setminus\{w\}$,
the free man becomes $\mu_0(w')$.
If a proposal is accepted by $w$ or a woman from $\Wsingle$,
or if a man proposes to the last woman on his preference list,
the algorithm terminates.
In order to capture such an execution sequence
we make the following definition:
\begin{definition}
  Given a stable matching $\mu_0$ and a woman $w\in\Wmatched$,
  the \emph{rejection chain of $w$ starting from $\mu_0$}
  is the list $(w_1, m_1, w_2, m_2, \ldots, a_i)$ defined as follows:
  \begin{itemize}
    \item $w_1 = w$ and $m_1 = \mu_0(w)$
    \item The men $m_i$ are, in order, the men from $\Mmatched$ which are
      free during the execution of $MPDA(P_w(\mu_0))$
    \item For each $i$, $w_{i+1}$ is the woman (if any)
      who accepts a proposal from $m_i$
    \item The list ends when the algorithm terminates
  \end{itemize}

  We also call such a list ``the rejection chain of $MPDA(P_w(\mu_0))$''
  or just ``the rejection chain'' if $w$ and $\mu_0$ are understood.
\end{definition}

We start by establishing some basic properties relating rejection chains
to the match returned by $MPDA(P_w(\mu_0))$. The proof is immediate.
\begin{claim}\label{claimRejectionChainEquiv}
  Let $\mu_0$ be a stable match and let $w\in\Wmatched$.
  Let $\mu'$ be the result of $MPDA(P_w(\mu_0))$ and
  let $(w_1, m_1, w_2, m_2, \ldots, a_i)$ be the rejection chain
  of $w$ starting from $\mu_0$ (so $a_i$ denote the last agent
  in the rejection chain).
  Then exactly one of the following is true:
  \begin{itemize}
    \item $a_i \in \Wsingle$ is a woman who is now matched in $\mu'$
    \item $a_i$ is a man from $\Mmatched$ who is now unmatched in $\mu'$
    \item $a_i = w$ (and $w$ receives a match in $\mu'$ if and only if
      $a_i=w$)
  \end{itemize}
  Moreover, the set of agents matched in $\mu'$ is the same as that
  in $\mu_0$ if and only if $a_i = w$.
\end{claim}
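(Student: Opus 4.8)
The plan is to read the conclusion off the execution structure of $MPDA(P_w(\mu_0))$ spelled out in the paragraph before the definition of the rejection chain, and then to account globally for which agents are matched in $\mu'$ versus $\mu_0$. First I would record the invariant stated there: once every matched man has made his opening proposal, at every moment exactly one man of $\Mmatched$ is free, this man is the current $m_i$, the initial free man is $m_1 = \mu_0(w)$, and the chain advances exactly when $m_i$ is accepted by some $w' \in \Wmatched \setminus \{w\}$, whereupon $w'$'s displaced partner becomes the next free man. In particular every man that is ever free belongs to $\Mmatched$, since the first is $\mu_0(w)$ and each later one is the partner in $\mu_0$ of some woman of $\Wmatched$.

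Next I would classify the terminal event, which gives the ``exactly one of the three'' statement. The current free man proposes down his remaining list until he is either accepted or exhausts it. If he is accepted by a woman of $\Wmatched \setminus \{w\}$, her partner is displaced and the process continues, so this is never the terminal event. Hence the process stops in exactly one of three mutually exclusive ways: the current free man is accepted by $w$ (case 3, $a_i = w$); he is accepted by a woman of $\Wsingle$, who is then newly matched (case 1); or he exhausts his list and ends unmatched (case 2, and he belongs to $\Mmatched$ by the previous paragraph). The terminal agent $a_i$ is the accepting woman in the first two cases and the exhausted man in the third, so the three alternatives are also exhaustive.

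For the ``moreover'' claim and the parenthetical I would compare the matched sets directly. Using the $MPDA$ invariant that a woman, once matched, stays matched, every woman of $\Wmatched \setminus \{w\}$ is matched already after the opening proposals and remains matched in $\mu'$; $w$ herself is unmatched from the moment she rejects $m_1$ and regains a partner only in case 3; and at most one woman of $\Wsingle$ --- the terminal woman of case 1 --- is ever matched. On the men's side, a man is unmatched in $\mu'$ exactly when he exhausts his truncated list, which happens to no one in cases 1 and 3 and to precisely the terminal man in case 2. Tallying: in case 3 the matched set is exactly $\Mmatched \cup \Wmatched$, identical to $\mu_0$; in case 1 it loses $w$ and gains the terminal $\Wsingle$ woman; and in case 2 it loses both $w$ and the terminal man. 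Thus the matched sets coincide if and only if $a_i = w$, and $w$ is matched in $\mu'$ if and only if $a_i = w$, as required.

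I expect the only point needing care is this final tally, and specifically the two facts it rests on: that an interior acceptance always comes from an already-matched woman of $\Wmatched \setminus \{w\}$ (so it is matched-set-neutral) and that a man ends unmatched only by exhausting his list. Both are immediate from the single-free-man description and the monotonicity of $MPDA$, so once those are in hand the classification of $a_i$ and the endpoint accounting follow with no further work --- consistent with the claim that the proof is immediate.
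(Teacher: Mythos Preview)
Your proposal is correct and is exactly the kind of unpacking the paper has in mind: the paper simply states ``The proof is immediate,'' relying on the single-free-man description of $MPDA(P_w(\mu_0))$ given in the paragraph preceding the definition of rejection chains. Your classification of the terminal event and the matched-set tally faithfully spell out that immediacy, with nothing extra and nothing missing.
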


Our goal is to expore the stable matching lattice $\L$ using
the operation $(\mu_0, w) \mapsto MPDA(P_w(\mu_0))$.
Thus, the first thing we need to know is when this operation keeps us in the
lattice $\L$ and when the result is an unstable matching.

\begin{claim}\label{claimRejectAndMoveToStable}
  Let $\mu_0$ be stable and take $w\in\Wmatched$.
  Let $MPDA(P_w(\mu_0))$ terminate in a matching $\mu'$.
  Then $\mu'$ is stable if and only if
  $w$ receives a match in $\mu'$.
\end{claim}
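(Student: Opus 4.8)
The plan is to prove the two directions separately, treating the ``only if'' direction (stability forces $w$ to be matched) as the easy one and the ``if'' direction as the substantive one.

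For the easy direction I would argue the contrapositive. Suppose $w$ is not matched in $\mu'$. By claim~\ref{claimRejectionChainEquiv} this is exactly the case $a_i \ne w$, and in that case the set of agents matched in $\mu'$ differs from the set matched in $\mu_0$. Since $\mu_0$ is stable, the rural hospital theorem (claim~\ref{claimRuralDoctors}) says every stable matching of $P$ has exactly the matched set of $\mu_0$; as $\mu'$ has a different matched set, $\mu'$ cannot be stable.

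For the main direction, assume $w$ is matched in $\mu'$, and first record two monotonicity facts comparing $\mu'$ to $\mu_0$, both immediate from the construction of $P_w(\mu_0)$ together with claim~\ref{claimRejectionChainEquiv}. First, the matched set is preserved, so every woman matched in $\mu_0$ is matched in $\mu'$ and every woman single in $\mu_0$ is single in $\mu'$. Second, every woman weakly improves: each woman's truncated list in $P_w(\mu_0)$ retains only partners she weakly prefers to $\mu_0(w')$ (and $w$ strictly prefers), so any partner she receives satisfies $\mu'(w')\succeq_{w'}\mu_0(w')$, with $\mu'(w)\succ_w\mu_0(w)$. I would then rule out blocking pairs of $\mu'$ under the original preferences $P$. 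Suppose $(m,w')$ blocks, so $w'\succ_m\mu'(m)$ and $m\succ_{w'}\mu'(w')$. The monotonicity facts first yield $m\succ_{w'}\mu_0(w')$: if $w'$ is matched in $\mu_0$ this chains through $\mu'(w')\succeq_{w'}\mu_0(w')$, and if $w'$ is single in $\mu_0$ it is single in $\mu'$, so $m$ is merely acceptable, which still beats $\emptyset=\mu_0(w')$. Stability of $\mu_0$ applied to $(m,w')$ then forces $\mu_0(m)\succeq_m w'$, and in particular $m$ is matched in $\mu_0$ (were $m$ single, $\mu_0(m)=\emptyset\succeq_m w'$ would make $w'$ unacceptable to $m$, contradicting $w'\succ_m\mu'(m)$). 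Combining with the blocking hypothesis gives $\mu_0(m)\succeq_m w'\succ_m\mu'(m)$.

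The crux is then a direct appeal to the mechanics of $MPDA(P_w(\mu_0))$. Because $m$ is matched in $\mu_0$ he participates and proposes down his truncated list starting at $\mu_0(m)$; since $w'\preceq_m\mu_0(m)$, the woman $w'$ lies on his truncated list, and since $w'\succ_m\mu'(m)$, the invariant that a matched man has proposed to everyone he prefers to his final partner shows $m$ proposed to $w'$ during the run. By the invariant that a woman's tentative partner only improves and that she stays matched once matched, $w'$ ends matched with $\mu'(w')\succeq_{w'}m$ (the relevant comparison being unaffected by truncation), contradicting $m\succ_{w'}\mu'(w')$. Hence $\mu'$ has no blocking pair and is stable. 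I expect the main obstacle to be organizing the blocking-pair analysis so that the single-versus-matched status of $m$ and $w'$ in $\mu_0$ is handled uniformly; the key simplifying device is to funnel every case through the single inequality $m\succ_{w'}\mu_0(w')$ and then through the observation that $m$ must have proposed to $w'$ during the run, rather than separately chasing each truncation possibility.
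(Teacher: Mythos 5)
Your proof is correct and follows essentially the same route as the paper: the unmatched direction via claim~\ref{claimRejectionChainEquiv} and the rural hospital theorem, and the matched direction by taking a blocking pair of $\mu'$ under $P$, using the women's weak improvement to rule out $w'$ having truncated $m$ and the stability of $\mu_0$ to rule out $m$ having truncated $w'$. The only cosmetic difference is that where the paper finishes by citing that $\mu'$ is stable for the truncated preferences $P_w(\mu_0)$, you inline that fact by re-running the $MPDA$ invariants ($m$ must have proposed to $w'$, who therefore ends weakly above him).
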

\begin{proof}
  If $w$ is not matched in $\mu'$, then
  the set of matched agents differs between $\mu_0$ and $\mu'$.
  Thus $\mu'$ cannot possibly be stable by the rural hospital
  theorem~\ref{claimRuralDoctors}.

  On the other hand, suppose $w$ receives a match in $\mu'$.
  By the previous claim, this means 
  % the rejection chain of $w$ starting from
  % $\mu_0$ is of the form $(w, m_1, w_2, m_2, \ldots, m_i, w)$, and
  that the set of agents matched in $\mu_0$ and $\mu'$ are identical.
  Consider how $MPDA(P_w(\mu_0))$ runs, converting from $\mu_0$ to $\mu'$.
  % Observe that, as $MPDA(P_w(\mu_0))$ runs, each man can only go down in their
  % preference from $\mu_0$ to $\mu'$.
  Observe that, because $w$ receives a match (which she prefers to
  $\mu_0(w)$), every woman can only improve their preference for their match.

  For the sake of contradiction, suppose $(m',w')$ is a blocking pair in
  $\mu'$. Certainly, $\mu'$ is stable for the preferences $P_w(\mu_0)$.
  How can $(m',w')$ be a blocking pair in $P$ but not in $P_w(\mu_0)$?
  The only way is if one agent truncated the other off their preference list in
  $P_w(\mu_0)$. We have two cases:
  \begin{enumerate}
    \item Suppose $w'$ truncated $m'$. % , but $m'$ did not truncate $w'$.
      Then we have $m'\preceq_{w'} \mu_0(w') \preceq_{w'} \mu'(w')$.
      % $w'\preceq_{m'}\mu_0(m')$.
      But $m' \succ_{w'} \mu'(w')$, a contradiction.
    \item Now suppose $m'$ truncated $w'$. Then $w' \succ_{m'} \mu_0(m')$.
      But then $m' \succ_{w'} \mu'(w') \succeq_{w'} \mu_0(w')$,
      so $(m',w')$ are unstable in $\mu_0$, a contradiction.
  \end{enumerate}

  % If neither of $m'$ or $w'$ appear in the rejection chain,
  % then they receive the same match in $\mu'$ as in $\mu_0$ (or go unmatched in
  % both). Because $\mu_0$ is stable, this is not possible.
  % Thus, we have two cases:
  % \begin{enumerate}
  %   \item Suppose $w' = w_i$ for some $i$, but $m'$ does not change his match
  %     from $\mu_0$ to $\mu'$.
  %     We have $w' \succ_{m'} \mu'(m') = \mu_0(m')$
  %     and $m' \succ_{w'} \mu'(w')\succ_{w'}\mu_0(w')$ (as $w'$ can only go up
  %     from $\mu_0$ to $\mu'$).
  %     Thus we get that $\mu_0$ was unstable, a contradiction.

  %   \item Suppose $m' = m_i$ for some $i$.
  %     Sense $\mu_0$ was stable and $m'\succ_{w'} \mu'(w') \succeq_{w'} \mu_0(w')$,
  %     we must have $w'\prec_{m'} \mu_0(m')$. Thus $w'$ is still
  %     on the truncated preference list of $m'$.
  %     So $m'$ would propose to $w'$
  %     before his match in $\mu'$. As $m'\succ_{w'}\mu'(w')$, this
  %     means $w'$ should get a match at least as good as $m'$, a
  %     contradiction.
  % \end{enumerate}

  % For the sake of contradiction, suppose $(m',w')$ is a blocking pair in
  % $\mu'$ for preferences $P$. We know that $\mu'$ is stable for preference
  % $P_w(\mu_0)$. Thus, one of $m'$ or $w'$ must have removed the other from their
  % preference list in $P_w(\mu_0)$.
  % First, suppose $w'$ removed $m'$ from her list,
  % but As $w'$ received a better
  % match than $\mu_0(w')$
\end{proof}

Next, we need to know that, if we have not reached the woman-optimal stable
match, then we can \emph{always keep moving up in the lattice}.
Intuitively, this is true because, whenever
a stable matching exists, $MPDA$ will find it,
so if a stable matching with women receiving good partners
exists, then $MPDA$ will find it as well.

\begin{claim}\label{claimCanGoUpIfNotWosm}\label{subclaimWomanOptPartner}
  If $\mu_0$ is a stable matching in which $w$ is not paired to her optimal stable
  partner, then $MPDA(P_w(\mu_0))$ will return a stable matching $\mu'$
  which strictly woman-dominates $\mu_0$, i.e. $\mu' > \mu_0$.

  Conversely, if $MPDA(P_w(\mu_0))$ fails to return a stable match, then
  $w$ is matched to her optimal stable partner in $\mu_0$.

  Moreover, if $\mu'$ covers $\mu_0$ in $\L$, then
  $MPDA(P_w(\mu_0))$ returns $\mu'$ for \emph{any} woman $w$ who 
  receives a better partner in $\mu'$ than in $\mu_0$.
  % . preferred by $w$.
\end{claim}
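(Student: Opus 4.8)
The plan is to reduce the entire statement to a single structural lemma: \emph{if $\nu$ is a $P$-stable matching with $\nu\ge\mu_0$ and $\nu(w)\succ_w\mu_0(w)$, then $\nu$ is also stable for the truncated preferences $P_w(\mu_0)$.} To prove this lemma I would verify two things. For \emph{validity} (every matched pair of $\nu$ survives the truncation), take a pair $\nu(w')=m'$. Since $\nu\ge\mu_0$ we have $m'\succeq_{w'}\mu_0(w')$, so $m'$ is still on $w'$'s list after she truncates everything below $\mu_0(w')$ (and for $w'=w$ the strict inequality $\nu(w)\succ_w\mu_0(w)$ means deleting $\mu_0(w)$ does no harm); dually, claim~\ref{claimDominateOposites} turns $\nu\ge\mu_0$ into $\mu_0(m')\succeq_{m'}\nu(m')=w'$, so $w'$ survives $m'$'s truncation of everything above $\mu_0(m')$. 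The rural hospital theorem~\ref{claimRuralDoctors} guarantees every agent matched in $\nu$ is matched in $\mu_0$, so all these truncation points are well defined and no matched man has been removed from $P_w(\mu_0)$. For \emph{stability}, I would observe that $P_w(\mu_0)$ only truncates lists and hence preserves the relative order of any surviving partners, so any blocking pair for $\nu$ under $P_w(\mu_0)$ is already a blocking pair under $P$, contradicting stability of $\nu$.

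Given the lemma, the first two statements follow quickly. Since $w$ is not at her optimal stable partner, there is a $P$-stable $\rho$ with $\rho(w)\succ_w\mu_0(w)$; set $\nu=\rho\vee\mu_0$, which is stable by Theorem~\ref{theoremDistributiveLattice}, satisfies $\nu\ge\mu_0$, and has $\nu(w)=\rho(w)\succ_w\mu_0(w)$. The lemma makes $\nu$ a $P_w(\mu_0)$-stable matching in which $w$ is matched, so the rural hospital theorem applied to the \emph{instance} $P_w(\mu_0)$ shows $w$ is matched in every $P_w(\mu_0)$-stable matching, in particular in $\mu'=MPDA(P_w(\mu_0))$. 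Claim~\ref{claimRejectAndMoveToStable} then yields that $\mu'$ is $P$-stable. To upgrade this to $\mu'>\mu_0$, I note that every woman matched in $\mu_0$ is matched in $\mu'$ (same matched set, by claim~\ref{claimRejectionChainEquiv}) and that $\mu'$ respects the truncations of $P_w(\mu_0)$, so each such woman has $\mu'(w')\succeq_{w'}\mu_0(w')$, with strict improvement at $w$ because $\mu_0(w)$ was deleted from her list. The converse (the second statement) is just the contrapositive of the first.

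For the ``moreover'' statement, suppose $\mu'$ covers $\mu_0$ and $w$ receives a strictly better partner in $\mu'$. Applying the lemma with $\nu=\mu'$ (legal since $\mu'>\mu_0$ and $\mu'(w)\succ_w\mu_0(w)$) shows $\mu'$ is $P_w(\mu_0)$-stable. Let $\mu''=MPDA(P_w(\mu_0))$, which by the first part is $P$-stable with $\mu''>\mu_0$. Because $MPDA$ returns the woman-pessimal stable matching of its input (Corollary~\ref{claimWomenWorstStable}), every woman weakly prefers $\mu'$ to $\mu''$ under $P_w(\mu_0)$, and since truncation preserves relative orders this gives $\mu''\le\mu'$ in $\L$. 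Now $\mu_0<\mu''\le\mu'$ together with the assumption that $\mu'$ covers $\mu_0$ forces $\mu''=\mu'$, via the equivalent formulation of covering (whenever $\mu'\ge c>\mu_0$ we have $c=\mu'$).

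The main obstacle is the lemma, and in particular getting the man-side truncation direction right: the crux is that woman-domination $\nu\ge\mu_0$ makes every man \emph{weakly worse off} in $\nu$ than in $\mu_0$, which is precisely claim~\ref{claimDominateOposites}, and this is exactly what ensures $\nu$'s edges are not cut by the men's ``truncate from the top'' operation. Once the lemma is established, the remaining steps are bookkeeping with the rural hospital theorem, claim~\ref{claimRejectAndMoveToStable}, and the covering property.
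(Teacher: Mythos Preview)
Your proposal is correct and follows essentially the same approach as the paper: both arguments take a $P$-stable matching $\nu\ge\mu_0$ with $\nu(w)\succ_w\mu_0(w)$ (obtained via a join), show it is stable for $P_w(\mu_0)$, then combine the rural hospital theorem with claim~\ref{claimRejectAndMoveToStable} to conclude $\mu'$ is $P$-stable with $\mu'>\mu_0$; for the ``moreover'' both use the man-optimality/woman-pessimality of $MPDA$ together with the covering hypothesis. You are simply more explicit than the paper in isolating and proving the key lemma (the paper asserts ``$\mu^*$ is also stable for preferences $P_w(\mu_0)$'' without spelling out the validity check on the men's side), and your ``moreover'' argument via woman-pessimality is the dual phrasing of the paper's man-optimality argument.
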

\begin{proof}
  Let $\mu^*$ be any stable matching in which $w$ has a better partner
  than in $\mu_0$, i.e. $\mu^*(w)\succ_w\mu_0(w)$.
  Without loss of generality, we can assume that $\mu^*\ge \mu_0$,
  because if this is not the case, we can replace $\mu^*$ with 
  $\mu^* \vee \mu_0$.
  By the rural hospital theorem (claim~\ref{claimRuralDoctors}),
  $\mu^*$ must have exactly the same set of matched agents as in $\mu_0$.

  Let $\mu' = MPDA(P_w(\mu_0))$,  and note that $\mu'$ is certainly stable for
  preferences $P_w(\mu_0)$.
  Because $\mu^*\ge\mu_0$ and $w$ gets matched strictly above
  $\mu_0(w)$, the matching $\mu^*$ is also stable for preferences $P_w(\mu_0)$.
  Thus, once again $\mu^*$ and $\mu_0$ have identical sets of matched agents. 
  By claim~\ref{claimRejectAndMoveToStable}, we conclude that $\mu'$ is stable
  for preferences $P$. 

  % Thus, by the fact that $MPDA$ returns the man-optimal stable outcome
  % (claim~\ref{claimMenBestStable}), the result of $MPDA(P_w(\mu_0))$
  % will be the matching $\mu'$ with the following property:
  % $\mu'$ is the man-optimal outcome which is stable for preferences
  % $P_w(\mu_0)$. 

  % Since $\mu^*, \mu$ are both stable for preferences $P$,
  % the set of matched agents in $\mu^*$ and $\mu$ must be the same. Similarly,

  % But we just saw that if 1) every matched man receives a match,
  % and 2) only the previously-matched women are those matched,
  % then $\mu$ will be stable for the original preferences $P$.
  % But because $\mu'$ is stable for preferences $P_w(\mu_0)$, every man in
  % $\mu_0$ is matched above every woman unmatched in $\mu_0$.

  By the definition of $P_w(\mu_0)$,
  each woman will only accept a proposal in $MPDA(P_w(\mu_0))$
  from a man she likes at least as much as in $\mu_0$.
  As $w$ receives a strictly better
  match in $\mu'$, we have $\mu' > \mu_0$.

  For the converse, suppose $w$ is matched to her optimal stable partner in
  $\mu_0$. In $\mu = MPDA(P_w(\mu_0))$, $w$ will not accept a proposal except
  from a man ranked above her match in $\mu_0$. Thus, $\mu$ cannot possibly be
  stable, as then $w$ would be matched in $\mu$ to a stable partner better than
  $\mu_0(w)$.

  Now, suppose $\mu'$ covers $\mu_0$, so that whenever
  $\mu' \ge \mu > \mu_0$, we have that $\mu = \mu'$,
  and let $w$ be any woman recieving a better match in $\mu'$ than in $\mu_0$.
  We claim that $\mu'$ is the man-optimal stable outcome in which
  each woman in $\Wmatched$ receives a partner at least as good as in $\mu_0$,
  and where $w$ receives a strictly better partner.
  Indeed, if $\mu'$ were \emph{not} this matching, then some $\mu$ would exist
  such that $\mu' > \mu > \mu_0$, and so $\mu'$ would not cover $\mu_0$.
  % a contradiction.
  By the fact that MPDA returns the man-optimal stable outcome
  (claim~\ref{claimMenBestStable})
  this exactly menas that $\mu'$ is the result of $MPDA(P_w(\mu_0))$.

\end{proof}

The previous claims show that if a woman $w$ has a better stable match,
she can reject her current match,
and if we continue running deferred acceptance then $w$ will achieve a better
outcome.
% and $w$ receives a better match, then we have found a new stable outcome.
Next we get a characterization of when these changes
from matching to matching are as small as possible 
(i.e. when the new matching covers the old in the lattice $\L$).

\begin{claim} \label{claimCoveringCondition}
  Suppose $\mu_0$ is stable and
  $MPDA(P_w(\mu_0))$ terminates in a stable matching $\mu' > \mu_0$.
  % Let $E = (w, m_1, w_2, m_2, \ldots, w_k, m_k, w)$
  % be the rejection chain of $w$ starting from $\mu_0$.
  Then $\mu'$ covers $\mu_0$ if and only if
  during the run of $MPDA(P_w(\mu_0))$, no woman from $\Wmatched$
  receives more than one proposal from men who she strictly
  prefers to her match in $\mu_0$.
  % the rejection chain
  % of $w$ starting from $\mu_0$
  % contains no agent other than $w$ more than once.

  % MIGHT WANT TO REWRITE THIS TO BE AN UNCONDITIONAL IF AND ONLY IF.
\end{claim}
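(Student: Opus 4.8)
The plan is to read both directions through two facts about the single-initiator process $MPDA(P_w(\mu_0))$: that $MPDA$ always outputs the woman-worst (man-optimal) stable matching of whatever market it is run on (claim~\ref{claimWomenWorstStable}), and that the run is highly deterministic. I will first record the key structural observation, call it \emph{startpoint-invariance}: as long as no woman receives two proposals from men she strictly prefers to her $\mu_0$-partner, the run is forced at every step — each free man is accepted by the first woman he reaches who ranks him above her $\mu_0$-partner, no woman is ever re-proposed-to from above, and by claim~\ref{claimRejectionChainEquiv} the chain closes back at $w$. Consequently the women who differ between $\mu_0$ and $\mu'$ are exactly the distinct acceptors $w_1 = w, w_2, \ldots, w_k$, undergoing a single cyclic reassignment $\mu'(w_i) = \mu_0(w_{i-1})$ (indices mod $k$). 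Moreover, initiating the chain from any other cycle woman $w_j$ retraces the identical proposals and acceptances, since the tentative match is still $\mu_0$ on every untouched agent and each step is forced in the same way; hence $MPDA(P_{w_j}(\mu_0)) = \mu'$ as well. Verifying this retracing carefully is the main bookkeeping obstacle, and both directions rest on it.

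For ($\Leftarrow$), assume no such double proposal occurs and take any stable $\mu$ with $\mu_0 < \mu \le \mu'$; I will show $\mu = \mu'$. A woman outside the cycle has $\mu_0(w'') = \mu'(w'')$, so the sandwich $\mu_0 \le \mu \le \mu'$ forces $\mu(w'') = \mu_0(w'')$; since $\mu > \mu_0$, some cycle woman $w_j$ must strictly improve in $\mu$. Because $w_j$ improves strictly and $\mu \ge \mu_0$, the matching $\mu$ is stable for the truncated market $P_{w_j}(\mu_0)$ (every woman sits weakly above, and every man weakly below, their $\mu_0$-partner, while $w_j$ avoids her deleted partner). By startpoint-invariance $MPDA(P_{w_j}(\mu_0)) = \mu'$, and by claim~\ref{claimWomenWorstStable} this is the woman-worst stable matching of $P_{w_j}(\mu_0)$, so every stable matching of that market — in particular $\mu$ — satisfies $\mu \ge \mu'$. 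Combined with $\mu \le \mu'$ this gives $\mu = \mu'$, so nothing lies strictly between $\mu_0$ and $\mu'$ and $\mu'$ covers $\mu_0$.

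For ($\Rightarrow$) I argue the contrapositive and extract a proper sub-rotation. Suppose some woman receives two strictly-preferred proposals, and let $w^*$ be the first such woman in time, so that no double proposal occurs strictly before her second one. Every event up to that point is forced, so running $MPDA(P_{w^*}(\mu_0))$ afresh retraces the baton's journey from $\mu_0(w^*)$ back to $w^*$ and closes there; let $\mu$ be the result. By claim~\ref{claimRejectAndMoveToStable}, $w^*$ is matched in $\mu$, so $\mu$ is stable, and the reassignment strictly improves each woman it touches, whence $\mu > \mu_0$. The touched women all lie strictly between $w^*$'s two proposals and therefore exclude the original initiator $w$, whose single proposal occurs first; thus $\mu(w) = \mu_0(w) \prec_w \mu'(w)$ and $\mu \ne \mu'$, while each touched woman receives in $\mu'$ a partner at least as good as in $\mu$ and untouched women agree with $\mu_0$, giving $\mu \le \mu'$. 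Hence $\mu_0 < \mu < \mu'$ and $\mu'$ does not cover $\mu_0$.

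The delicate point throughout is the deterministic-retracing claim underlying startpoint-invariance: that once no double proposal has occurred, the sequence of proposals and acceptances is determined by $\mu_0$ and the preferences alone, so re-initiating from a different cycle woman (in the backward direction) or from $w^*$ (in the forward direction) reproduces the same sub-chain and closes where expected. Once that is established, stability and the strict inequalities follow mechanically from claims~\ref{claimRejectAndMoveToStable} and~\ref{claimOneUpOneDown}, and the woman-worst property of claim~\ref{claimWomenWorstStable} does the rest. I expect the proof to spend most of its effort making this retracing precise, since the remaining steps are short sandwiching and stability arguments.
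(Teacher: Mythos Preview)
Your proposal is correct and follows essentially the same route as the paper: both directions hinge on the ``startpoint-invariance'' you isolate, the $(\Leftarrow)$ direction uses it together with the man-optimality/woman-pessimality of $MPDA$ to sandwich any intermediate $\mu$, and the $(\Rightarrow)$ direction picks the first doubly-proposed woman $w^*$ and runs $MPDA(P_{w^*}(\mu_0))$ to exhibit a strict intermediate matching. One small slip: in the $(\Rightarrow)$ paragraph you write that $w$'s single good proposal ``occurs first,'' but in fact it occurs \emph{last} (the chain terminates the moment $w$ accepts), which is the actual reason $w$ cannot lie in the sub-chain between $w^*$'s two good proposals; the conclusion is unaffected.
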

\begin{proof}
  For this proof, call a proposal \emph{good} if it is made 
  by some man $m'$ to some woman
  $w'$, where $w'$ prefers $m'$ to $\mu_0(w')$.
  Note that a woman does not necessarily accept a good proposal
  (if she has already seen a proposal from a man she likes even more).

  ($\Rightarrow$) 
  Suppose that, while running $MPDA(P_w(\mu_0))$, some woman
  sees more than one good proposal.
  Because $MPDA(P_w(\mu_0))$ terminates as soon as $w$ sees a good proposal,
  this woman cannot be $w$.
  Let $w^*\ne w$ be the \emph{first} such woman,
  i.e. when $w^*$ receives her second good proposal,
  no other woman has yet received a second good proposal.

  Consider running $MPDA(P_{w^*}(\mu_0))$, and call the result $\mu$.
  As $MPDA$ progresses, we know that each woman receives exactly one good
  proposal, because $w^*$ was the first instance where a woman received two
  good proposals. Thus, the rejection chain of $MPDA(P_{w^*}(\mu_0))$ is
  a sublist of the rejection chain of $MPDA(P_w(\mu_0))$,
  with one notable exception: $w^*$ might not accept her second good proposal
  in $MPDA(P_w(\mu_0))$, but she will definitely accept the corresponding
  proposal in $MPDA(P_{w^*}(\mu_0))$.
  Regardless of this event, every woman who changes partners in
  $MPDA(P_{w^*}(\mu_0))$ will also change partners in 
  $MPDA(P_w(\mu_0))$, and indeed will do at least as well in the end,
  so $\mu \le \mu'$.
  As $w$ could not have possibly changed partners in $MPDA(P_{w^*}(\mu_0))$,
  this means $\mu < \mu'$. We already knew that $\mu_0 < \mu$,
  so this completes the proof that $\mu'$ does not cover $\mu_0$.

  ($\Leftarrow$) 
  For the other direction, suppose no woman 
  sees multiple good proposals.
  Now suppose $\mu_0 < \mu \le \mu'$ for some stable match $\mu$.
  Let $E = (w, m_1, w_2, m_2, \ldots, w_k, m_k, w)$ be the rejection
  chain of $MPDA(P_w(\mu_0))$.
  We'll show that, for any $i$, the outcome of $MPDA(P_{w_i}(\mu_0))$ is also $\mu'$.
  For each woman $w_i$ in $E$,
  consider the rejection chain $E_i$ of $w_i$ starting at $\mu_0$.
  % During the execution of $MPDA(P_{w_i}(\mu_0))$,
  % each woman in $E$ is either matched
  % as in $\mu_0$ or as in $\mu'$.
  For each man $m_j$, consider each woman $w$ on his preference
  list strictly between $w_j$ and $w_{j+1}$.
  Each such woman rejected him in $MPDA(P_w(\mu_0))$,
  and the only way for a woman to then accept $m_j$ in 
  $MPDA(P_{w_i}(\mu_0))$ is if $m_j$ is a good proposal for $w$,
  but $w$ had already seen an (even better) good proposal in $MPDA(P_w(\mu_0))$.
  Because we assume no woman receives multiple good proposals, this is
  impossible, so each $w$ between $w_j$ and $w_{j+1}$ will still reject $m_j$.
  Furthermore, $w_{j+1}$ will still accept him, as she has not yet seen a good
  proposal when $m_j$ proposes to her.
  Thus, each link of $E_i$ will be the same as in $E$,
  that is, $E_i$ will simply be
  $(w_i, m_i, w_{i+1}, \ldots, m_{i-1}, w_i)$ (with indices taken mod $k$).
  Thus, the outcome of $MPDA(P_{w_i}(\mu_0))$ is $\mu'$.

  Because $\mu_0 \ne \mu$, some woman
  must receive a strictly better match in $\mu$ than in $\mu_0$.
  As $\mu \le \mu'$, that woman must be $w_i$ for some $i$.
  Because $MPDA$ returns man-optimal stable outcomes
  (claim~\ref{claimMenBestStable}),
  $\mu' = MPDA(P_{w_i}(\mu_0))$ is the man-optimal stable
  outcome % (for preferences $P_{w_i}(\mu_0)$)
  in which every woman receives a match at least as good as in
  $\mu_0$, and in which $w_i$ receives a strictly better match.
  As $\mu$ is such a matching,
  we have $\mu' \le \mu$ and thus $\mu' = \mu$.
  Because $\mu'$ was an arbitrary element of $\L$ with $\mu_0< \mu\le\mu'$,
  we've shown that $\mu'$ covers $\mu_0$.
  % By claim~\ref{claimMenBestStable}, $\mu'$ is the man-optimal stable match
  % under preferences $P_w(\mu_0)$. Because all men receive matches worse than
  % their match in $\mu_0$ in both $\mu$ and $\mu'$, this means that $\mu'$
  % is the man-optimal stable match in which $w$ receives a strictly better
  % match than in $\mu_0$.
\end{proof}

\paragraph{Remark.} With the results of this section, we could already build
the entire stable matching lattice $\L$, represented by its covering
relations. Namely, we could essentially breadth-first search the lattice $\L$,
finding those matching which cover a given $\mu_0$ by calculating
$MPDA(P_w(\mu_0))$ for each woman $w$ (and keeping track of whether any woman
receives multiple proposals from a man she prefers to her old match from
$\mu_0$).

\begin{figure}[htp]
    \begin{tabular}[h]{m{1cm} m{4cm} m{5cm} m{4cm} }
    &
   \begin{tikzpicture}[node distance=2cm, semithick, auto, font=\small, cross/.style={cross out, draw, 
         minimum size=4*(#1-\pgflinewidth), 
         inner sep=0pt, outer sep=0pt}, cross/.default={3.5pt}] 
        \node[state, fill=white] (w1) {$w_1$};
        \node[state, fill=white] (w2) [above right of=w1] {$w_2$};
        \node[state, fill=white] (w3) [below right of=w1]
        {$w_3$};
        \node[ultra thick, cross, red] (x) at ($(w2)!0.5!(w3) + (-0.53, 0)$) {};
        \path[->] (w1) edge[bend left, above] node {$m_1$} (w2)
                  (w2) edge[bend left, below] node {$m_2$} (w3)
                  (w3) edge[bend left, below] node {$m_3$} (w1)
                  (w3) edge[bend left, below, gray] node {$m_3$} (w2); 
        
      \end{tikzpicture}
        & 
        \begin{tikzpicture}[node distance=2.3cm, semithick, auto, font=\small, cross/.style={cross out, draw, 
         minimum size=4*(#1-\pgflinewidth), 
         inner sep=0pt, outer sep=0pt}, cross/.default={3.5pt}] 
        \node[state, fill=white] (w1) {$w_1$};
        \node[state, fill=white] (w2) [below of=w1] {$w_2$};
        \node[state, fill=white] (w3) [right of=w2]
        {$w_3$};
        \coordinate (w2') at ($(w2.north) + (-1.4em, +0.9em)$);
        \coordinate (w3') at ($(w3.south) + (+1.4em, -0.9em)$);
        \coordinate (w1') at ($(w1.north) + (-1.8em, +0.3em)$);
        \coordinate (w2'') at ($(w2.south) + (+1.8em, -0.3em)$);
        \path[->] (w1) edge[bend left, above] node {$m_1$} (w2)
                  (w2) edge[bend left, above] node {$m_3$} (w1)
                  (w3) edge[bend left, below] node {$m_3$} (w2)
                  (w2) edge[bend left, above] node {$m_2$} (w3);
        \node [ultra thick, dashed, draw=red, fit= (w1') (w2''), fill=red!20, fill opacity=0.2] (r2) {};
        \node [ultra thick, dashed, draw=red, fit= (w2') (w3'), fill=white, fill opacity=0.2] (r1) {};
        
        \node [yshift=+2.0ex, xshift=+3.0em, red, font=\small] at (r1.south) (r1label) {$\rho_1$};
        \node [yshift=-2.0ex, xshift=+1.4em, red, font=\small] at (r2.north) (r2label) {$\rho_2$};
      \end{tikzpicture} & 
       \begin{tabular}{l}
       \begin{tabular}{c | c c  c }
        $m_1$ & $w_1$ & $w_2$ & \\
        $m_2$ & $w_2$ & $w_3$ & \\
        $m_3$ & $w_3$ & $w_2$ & $w_1$ \
      \end{tabular}\\ \\
      
      \begin{tabular}{c | c c c }
          $w_1$ & $m_3$ & $m_1$  \\
          $w_2$ & $m_1$ & $m_3$ & $m_2$\\
          $w_3$ & $m_2$ & $m_3$ 
      \end{tabular}
       \end{tabular}
  \end{tabular}
  \caption{A rejection chain with no repeated agents does not 
    imply a covering relationship}
  \label{fig:chainVSDepend}
  \end{figure}

  \paragraph{Example.}
  The condition in claim~\ref{claimCoveringCondition} is subtly different from
  an agent appearing multiple times in the rejection chain. For instance,
  consider the example illustrated in Figure~\ref{fig:chainVSDepend}, and let
  $\mu_0=\{(m_1, w_1), (m_2, w_2), (m_3, w_3)\}$ be the man-optimal stable
  outcome. Agents only appear one time in $MPDA(P_{w_1}(\mu_0))$'s rejection
  chain $(w_1, m_1, w_2, m_2, w_3, m_3, w_1)$. 
  However, $\mu^*:=\allowbreak \{(m_1, w_2),\allowbreak 
  (m_2, w_3),\allowbreak (m_3, w_1)\}$,
  the resulting stable matching from
  $MPDA(P_{w_1}(\mu_0))$, does not cover $\mu_0$.  In fact,
  $MPDA(P_{w_2}(\mu_0))$ results in a stable matching 
  $\mu_1 = \{(m_1, w_1), (m_2, w_3), (m_3, w_1)\}$
  such that $\mu_0 < \mu_1 < \mu^*$.
  % where the women are strictly worse off in $\mu_1$
  % comparing to $\mu^*$.
  Intuitively, what happened is that, because $w_2$ received
  a proposal from both $m_1$ and $m_3$ in $MPDA(P_{w_1}(\mu_0))$,
  she actually had \emph{two} opportunities to upgrade and reach a better stable
  matching.
  Thus, the change from $\mu_0$ to $\mu^*$ can be
  broken down into two steps, where one step must come before the
  other (namely, $\mu_1$ must be reached before $\mu^*$).
  In the next section,
  we'll see how to formalize these concepts using \emph{rotations} and
  \emph{predecessor relations}.

\section{Rotations} \label{sectionRotation}

We now define a concise way to describe the difference between
``consecutive'' stable matchings, i.e. pairs of matchings where one covers the other.
The collection of these ``minimal differences'' will allow us to represent all
stable matchings in a principled and compact way.
% Using the machinery developed in section~\ref{sectionLattice},
% it will be easy to prove that every covering relation corresponds to a
% rotation, as well as a wealth of other properties.

\begin{definition}
  Let $\mu\in\L$ be a stable matching and
  $\rho = [ (w_0,m_0), (w_1,m_1), \ldots, (w_{k-1}, m_{k-1}) ]$
  a list of agents
  with each $w_i\in\W$ and $m_i\in\M$, and $\mu(w_i) = m_i$ for each $i$.
  The \emph{elimination of $\rho$ from $\mu$} is the matching
  $\mu'$ such that $\mu'(m_i) = w_{i+1}$ for $i=0,\ldots,k-1$ (with indices
  taken mod $k$) and $\mu'(m) = \mu(m)$ for each $m$ which doesn't appear in
  $\rho$.

  We say $\rho$ is a \emph{rotation exposed in $\mu$}
  when $\mu'$ is a stable matching, and
  $\mu'$ \textbf{covers} $\mu$.

  The collection of all rotations which are exposed in some $\mu\in\L$
  is called the \emph{set of rotations}, and is denoted by $\Pi$.
  % \begin{itemize}
  %   \item $\mu(m_i) = w_i$ for each $i$
  %   \item If $\mu'$ is the matching such that $\mu'(m_i) = w_{i+1}$
  %     for each $i$ (with indices taken mod $k$),
  %     and $\mu'(m) = \mu(m)$ for $m$ which don't appear in $\rho$,
  %     then $\mu'$ is stable and $\mu'$ covers $\mu$ in $\L$
  % \end{itemize}
  % In this case, we call
  % $\mu'$ the \emph{elimination of $\rho$ from $\mu$}.
\end{definition}

We can ``visualize'' rotations as follows: if the men in the rotation all
``get up'' from their match in $\mu$ and move one place to the right
(cyclically) in the rotation, then we arrive at a new stable match
(which covers the old one).
Note that we only call a list of agents $\rho$ a rotation when
there exists a $\mu_0$ such that the elimination of $\rho$ from $\mu_0$
covers $\mu_0$.
We view two rotations as equivalent if they differ by a cyclic shift,
i.e. $\rho$ as above is identified with $[(w_i,m_i), (w_{i+1},m_{i+1}),
\ldots,(w_{i-1},m_{i-1})]$ (indices taken mod $k$) for any $i$.
By the definition, it is clear that such a shift changes nothing.
For the rest of this section, all indices in rotations are considered mod
$k$ where $k$ is the length of the rotation.
We say each pair $(w_i, m_i)$ \emph{appears} in rotation $\rho$,
and that $\rho$ \emph{moves $m_i$ from $w_i$ to $w_{i+1}$}
and \emph{moves $w_i$ from $m_i$ to $m_{i-1}$}.
If $\rho$ moves $m$ from $w_i$ to $w_{i+1}$,
and $m$ ranks $w$ between $w_i$ and $w_{i+1}$ (that is,
$w_i\succ_m w\succ_m w_{i+1}$), 
we say that $\rho$ moves $m$ from above $w$ to below $w$.
Define the meaning of the phrase
``$\rho$ moves woman $w$ from below $m$ to above $m$''
and related phrases analogously.

Given the discussion in the previous section, we arrive easily at a rich set
of claims characterizing rotations and their relationship to each other.
Claim~\ref{claimRotationFromMPDA} translates the language of rotations to the
concept of $MPDA$ with truncated lists, as discussed in
section~\ref{sectionLattice}, then claim~\ref{claimStablePartnerRotations} lists
the basic properties of rotations.

\begin{claim}\label{claimRotationFromMPDA}
  For any stable matching $\mu_0$,
  the following are equivalent:
  \begin{itemize}
    \item $\rho = [(w_0,m_0), \ldots, (w_{k-1},m_{k-1})]$
      is a rotation exposed in $\mu_0$,
      and $\mu'$ is the elimination of $\rho$ from $\mu_0$
    \item $MPDA(P_{w_0}(\mu_0))$ produces the stable matching $\mu'$, and
      during its execution no woman receives multiple proposals from a man
      she prefers to her match in $\mu_0$,
      and the rejection chain of $w_0$ starting from $\mu_0$ is exactly
      $(w_0, m_0, w_1, m_1, \ldots, w_{k-1}, m_{k-1}, w_0)$.
  \end{itemize}
  Moreover, $\mu'$ covers $\mu_0$ if and only if there exists a rotation $\rho$
  exposed in $\mu_0$ such that $\mu'$ is the elimination of $\rho$ from $\mu_0$.
\end{claim}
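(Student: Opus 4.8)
The plan is to derive the equivalence almost entirely from the characterizations already proved in section~\ref{sectionLattice}, treating the rotation notation as bookkeeping for an execution trace of $MPDA(P_{w_0}(\mu_0))$. Throughout I would call a proposal \emph{good} (as in claim~\ref{claimCoveringCondition}) if its recipient prefers the proposer to her $\mu_0$-match, and I would repeatedly use the fact that, under the hypothesis that no woman sees two good proposals, each woman accepts the \emph{first} good proposal she receives and never moves again.

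First I would handle $(1)\Rightarrow(2)$. Assume $\rho$ is a rotation exposed in $\mu_0$ with elimination $\mu'$, so by definition $\mu'$ is stable and covers $\mu_0$; in particular $\mu'>\mu_0$, and since $\mu'(w_i)=m_{i-1}$ each $w_i$ strictly prefers her new partner $m_{i-1}$ to $m_i$, so $w_0$ strictly improves. The ``moreover'' part of claim~\ref{claimCanGoUpIfNotWosm} then gives $MPDA(P_{w_0}(\mu_0))=\mu'$, and the $(\Rightarrow)$ direction of claim~\ref{claimCoveringCondition} gives that no woman sees two good proposals. It remains to identify the rejection chain, and here I would trace the execution: $w_0$ rejects $m_0$, who becomes free and proposes down his list; since no woman accepts a second good proposal and $\mu'(m_0)=w_1$, the first woman to accept $m_0$ is exactly $w_1$, displacing $m_1=\mu_0(w_1)$; inductively the free men are $m_0,m_1,\dots$ in order, each $m_i$ accepted by $w_{i+1}$, and because the men outside $\rho$ are never displaced the chain closes precisely when $m_{k-1}$ is accepted by $w_0$. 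This yields the claimed chain $(w_0,m_0,\dots,w_{k-1},m_{k-1},w_0)$.

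For $(2)\Rightarrow(1)$ I would run the implications in reverse. Given the hypotheses of (2), the chain ends at $w_0$, so $w_0$ is matched in $\mu'$; claim~\ref{claimRejectAndMoveToStable} then gives that $\mu'$ is stable, and since each woman in $MPDA(P_{w_0}(\mu_0))$ only accepts a weakly better man while $w_0$ strictly improves, claim~\ref{claimCanGoUpIfNotWosm} gives $\mu'>\mu_0$. The $(\Leftarrow)$ direction of claim~\ref{claimCoveringCondition}, together with the no-multiple-good-proposals hypothesis, shows $\mu'$ covers $\mu_0$. Finally, reading off the chain, each free man $m_i$ is accepted by $w_{i+1}$ and, as no woman upgrades twice, stays there, so $\mu'(m_i)=w_{i+1}$, while men outside the chain keep their $\mu_0$-partners. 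Hence $\mu'$ is exactly the elimination of $\rho$ from $\mu_0$, and since it covers $\mu_0$, the list $\rho$ is a rotation exposed in $\mu_0$.

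The ``moreover'' statement then follows quickly. The $(\Leftarrow)$ direction is immediate from the definition of a rotation. For $(\Rightarrow)$, if $\mu'$ covers $\mu_0$ I would pick any woman $w_0$ who strictly improves; claim~\ref{claimCanGoUpIfNotWosm} gives $MPDA(P_{w_0}(\mu_0))=\mu'$ and claim~\ref{claimCoveringCondition} rules out double good proposals, while the rural hospital theorem (claim~\ref{claimRuralDoctors}) forces the rejection chain to terminate at $w_0$ via claim~\ref{claimRejectionChainEquiv}; thus (2) holds and the already-proved equivalence produces the desired rotation. I expect the only genuinely delicate step to be the execution-trace bookkeeping in $(1)\Rightarrow(2)$---showing the chain is \emph{exactly} the cyclic list rather than merely visiting the same agents---where the ``no woman accepts a second good proposal'' property is what pins down which woman accepts each free man.
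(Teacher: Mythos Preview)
Your proposal is correct and follows essentially the same approach as the paper: both arguments reduce the equivalence to claims~\ref{claimCanGoUpIfNotWosm} and~\ref{claimCoveringCondition}, with the rotation list serving as bookkeeping for the rejection chain. The paper's proof is considerably terser---it simply asserts that under the no-multiple-good-proposals condition the rejection chain has the required form and that $\mu'$ is the elimination of $\rho$---whereas you spell out the inductive trace (each free man $m_i$ is accepted precisely by $w_{i+1}$ because any earlier acceptance would force a second good proposal), which is exactly the justification the paper leaves implicit.
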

\begin{proof}
  By claim~\ref{claimCoveringCondition}, $\mu' = MPDA(P_{w_0}(\mu_0))$ 
  is a stable matching which covers $\mu_0$ if and only if during its execution
  no woman receives multiple proposals from a man she prefers to her match in
  $\mu_0$, and $w_0$ receives a match in $\mu'$.
  In this case, the rejection chain is of the form
  $(w_0, m_0, w_1, m_1, \ldots, w_{k-1}, m_{k-1}, w_0)$,
  and the stable matching $\mu'$ is exactly is exactly the
  elimination of $\rho =[(w_0,m_0), \ldots, (w_{k-1},m_{k-1})]$ from $\mu_0$.

  By definition, the elimination of a rotation from $\mu_0$ always covers $\mu_0$.
  Furthermore, claim~\ref{claimCanGoUpIfNotWosm} tells us that whenever $\mu'$
  covers $\mu_0$, running $MPDA(P_w(\mu_0))$ will produce $\mu'$ (and by
  claim~\ref{claimCoveringCondition} this will differ from $\mu_0$ by the
  elimination of a rotation).
\end{proof}

\begin{claim} \label{claimStablePartnerRotations}
  We have the following:
  \begin{enumerate}
    %\item \label{subclaimEachCoverHasRot}
    %  If $\mu'$ covers $\mu_0$, then there exists exactly one rotation
    %  exposed in $\mu_0$ such that $\mu'$ is the elimination of $\rho$
    %  from $\mu_0$.
    \item \label{subclaimStabPartners}
      $(w,m)$ are stable partners (i.e. matched in some stable matching)
      if and only if
      $(w,m)$ appears in some rotation in $\Pi$
      or $(w,m)$ are paired in the woman-optimal stable outcome.
    \item \label{subclaimNextStablePartner}
      Let $(w_i,m_i)$ appear in some rotation (indexed as above).
      Then $m_{i-1}$ is the worst-ranked stable partner of $w_i$
      who $w_i$ ranks above $m_i$
      (and similarly $w_{i+1}$ is the best-ranked stable partner of $m_i$
      who $m_i$ ranks below $w_i$). In other words, rotations move agents to
      their ``next'' stable partners (for women, the next best stable partner,
      and for men, the next-worst).
    \item \label{subclaimAtMostOnceRotate}
      A pair $(w,m)$ of men and women appear in at most one rotation together.
    \item \label{subclaimNumberRotations}
      There are at most ${n \choose 2}$ rotations in $\Pi$.
  \end{enumerate}
\end{claim}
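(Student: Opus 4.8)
The plan is to prove the four items in order, since each relies on the previous ones; the workhorses throughout are the covering characterization of Claim~\ref{claimRotationFromMPDA} together with the join/meet operations of Theorem~\ref{theoremDistributiveLattice}. Item~\ref{subclaimStabPartners} identifies which pairs can appear; item~\ref{subclaimNextStablePartner} pins down exactly where a rotation sends each agent; and items~\ref{subclaimAtMostOnceRotate} and~\ref{subclaimNumberRotations} are essentially corollaries of the rigidity established in~\ref{subclaimNextStablePartner}.

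For item~\ref{subclaimStabPartners}, the $(\Leftarrow)$ direction is immediate: the woman-optimal pairing is stable by definition, and if $(w,m)$ appears in a rotation $\rho$ exposed in some $\mu_0$, then $\mu_0(w)=m$ witnesses that they are stable partners. For $(\Rightarrow)$, suppose $\mu(w)=m$ for a stable $\mu$ with $(w,m)$ not the woman-optimal pairing. Since the woman-optimal matching is the top of $\L$, woman $w$ strictly prefers her partner there to $m$. I would take a maximal chain from $\mu$ up to the woman-optimal matching (such a chain exists in the finite lattice $\L$) and examine the first step along which $w$ changes partner. By Claim~\ref{claimRotationFromMPDA} every covering step is the elimination of a rotation; just before this step $w$ is still matched to $m$ (her partner has not yet changed and only weakly improves along the chain), so the rotation eliminated at this step moves $w$ off of $m$ and hence contains the pair $(w,m)$.

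The heart of the argument, and the step I expect to be the main obstacle, is item~\ref{subclaimNextStablePartner}. Fix a rotation $\rho$ exposed in $\mu_0$ with elimination $\mu'$, so $\mu'$ covers $\mu_0$, and consider the pair $(w_i,m_i)$, after whose elimination $w_i$ is matched to $m_{i-1}\succ_{w_i}m_i$. I would argue by contradiction: suppose $w_i$ has a stable partner $m'$ with $m_{i-1}\succ_{w_i}m'\succ_{w_i}m_i$, witnessed by a stable $\nu$ with $\nu(w_i)=m'$. The idea is to \emph{sandwich} a matching strictly between $\mu_0$ and $\mu'$, contradicting covering. Concretely, set $\mu^*=\mu'\wedge(\mu_0\vee\nu)$; this is stable by Theorem~\ref{theoremDistributiveLattice} and lies weakly above $\mu_0$, being a meet of two matchings each $\ge\mu_0$. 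Tracking $w_i$ through the $\max/\min$ definitions gives $(\mu_0\vee\nu)(w_i)=m'$ and hence $\mu^*(w_i)=\min_{w_i}\{m_{i-1},m'\}=m'$, so $\mu_0<\mu^*<\mu'$, the desired contradiction. The statement for $m_i$ and $w_{i+1}$ is the mirror image, via the identical construction read through the men's preferences (recall from the proof of Theorem~\ref{theoremDistributiveLattice} and Claim~\ref{claimOneUpOneDown} that $\vee$ and $\wedge$ give each man his worse and better partner respectively).

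Items~\ref{subclaimAtMostOnceRotate} and~\ref{subclaimNumberRotations} then follow quickly. For~\ref{subclaimAtMostOnceRotate}, note that item~\ref{subclaimNextStablePartner} makes the successor of a pair \emph{deterministic}: given $(w_i,m_i)$, the entry $w_{i+1}$ is forced to be the best stable partner of $m_i$ below $w_i$, and then $m_{i+1}=\mu_0(w_{i+1})$ is forced as the best stable partner of $w_{i+1}$ below $m_i$. Iterating, the whole cyclic sequence is determined by any single pair it contains, so two rotations sharing a pair coincide. For~\ref{subclaimNumberRotations}, I would count incidences between men and rotations: by items~\ref{subclaimStabPartners} and~\ref{subclaimNextStablePartner}, a man $m$ with $p_m$ stable partners lies in exactly $p_m-1$ rotations (one per consecutive pair of partners, all distinct since a man appears at most once in a rotation). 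Summing, $\sum_{\rho}|\rho|=\sum_m(p_m-1)$; since every rotation has length at least $2$ (a single pair would move nowhere and could not cover), the number of rotations is at most $\tfrac12\sum_m(p_m-1)$, and bounding $p_m\le n$ over the at most $n$ matched men yields the claimed ${n \choose 2}$.
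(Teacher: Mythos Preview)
Your proposal is correct and follows essentially the same approach as the paper: the maximal-chain argument for item~\ref{subclaimStabPartners}, the sandwich construction $\mu'\wedge(\mu_0\vee\nu)$ for item~\ref{subclaimNextStablePartner}, the ``one pair determines the whole cycle'' argument for item~\ref{subclaimAtMostOnceRotate}, and the incidence count for item~\ref{subclaimNumberRotations} all match the paper's proof. Your counting in item~\ref{subclaimNumberRotations} is slightly more explicit than the paper's, but the bound and the idea are identical.
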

\begin{proof}
  %(\ref{subclaimEachCoverHasRot}) The fact that a rotation exists follows
  %directly from claims~\ref{claimCoveringCondition}
  %and~\ref{claimRotationFromMPDA}. Furthermore, it's easy to see that
  %the pair of matchings uniquely determines the rotation.

  (\ref{subclaimStabPartners}) 
  The ``if'' direction is true by definition.
  For the ``only if'' part,
  let $\mu_0$ be a matching other than the
  woman-optimal outcome, and let $\mu_0(m)=w$ for $(m,w)$ not paired in the
  woman-optimal outcome.
  Let $\mu'$ be the woman-optimal stable outcome.
  % any stable matching in which $w$ is paired to her next-best
  % stable partner after $m$.
  Consider any maximal chain $\mu_0 < \mu_1 < \ldots < \mu_k = \mu'$ between
  $\mu_0$ and $\mu_k$ (i.e. $\mu_i$ covers $\mu_{i-1}$ for each $i$).
  Because $w$ is not matched to $m$ in $\mu'$, there must be some covering
  relation $\mu_{i-1} < \mu_i$ where $w$ is at $m$ in $\mu_{i-1}$ but not in
  $\mu_i$. By claim~\ref{claimRotationFromMPDA}, this corresponds to a rotation
  in which $(m,w)$ appears.

  % Consider $\mu' = \bigwedge \{ \mu' : \mu' \ge \mu_0, \mu'(w)\succ_w \mu_0(w) \}$,
  % i.e. the meet of all stable matchings $\ge\mu_0$ where $w$ does strictly better.
  % This is well-defined because we've assumed that set is nonempty.
  % We claim that $\mu'$ covers $\mu_0$. Proof: If $\mu_0 < \mu < \mu'$,
  % then

  % Then there exists a $\mu'$ which covers $\mu_0$, and in which $w$ receives
  % a different match than $m$ by claim~\ref{claimCanGoUpIfNotWosm}.
  % The corresponding rotation includes $(w,m)$.
  % The converse is true by definition.

  (\ref{subclaimNextStablePartner}) Let the rotation $\rho$
  be exposed in $\mu_0$ and let the elimination of $\rho$ from $\mu_0$
  be $\mu'$.
  Suppose for the sake of contradiction
  that $w_i$ has a stable partner $m^*$ who she ranks between
  $m_{i-1}$ and $m_i$, i.e. $m_{i}\prec_w m^*\prec_w m_{i-1}$.
  Let $\mu^*$ pair $w_i$ and $m^*$.
  % We use the lattice-theoretic properties of $\L$,
  % as overviewed in~\ref{theoremDistributiveLattice}.
  Consider the matching $\mu = (\mu_0 \vee \mu^*)\wedge \mu'$.
  We have $\mu \le \mu'$, and because $\mu_0 \le \mu'$
  and $\mu_0 \le \mu_0 \vee \mu^*$, we also get $\mu_0 \le \mu$.
  Because $\mu(w_i) = m^*$, that means
  $\mu_0 < \mu < \mu'$, which contradicts the fact that $\mu'$ covers $\mu_0$.
  This proves that $m_{i-1}$ is the worse stable partner of $w_i$ after $m_i$.
  The proof that, for each man $m_i$, $w_{i+1}$ is the next best stable partner
  $m_i$ has below $w_i$ is analagous.

  (\ref{subclaimAtMostOnceRotate}) By part~\ref{subclaimNextStablePartner},
  given one pair $(m_i,w_i)$, the value of $w_{i+1}$
  is uniquely determined as the next stable partner of $m_i$ below $w_i$.
  But then, by considering $m_i$ and $w_{i+1}$, the value of
  $m_{i+1}$ is uniquely determined as the worst stable partner 
  of $w_{i+1}$ before $m_i$. Continuing this process, we see that
  specifying one pair in $\rho$ uniquely determines all of $\rho$.

  (\ref{subclaimNumberRotations}). By part~\ref{subclaimStabPartners},
  each man can appear with at most
  $n-1$ agents in some rotation (each of his stable partners except his partner
  in the woman-optimal outcome), and by part~\ref{subclaimAtMostOnceRotate}
  that pair can appear at most once.
  At least two pairs of agents appear in
  each rotation, so the total number of rotations is at most $n(n-1)/2$.

\end{proof}

We already know a decent amount about the structure of individual rotations.
However, the rotations interact in a specified way.
In particular, there is a natural \emph{ordering} among them --
some rotations must be eliminated before others.
To make this precise, we need some definitions.

\begin{definition} \label{def:typedEdges}
  Let $\rho_1$ and $\rho_2$ be rotations in $\Pi$.
  \begin{enumerate}
    \item If there exists a man/woman pair $(w,m)$ such that
      $\rho_1$ moves $m$ to $w$, and $(w,m)$ appear in
      $\rho_2$ (i.e. $\rho_2$ moves $m$ away from $w$),
      then $\rho_1$ is called a \emph{type 1 predecessor}
      of $\rho_2$
    \item If there exist a man/woman pair $(w,m)$ such that:
      \begin{itemize}
        \item $\rho_1$ moves $w$ from $m_j$ to $m_{j-1}$,
          and $m_{j} \prec_w m \prec_w m_{j-1}$ \\
          ($\rho_1$ moves $w$ from below to above $m$)
        \item $\rho_2$ moves $m$ from $w_i$ to $w_{i+1}$,
          and $w_i \succ_m w \succ_m w_{i+1}$ \\
          ($\rho_2$ moves $m$ from above to below $w$)
      \end{itemize}
      Then $\rho_1$ is called a \emph{type 2 predecessor} of $\rho_2$.
  \end{enumerate}

  If $\rho_1$ is either a type 1 or type 2 predecessor of $\rho_2$,
  we say $\rho_1$ is a predecessor of $\rho_2$.

  Define the predecessor graph $G(\Pi)$ on rotations as follows:
  the vertices are every rotation $\rho\in\Pi$,
  for every $\rho_1, \rho_2$ such that $\rho_1$ is a predecessor of $\rho_2$,
  there is a directed edge from $\rho_1$ to $\rho_2$ 
  (labeled according to whether they are predecessors of type 1 or type 2 (or both)).
\end{definition}

In short, $\rho_1$ is a type 1 predecessor of $\rho_2$ if $\rho_1$ move a
couple $(m,w)$ together who $\rho_2$ moves apart.
In this case, $\rho_1$ must be eliminated first by definition.
% (so $\rho_1$ must come first so that those two agents  .
Intuitively, $\rho_1$ is a type 2 predecessor of $\rho_2$ 
if running $MPDA$ to eliminate $\rho_2$ would trigger the elimination
of $\rho_1$. Otherwise, for the pair $(m,w)$ in the definition,
$m$ would propose to $w$ as he moves from $w_{i}$ to $w_{i+1}$,
and $w$ would accept that proposal and
trigger the elimination of $\rho_1$ 
(finally giving $w$ an even better match than $m$).

It turns out that the above two types of predecessor relations 
are necessary and sufficient
to characterize which rotations must be eliminated before each other.
More precisely, a permutation of the set of all rotations
can be eliminated, one after the other,
if and only if they are topologically sorted in the graph $G(\Pi)$\footnote{
  A topological sort of a directed acyclic graph
  is a permutation of the vertices of the graph such that, for each directed
  edge $(u,v)$ in the graph, $u$ comes before $v$.
}.

We first prove that every possible sequence of eliminations forms
a topological sort of $G(\Pi)$. As we will formally spell out
in~\ref{theoremRotationsBijection}, this means that topological sorts
of $G(\Pi)$ suffice to represent all stable matchings.

\begin{claim}\label{claimChainsAreToplSorts}
  % Let $\mu_0$ be the man-optimal stable matching and
  % let $\mu$ be any stable matching.
  Consider any chain $\mu_0 < \mu_1 < \ldots < \mu_k$ in $\L$
  where $\mu_0$ is man-optimal, $\mu_k$ is woman-optimal,
  and $\mu_{i+1}$ covers $\mu_i$ for each $i$
  (i.e. consider a maximal chain in $\L$).
  Then
  \begin{enumerate}
    \item \label{subclaimConvertToChains}
      Then there exists a unique sequence of rotations
      $\rho_0, \rho_1, \ldots, \rho_{k-1} \in \Pi$ such that
      $\mu_{i+1}$ is the elimination of $\rho_i$ from $\mu_i$ for each $i$.
    \item \label{subclaimAllRotationsAppear}
      Every rotation in $\Pi$ appears exactly once in this sequence.
    \item \label{subclaimPredecesorsBefore}
      If $\rho$ is a predecessor of $\rho^*$ (of type 1 or type 2),
      then $\rho$ appears before $\rho^*$ in this sequence.
  \end{enumerate}
\end{claim}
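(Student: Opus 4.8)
The plan is to exploit two structural facts repeatedly: that eliminations correspond exactly to covering relations (claim~\ref{claimRotationFromMPDA}), and that along the chain each agent's partner moves monotonically through its list of stable partners one step at a time (claim~\ref{claimStablePartnerRotations}, part~\ref{subclaimNextStablePartner}). Since $\mu_0$ is man-optimal and $\mu_k$ is woman-optimal, every woman's partner weakly improves at each step, running from her worst to her best stable partner, while every man's partner weakly worsens from his best to his worst. For part~\ref{subclaimConvertToChains}, existence is immediate from the last sentence of claim~\ref{claimRotationFromMPDA}: since $\mu_{i+1}$ covers $\mu_i$, there is a rotation $\rho_i$ exposed in $\mu_i$ whose elimination is $\mu_{i+1}$. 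For uniqueness I would observe that any such $\rho_i$ must consist of exactly the pairs $(w,\mu_i(w))$ for which $\mu_{i+1}(w)\ne\mu_i(w)$, and that the cyclic order is forced by the rule that $\rho_i$ moves each such man $\mu_i(w)$ to his new partner $\mu_{i+1}(\mu_i(w))$; since a rotation is a single cycle, this determines $\rho_i$ up to the cyclic shift we already quotient by.

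For part~\ref{subclaimAllRotationsAppear} I would first argue each rotation appears \emph{at most} once: if a pair $(w,m)$ of $\rho$ had $\mu(w)=m$ at two distinct steps, then between them $w$ would have left $m$ for a strictly better partner and, by monotonicity, could never return to $m$, a contradiction. To see every rotation appears, I use that each woman passes through \emph{all} of her stable partners, in order, because she starts at her worst (in $\mu_0$), ends at her best (in $\mu_k$), and each change moves her to her immediate next stable partner (claim~\ref{claimStablePartnerRotations}, part~\ref{subclaimNextStablePartner}). Hence for any rotation $\rho$ containing a pair $(w,m)$, there is a unique step at which $w$ leaves $m$; the rotation eliminated there contains $(w,m)$, and since $(w,m)$ lies in only one rotation (part~\ref{subclaimAtMostOnceRotate}), that rotation must be $\rho$.

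For part~\ref{subclaimPredecesorsBefore}, write $\rho=\rho_a$ and $\rho^*=\rho_b$ (well-defined by part~\ref{subclaimAllRotationsAppear}); I must show $a<b$. The type~1 case is easy: if $\rho$ moves $m$ onto $w$ and $(w,m)$ appears in $\rho^*$, then $\mu_{a+1}(m)=w$ while $\mu_b(m)=w$ and $\mu_{b+1}(m)\ne w$, so $m$ is matched to $w$ exactly on the contiguous block $[a+1,b]$, forcing $a<b$. The type~2 case is the main obstacle and where I expect the real work. Here $\rho$ carries $w$ from below to above $m$, so $w$'s partner crosses $m$ precisely at step $a$: worse than $m$ for indices $\le a$, better than $m$ for indices $>a$; meanwhile $\rho^*$ carries $m$ from above to below $w$, with $w$ strictly between $m$'s old and new partners. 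When $\rho^*$ is eliminated, the rejection chain sends $m$ proposing past $w$ down to $w_{i+1}$, so $w$ must reject $m$'s proposal. I would argue this rejection forces $\mu_b(w)\succeq_w m$: otherwise $m$'s proposal is ``good'' for $w$, and either $w$ accepts it (so $m$ does not descend to $w_{i+1}$, contradicting the shape of $\rho^*$) or $w$ has already accepted an even better good proposal, giving her two good proposals and contradicting the covering condition of claim~\ref{claimCoveringCondition}. Since $\mu_b(w)\succeq_w m$ holds only for indices $b>a$, this yields $a<b$ and completes the argument.

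The delicate points to handle carefully are the monotonicity bookkeeping (that each woman, and dually each man, visits a contiguous block of steps for each stable partner) underlying both parts, and in the type~2 case the precise coupling between the shape of $\rho^*$'s rejection chain and the ``at most one good proposal'' covering condition; everything else reduces to invoking claims~\ref{claimRotationFromMPDA} and~\ref{claimStablePartnerRotations}.
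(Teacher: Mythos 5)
Your proposal is correct, and for parts~\ref{subclaimConvertToChains}, \ref{subclaimAllRotationsAppear}, and the type~1 case of part~\ref{subclaimPredecesorsBefore} it follows essentially the same path as the paper: existence and uniqueness of the $\rho_i$ from claim~\ref{claimRotationFromMPDA}, and ``exactly once'' from the monotone march of each agent through consecutive stable partners (claim~\ref{claimStablePartnerRotations}, parts~\ref{subclaimNextStablePartner} and~\ref{subclaimAtMostOnceRotate}). Where you genuinely diverge is the type~2 case. You establish $a<b$ constructively by examining the rejection chain that eliminates $\rho^*=\rho_b$: the man $m$ must propose to and be rejected by $w$ on his way from $w_i$ to $w_{i+1}$, and the ``at most one good proposal'' covering condition of claim~\ref{claimCoveringCondition} forces $\mu_b(w)\succ_w m$, which by monotonicity pins $b>a$. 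The paper instead argues by contradiction with no reference to MPDA at all: if the predecessor $\rho_j$ came after the successor $\rho_i$, then in $\mu_{i+1}$ the woman $w$ is still matched below $m$ (since $\mu_{i+1}\le\mu_j$ and $w$ sits below $m$ in $\mu_j$) while $m$ has just been moved below $w$, so $(m,w)$ blocks $\mu_{i+1}$ --- a two-line stability argument. Your route costs more machinery (you need the full correspondence of claim~\ref{claimRotationFromMPDA} plus the covering characterization) but has the virtue of explaining \emph{why} the type~2 relation is an obstruction operationally --- eliminating $\rho^*$ first would trigger $\rho$ --- which is the intuition the paper only states informally after Definition~\ref{def:typedEdges}. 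One small point to tighten: your claim that ``$\mu_b(w)\succeq_w m$ holds only for indices $b>a$'' should be stated with strict preference; it is safe because $m$ lies strictly between two consecutive stable partners of $w$ (claim~\ref{claimStablePartnerRotations}, part~\ref{subclaimNextStablePartner}), hence is never equal to $\mu_j(w)$ along the chain.
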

\begin{proof}
  (\ref{subclaimConvertToChains})
  By claim~\ref{claimRotationFromMPDA}, such a $\rho_i$ exists for each $i$.
  Furthermore, given $\mu_i$ and $\mu_{i+1}$, it's clear that $\rho_i$
  is uniquely determined.

  (\ref{subclaimAllRotationsAppear})
  We showed in claim~\ref{claimStablePartnerRotations},
  part~\ref{subclaimNextStablePartner} that rotations
  (and hence covering relations) move agents up or down
  one place on their list of stable partners.
  Thus, over the course of the maximal chain, every stable pair
  % who are not matched in the woman-optimal outcome
  must be matched in some $\mu_i$
  (or else those agents could not reach their match in the woman-optimal
  outcome).
  % , and then move (unless
  % they are paired in the woman-optimal).
  Moreover, each stable pair which is not matched in $\mu_{k}$
  must appear in some rotation $\rho_i$.
  Because a stable pair appears in at
  most one rotation (claim~\ref{claimStablePartnerRotations},
  part~\ref{subclaimAtMostOnceRotate}), this means every rotation in $\Pi$ is in
  this sequence.

  (\ref{subclaimPredecesorsBefore})
  Assume for contradiction that $\rho_j$ is a predecessor of $\rho_i$ for $i<j$.
  % but $\rho$ appeared after $\rho^*$ in the list.
  We have two cases.

  Suppose $\rho_j$ is a type 1 predecessor of $\rho_i$.
  By definition, there exists a pair $(m,w)$ such that
  $\rho_j$ moves $m$ to $w$, and $(m,w)$ appears in $\rho_i$.
  By~\ref{claimStablePartnerRotations},
  part~\ref{subclaimNextStablePartner},
  rotations always move women to men which they rank higher than
  their current match, this means that in $\mu_j$, $w$ was matched below $m$
  (i.e. $m \succ_w \mu_j(w)$). But $(m,w)$ are matched in $\mu_i$.
  Thus, we cannot have $\mu_i \le \mu_j$, a contradiction.

  Now suppose $\rho_j$ is a type 2 predecessor of $\rho_i$.
  By definition, there exist $(m,w)$ such that
  $\rho_j$ moves $w$ from below $m$ to above $m$  
  and $\rho_i$ moves $m$ from above $w$ to below $w$.
  Thus, in $\mu_j$, $w$ is matched below $m$,
  and in $\mu_{i+1}$, $m$ is matched below $w$.
  Now, $\mu_{i+1}\le \mu_j$, so in $\mu_j$, $w$ is also matched below $m$.
  But this means that $\mu_{i+1}$ is not stable,
  a contradiction.
\end{proof}

Note that part~\ref{subclaimPredecesorsBefore} above implies that
$G(\Pi)$ is an acyclic graph.

% Because the elimination of a rotation is easily computable, the previous claim
% shows us that $\Pi$ compactly represents all stable matchings.
We'll see next that \emph{only} stable matchings arise in the way described by
the previous claim. %, and that we can efficiently find $G(\Pi)$.
In other words, the type 1 and type 2 predecessor relations are the \emph{only}
issues to applying any sequence of rotations you would like.

Our strategy will be to show that every topological sort of $G(\Pi)$
corresponds to a maximal chain in $\L$.
We start with an arbitrary chain which corresponds
to some topological sort, and the apply a special type of
``commutativity operation'' in order to transform that initial
topological sort into the one we want (while preserving the property of
corresponding to some maximal chain along the way).
The next claim is a technical lemma which provides the type of 
commutativity operation needed.
Another way to summarize this claim is that two adjacent rotations 
which are not predecessors do not interfere with each other. 
% \\ \vspace{-0.3in}
% \InsertBoxR{2}{
% }[5]
  % \begin{wrapfigure}{rh}{3cm}
  % \end{wrapfigure}

\hspace{-0.25in}
\begin{minipage}{0.8\textwidth}
\begin{claim} \label{claimToplSortsInterchange}
  Suppose $\mu_0 < \mu_1 < \mu_2$, where each matching covers the previous one.
  Let $\rho_0$ and $\rho_1$ be the corresponding rotations,
  i.e. $\mu_{i+1}$ is the elimination of $\rho_i$ from $\mu_i$ for $i=0,1$.
  Assume that $\rho_0$ is not a predecessor
  of $\rho_1$ in $G(\Pi)$.
  Then $\rho_1$ is also exposed in $\mu_0$.
  Moreover, if $\mu_1'$ is the elimination of $\rho_1$ from $\mu_0$,
  then $\rho_0$ is exposed in $\mu_1'$, and $\mu_2$ is the elimination of
  $\rho_0$ from $\mu_1'$.
\end{claim}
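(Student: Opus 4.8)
The plan is to prove both assertions — that $\rho_1$ is exposed in $\mu_0$, and that $\rho_0$ is then exposed in $\mu_1'$ — with a single device: run two instances of deferred acceptance in lockstep and show they never diverge. First I would unpack the hypothesis combinatorially. From the assumption that $\rho_0$ is not a \emph{type 1} predecessor of $\rho_1$, I would deduce that $\rho_0$ and $\rho_1$ share no agent at all: if a man (or woman) appeared in both rotations, then the partner $\rho_0$ assigns him in $\mu_1$ is exactly the partner $\rho_1$ moves him away from, so the pair $\rho_0$ forms is a pair appearing in $\rho_1$, making $\rho_0$ a type 1 predecessor. Consequently every agent of $\rho_1$ has the same partner in $\mu_0$ as in $\mu_1$, the matching $\mu_1'$ is well defined, and — since the two rotations move disjoint sets of men cyclically — eliminating them commutes as an operation on matchings. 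In particular the elimination of $\rho_0$ from $\mu_1'$ is literally the same matching as the elimination of $\rho_1$ from $\mu_1$, namely $\mu_2$; this settles the final sentence of the claim as soon as exposedness is in hand.

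Next I would show $\rho_1$ is exposed in $\mu_0$. Let $w$ be the first woman of $\rho_1$ and run $MPDA(P_w(\mu_0))$ and $MPDA(P_w(\mu_1))$ using a common proposal order, which is legitimate because the outcome of $MPDA$ is independent of the order of proposals (the remark following claim~\ref{claimMenBestStable}). Because $\mu_0$ and $\mu_1$ differ only on the agents of $\rho_0$, the truncated profiles $P_w(\mu_0)$ and $P_w(\mu_1)$ differ only for those agents — concretely, a $\rho_0$-woman keeps a longer list under $\mu_0$, since she truncates only below her (lower) $\mu_0$-partner. I would then induct on the proposal steps, maintaining that the two tentative matchings agree. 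A step can only diverge at a woman who decides a proposal differently, and since every non-$\rho_0$ woman has an identical acceptance threshold in both runs, the divergence must occur at a $\rho_0$-woman $w'$ who accepts some proposer $m$ in the $\mu_0$-run but rejects him in the $\mu_1$-run; this forces $\mu_0(w') \prec_{w'} m \prec_{w'} \mu_1(w')$. As $m$ is a man who is proposing, the induction hypothesis identifies him as some $m_j^1$ passing $w'$ on his way from $w_j^1$ down to $w_{j+1}^1$. These two facts say exactly that $\rho_0$ moves $w'$ from below $m$ to above $m$ while $\rho_1$ moves $m$ from above $w'$ to below $w'$, i.e. that $\rho_0$ is a \emph{type 2} predecessor of $\rho_1$ — contradicting the hypothesis. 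Hence the runs coincide, so $MPDA(P_w(\mu_0))$ reproduces the rejection chain of $\rho_1$; I would then check that the covering condition of claim~\ref{claimCoveringCondition} transfers (the $\rho_0$-women receive no proposal ranked above their $\mu_0$-partner, and every other woman's set of ``good'' proposals is unchanged), and invoke claim~\ref{claimRotationFromMPDA} to conclude that $\rho_1$ is exposed in $\mu_0$ with elimination $\mu_1'$.

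For the ``moreover'' I would first record the one new ingredient: $\rho_1$ is \emph{not} a predecessor of $\rho_0$. This follows by extending $\mu_0 < \mu_1 < \mu_2$ to a maximal chain and applying claim~\ref{claimChainsAreToplSorts}: predecessors must appear earlier in any such chain, and $\rho_0$ precedes $\rho_1$, so $\rho_1$ cannot precede $\rho_0$. Now I would rerun the lockstep argument with the roles reversed, comparing $MPDA(P_{\hat w}(\mu_0))$ (whose chain is $\rho_0$, since $\rho_0$ is exposed in $\mu_0$) with $MPDA(P_{\hat w}(\mu_1'))$, where $\hat w$ is the first woman of $\rho_0$. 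Here $\mu_0$ and $\mu_1'$ differ only on $\rho_1$-agents, and a $\rho_1$-woman now carries a \emph{shorter} list under $\mu_1'$; so the only possible divergence is a $\rho_1$-woman who rejects in the $\mu_1'$-run a man she accepts in the $\mu_0$-run, which by the same computation forces $\rho_1$ to be a type 2 predecessor of $\rho_0$ — again excluded. Thus $\rho_0$'s rejection chain and its covering condition carry over to $\mu_1'$, so $\rho_0$ is exposed in $\mu_1'$, and by the commutativity noted above its elimination is $\mu_2$.

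The main obstacle is the lockstep induction itself: one must fix a single proposal order for both executions, keep the invariant that their tentative matchings agree step by step, and argue that the \emph{only} way a step can break the invariant is the acceptance/rejection flip at a rotation woman — and then recognize that flip as precisely a type 2 predecessor relation. The delicate bookkeeping is ruling out that a divergence could instead be caused by some ``off-chain'' proposing man, and verifying that it is the full covering condition (no woman seeing two good proposals), not merely the rejection chain, that is preserved when the base matching changes.
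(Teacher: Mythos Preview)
Your proof is correct and follows essentially the same approach as the paper: first use the absence of a type~1 edge to deduce that $\rho_0$ and $\rho_1$ share no agents, then use the absence of a type~2 edge to show that the rejection chain of $\rho_1$ survives when the base matching is lowered from $\mu_1$ to $\mu_0$, and finally handle the reverse direction. Your ``lockstep'' framing of two parallel MPDA runs is a cosmetic repackaging of the paper's direct step-by-step analysis of the single rejection chain $MPDA(P_{w_0}(\mu_0))$; the content of the induction is the same.

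The one substantive difference is in the ``moreover'' part. You first invoke claim~\ref{claimChainsAreToplSorts} to deduce that $\rho_1$ is not a predecessor of $\rho_0$, and then rerun the type-2 argument symmetrically. The paper instead observes directly that in passing from $\mu_0$ to $\mu_1'$ the $\rho_1$-women only move \emph{up}, so any proposal they rejected in the $\mu_0$-run (where we already know the chain is $\rho_0$) is still rejected in the $\mu_1'$-run; this monotonicity makes the appeal to claim~\ref{claimChainsAreToplSorts} unnecessary. Your route is a bit longer but has the virtue of being symmetric in the two halves.
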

\end{minipage}
\quad
\begin{minipage}{0.2\textwidth}
\begin{tikzpicture}[node distance=2cm, semithick, auto] 
  \node (u1) at (0,0) {$\mu_1$};
  \node (u2) at (1,1.5) {$\mu_2$};
  \node (u0) at (1,-1.5) {$\mu_0$};
  \node (u1') at (2,0) {$\mu_1'$};
  \path[->] (u1) edge[above left] node {$\rho_1$} (u2)
            (u1') edge[above right] node {$\rho_0$} (u2)
            (u0) edge[below right] node {$\rho_1$} (u1')
            (u0) edge[below left] node {$\rho_0$} (u1); 
\end{tikzpicture}
\end{minipage}

\begin{proof}
  Recall that every agent who does not appear in $\rho_0$ receives the same
  match in $\mu_0$ and $\mu_1$.
  % From the previous claim we know that $\rho_1$ is not a predecessor
  % of $\rho_0$, so neither rotation is a type 1 predecessor of the other.

  First, we claim that no agent can appear in both rotations.
  Proof: If some man $m$ appeared in both rotation,
  then $\rho_0$ moved $m$ to some woman $\mu_1(m)$,
  and then $(\mu_1(m), m)$ must appear in $\rho_1$.
  % so $\rho_0$ is a type 1 predecessor of $\rho_1$.
  If a woman $w$ appears in both rotations,
  then the pair $(w, \mu_1(w))$ appears in $\rho_1$,
  so $\rho_0$ must move $\mu_1(w)$ to $w$.
  Because we've assumed that $\rho_0$ is not a type 1 predecessor of $\rho_1$,
  neither of the above cases can occur.

  Now, let $w_0$ be a woman appearing in $\rho_1$ 
  and consider $MPDA(P_{w_0}(\mu_0))$.
  We claim that this produces a stable outcome
  and no woman receives more than one proposal from a man she prefers to her
  match in $\mu_0$.
  Proof: Let $\rho_1 = [(w_0,m_0), (w_1,m_1), \ldots, (w_{k-1},m_{k-1})]$.
  In $MPDA(P_{w_0}(\mu_0))$, the free man is initially $m_0$.
  Consider the proposals that $m_i$ makes after he is rejected by $w_i$.
  Some of the women he proposes to may be at different matches in $\mu_0$ than
  in $\mu_1$ (specifically, those women who were moved by $\rho_1$).
  However, for all such women $w$ who $m_i$ ranks above $w_{i+1}$,
  $w$ cannot be matched below $m_i$,
  or else $\rho_0$ would be a type 2 predecessor of $\rho_1$ by definition.
  Because none of the agents in $\rho_2$ appear in $\rho_1$,
  $w_{i+1}$ has the same match in $\mu_0$ as in $\mu_1$.
  Thus, all the women $m_i$ proposes to before $w_{i+1}$ will reject him,
  but $w_{i+1}$ will accept him.
  Thus, by induction, the rejection chain of $MPDA(P_{w_0}(\mu_0))$
  will be exactly the same as in $MPDA(P_{w_0}(\mu_1))$.
  By claim~\ref{claimRotationFromMPDA}, this means $\rho_1$ is exposed in
  $\mu_0$ and that
  and $\mu_1' = MPDA(P_{w_0}(\mu_0))$ is the
  elimination of $\rho_1$ in $\mu_0$.

  Finally, consider running $MPDA(P_{w_0'}(\mu_1'))$ for some $w_0'$ appearing in
  $\rho_0$. Let $\rho_1 = [(w_0',m_0'),\\ \ldots, (w_{k'-1}',m_{k'-1}')]$,
  and again consider a free man $m_i'$ during this rejection chain.
  The only difference between $\mu_1'$ and $\mu_0$ is that the women who appear
  in $\rho_1$ have received better partners. However, $w_{i+1}$ is still matched
  to $\mu_0(w_{i+1})$, again because the agents in $\rho_0$ and $\rho_1$ 
  are disjoint. The women $m_i$ proposes to before $w_{i+1}$ can only have
  higher matches than in $\mu_0$, so they will still reject his proposals.
  But $w_{i+1}$ will still accept.
  % Tus, during this rejection chain no man will be
  % accepted by any woman other than the woman he moves to in $\rho_0$.
  Thus, $MPDA(P_{w^*}(\mu_1'))$ will terminate with exactly $\rho_1$ eliminated from
  $\mu_1'$, and no woman will recieve multiple proposals from a man
  she prefers to her match in $\mu_1'$.
  So $\rho_1$ was exposed in $\mu_1'$.
  % Because $\rho_1$ is not a type 2 predecessor of $\rho_0$,
  % there exists no man/woman pair $(m,w)$ where $\rho_1$ moves $w$ above $m$
  % and $\rho_0$ moves $m$ below $w$.

  Finally, it's clear from the definitions that the elimination of $\rho_0$
  from $\mu_1'$ is $\mu_2$ (the match of every agent is uniquely determined as
  either the match from $\mu_0$ or the match which is uniquely 
  specified in $\rho_0$ or $\rho_1$).
\end{proof}

Now we can prove that only maximal chains arise from topological sorts of
$G(\Pi)$. One short way to summarize this proof is the following:
we can transform any two topological sorts of $G(\Pi)$ between each other
using only the ``adjacent order swapping'' operation given by the previous
lemma\footnote{
  One way to do this (different then outlined in our formal proof)
  is to label the first list of rotations with $0, 1, \ldots, k-1$,
  then simply \emph{bubble sort} the second list of rotations.
}. Thus, starting from a fixed topological sort of $G(\Pi)$
(which corresponds to a maximal chain by claim~\ref{claimChainsAreToplSorts})
we see than any other topological sort will also correspond to a maximal chain.

\begin{claim} \label{claimToplSortsAreChains}
  Consider any topological sort $\rho_0, \rho_1, \ldots, \rho_{k-1}$ of
  $G(\Pi)$, i.e. an ordering of each element of $\Pi$ such that whenever
  $\rho_i$ is a predecessor of $\rho_j$, we have $i < j$.
  Then this sequence corresponds to a maximal chain $\mu_0, \mu_1, \ldots, \mu_{k}$
  in the stable matching lattice $\L$ such that
  $\mu_0$ is the man-optimal stable outcome,
  $\mu_{i+1}$ is the elimination of $\rho_i$ from $\mu_i$,
  and $\mu_k$ is the woman-optimal outcome.
\end{claim}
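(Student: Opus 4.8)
The plan is to prove this by induction on $k$, using claim~\ref{claimToplSortsInterchange} as the engine for ``permuting'' one topological sort into another. The strategy follows the roadmap sketched in the text: we already know from claim~\ref{claimChainsAreToplSorts} that \emph{some} maximal chain exists, and that the maximal chain corresponds to \emph{some} topological sort $\sigma^{\mathrm{ref}}$ of $G(\Pi)$ (indeed, by claim~\ref{claimChainsAreToplSorts}, every maximal chain yields a topological sort, and every rotation appears exactly once). The goal is to show that the arbitrary topological sort $\rho_0, \ldots, \rho_{k-1}$ in the statement \emph{also} corresponds to a maximal chain. The key observation is that any two topological sorts of the same DAG are connected by a sequence of ``adjacent transpositions'', where each transposition swaps two consecutive elements $\rho_i, \rho_{i+1}$ that are \emph{incomparable} in $G(\Pi)$ (i.e. neither is a predecessor of the other). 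This is a standard fact about linear extensions of posets.

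First I would establish this connectivity fact carefully: given two topological sorts $\sigma$ and $\tau$ of a finite DAG, one can transform $\sigma$ into $\tau$ by repeatedly swapping adjacent incomparable elements. The footnote suggests the cleanest concrete implementation — bubble sort: label the rotations by their position $0, 1, \ldots, k-1$ in the target order $\tau$, then run bubble sort on $\sigma$. Every swap bubble sort performs is between two adjacent elements that are out of target-order; I would argue such a pair cannot be comparable in $G(\Pi)$ (if $\rho$ were a predecessor of $\rho'$, both sorts would list $\rho$ before $\rho'$, so they could never be adjacent-and-out-of-order simultaneously), hence every swap is legal. Since bubble sort terminates with the list in target order, we obtain the desired sequence of legal swaps.

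Next I would set up the induction/invariant. Start from $\sigma^{\mathrm{ref}}$, the topological sort arising from the known maximal chain, and apply the sequence of adjacent swaps that transforms $\sigma^{\mathrm{ref}}$ into the target sort $\rho_0, \ldots, \rho_{k-1}$. The invariant I would maintain is: \emph{at every intermediate topological sort along this path, the sort corresponds to a genuine maximal chain in $\L$} (i.e. there exist stable matchings $\nu_0 < \nu_1 < \cdots < \nu_k$ with $\nu_0$ man-optimal, $\nu_k$ woman-optimal, each covering the previous, and $\nu_{j+1}$ the elimination of the $j$-th rotation from $\nu_j$). The base case is exactly $\sigma^{\mathrm{ref}}$. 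For the inductive step, a single adjacent swap exchanges rotations at positions $i$ and $i+1$ which are incomparable in $G(\Pi)$; only the intermediate matching $\nu_{i+1}$ changes, and claim~\ref{claimToplSortsInterchange} guarantees precisely that if $\rho^{(i)}$ is not a predecessor of $\rho^{(i+1)}$, then swapping them yields a valid pair of covering relations through a new intermediate matching $\nu_{i+1}'$, with the same endpoints $\nu_i$ and $\nu_{i+2}$. This keeps the chain maximal and keeps all other matchings fixed. Once all swaps are applied, the resulting sort is the target sort and still corresponds to a maximal chain, completing the proof; in particular $\mu_0$ is man-optimal and $\mu_k$ is woman-optimal because these endpoints are preserved throughout.

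The main obstacle I anticipate is the bookkeeping in the connectivity lemma, specifically verifying that bubble sort only ever swaps incomparable elements. This requires carefully arguing that if bubble sort swaps adjacent out-of-order elements $\rho, \rho'$ (with $\rho$ before $\rho'$ in the current list but $\rho$ after $\rho'$ in the target $\tau$), then $\rho$ cannot be a predecessor of $\rho'$ — which follows since a predecessor relation would force $\rho$ before $\rho'$ in \emph{both} topological sorts, contradicting their order in $\tau$. A subtle secondary point is ensuring claim~\ref{claimToplSortsInterchange} applies symmetrically: the lemma as stated assumes $\rho_0$ is not a predecessor of $\rho_1$, but incomparability gives us only that neither is a predecessor of the other, which is exactly what we need (and in fact stronger than required for a single application). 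I would make sure the direction of the swap matches the hypothesis of claim~\ref{claimToplSortsInterchange} — that the rotation being ``pushed earlier'' is not a predecessor of the one being ``pushed later'' — so that the lemma can be invoked verbatim.
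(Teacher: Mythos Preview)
Your proposal is correct and follows essentially the same route as the paper: start from a reference maximal chain (yielding a reference topological sort via claim~\ref{claimChainsAreToplSorts}), then reach the target sort by a sequence of adjacent transpositions of incomparable rotations, invoking claim~\ref{claimToplSortsInterchange} at each swap to preserve validity of the corresponding chain. The paper organizes this as induction on the number of inversions relative to the reference sort (and even mentions bubble sort in a footnote as an equivalent framing), so your approach matches it essentially verbatim.
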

\begin{proof}
  For this proof, say that a permutation $\rho_0', \ldots, \rho_{k-1}'$
  of the set $\Pi$ is \emph{valid} if there exists 
  $\mu_0', \mu_1', \ldots, \mu_{k}'$ a maximal chain in $\L$
  such that $\mu_{i+1}$ is the elimination of $\rho_i$ from $\mu_i$ for each $i$.

  Fix any arbitrary maximal chain $\mu_0 < \mu_1 < \ldots < \mu_k$ in $\L$.
  By claim~\ref{claimChainsAreToplSorts}, there exists a corresponding
  valid sequence $\rho_0, \rho_1, \ldots, \rho_{k-1}$ 
  of rotations which is a topological sort of $G(\Pi)$.
  Now, given any permutation of $0, 1, \ldots, k-1$, say
  say ${i_0}, {i_1}, \ldots, {i_{k-1}}$,
  we'll prove that $\rho_{i_0}, \rho_{i_1}, \ldots, \rho_{i_{k-1}}$ is valid
  using induction on the \emph{number of inversions} of the permutation
  (i.e. the number of pairs $j < k$ such that $i_j > i_k$).
  If there are no inversions, then $i_j = j$ for each $j$ and we are done.

  Now, suppose $I = {i_0}, {i_1}, \ldots, {i_{k-1}}$ has at least one inversion
  and let $i_{j}, i_{j+1}$ be any \emph{adjacent} inverted pair
  (if no adjacent inverted pairs exist, then no inverted pairs can exists).
  Consider the ordering 
  $I' = i_0, \ldots,\allowbreak i_{j-1}, i_{j+1}, i_j, i_{j+2}, 
  \allowbreak \ldots, {i_{k-1}}$.
  There is exactly one fewer inverted pair in this new
  ordering than in the original one, so by induction the 
  ordering on rotations corresponding to $I'$ is valid.
  Say this ordering on rotations corresponds to a maximal chain containing
  the matchings $\mu_a < \mu_b < \mu_c$, where $\mu_b$ is the elimination
  of $\rho_{i_{j+1}}$ from $\mu_a$ and $\mu_c$ is the elimination of 
  $\rho_{i_j}$ from $\mu_b$.
  Because both $\rho_0, \ldots, \rho_{k-1}$
  and $\rho_{i_0}, \ldots, \rho_{i_{k-1}}$ are topological sorts of $G(\Pi)$,
  $\rho_{i_j}$ and $\rho_{i_{j+1}}$ cannot be in a predecessor relation.
  Thus, applying claim~\ref{claimToplSortsInterchange} to the covering
  relations $\mu_{a} < \mu_{b} < \mu_{c}$,
  we see that the original ordering
  $\rho_{i_0}, \rho_{i_1}, \ldots, \rho_{i_{k-1}}$ is also valid
  (and only the matching $\mu_b$ is different along the 
  corresponding maximal chain).

  Thus, every topological sort of $G(\Pi)$ is valid by induction.

  % Now, iteratively apply
  % to change the sequence $\rho'_0, \rho'_1, \ldots, \rho'_{k-1}$
  % to $\rho_0, \rho_1, \ldots, \rho_{k-1}$.
  % At every step along the way, we swap two adjacent rotations
  % $\rho_i$ and $\rho_{i+1}$ in the list, and we change one of the
  % matchings, specifically $\mu_i$.
  % However, we preserve the property that each matching covers the previous one,
  % and thus the new ordering of rotations is still valid.
  % Once we reach $\rho_0, \rho_1, \ldots, \rho_{k-1}$,
  % we've proven the claim.

  % Why is it always possible to change one sequence into another?
  % Here's one way:
  % Consider the rotation $\rho_0$, which be a source node in $G(\Pi)$
  % as it comes first in a topological sort.
  % Because $\rho_0$ has no predecessors, it can be moved directly to the front of 
  % the list starting from $\rho'_0, \rho'_1, \ldots, \rho'_{k-1}$.
  % Now, ignoring $\rho_0$, $\rho_1$ cannot have any predecessors in $G(\Pi)$,
  % so it can be moved to the second place.
  % Continuing in this manner will succesfully transition from
  % $\rho'_0, \rho'_1, \ldots, \rho'_{k-1}$ to $\rho_0, \rho_1, \ldots, \rho_{k-1}$.
  % Put another way, just run an insertion sort on
  % $\rho'_0, \rho'_1, \ldots, \rho'_{k-1}$.
\end{proof}

Finally, after one more simple definition,
we arrive at our long sought after bijection.

\begin{definition}
  A \emph{closed} subset $S$ of $G(\Pi)$ is a collection of rotations
  such that, whenever $\rho_2$ is in $S$ and $\rho_1$ is a predecessor of
  $\rho_2$, then $\rho_1$ is in $S$.
\end{definition}

\begin{theorem} \label{theoremRotationsBijection}
  % Let the set of rotations $\Pi$ be partially ordered by the transitive closure
  % of the type 1 and type 2 predecessor relations in $G(\Pi)$.
  There is a bijection between the collection of closed
  subsets of $G(\Pi)$ and the stable matching lattice $\L$.

  This bijection is given as follows:
  for a closed subset $S$ of $\Pi$, let
  $\rho_0,\ldots,\rho_i$ be a topological sort of $S$ in $G(\Pi)$.
  Let $\mu_0$ be the man-optimal stable outcome,
  and let $\mu_{j+1}$ be the elimination of $\rho_j$ from $\mu_j$ for each
  $j=0,\ldots,i$.
  Then the matching corresponding to $S$ is given by $\mu_{i+1}$.

  Furthermore, let $\mu_1$ and $\mu_2$ be stable matchings corresponding
  to $S_1$ and $S_2$ respectively. Then $\mu_2$ woman-dominates $\mu_1$
  (i.e. $\mu_2 \ge \mu_1$)
  if and only if $S_2 \supseteq S_1$.
\end{theorem}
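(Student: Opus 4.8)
The plan is to exhibit an explicit inverse map and check that the two maps compose to the identity, leaning on the stable-partner structure of rotations from Claim~\ref{claimStablePartnerRotations}. For a stable matching $\mu$, I would define $R(\mu)\subseteq\Pi$ to be the set of rotations eliminated along any maximal chain from the man-optimal matching $\mu_0$ up to $\mu$. The first task is to show $R(\mu)$ depends only on $\mu$. By Claim~\ref{claimStablePartnerRotations}, parts~\ref{subclaimNextStablePartner} and~\ref{subclaimAtMostOnceRotate}, the rotations involving a fixed woman $w$ move her one step at a time up her list of stable partners, so along any chain the rotations touching $w$ that have already been eliminated form exactly the initial segment of that list carrying her from her man-optimal partner to $\mu(w)$. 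Since every rotation involves a woman, $R(\mu)$ is pinned down by the partners $\{\mu(w)\}_{w}$, independent of the chain. This same picture shows $R(\mu)$ is a closed subset (predecessors are eliminated first, by Claim~\ref{claimChainsAreToplSorts} part~\ref{subclaimPredecesorsBefore}) and that $\mu\mapsto R(\mu)$ is \emph{injective}: the number of $w$'s rotations lying in $R(\mu)$ determines $w$'s position on her stable-partner list, hence $\mu(w)$, hence $\mu$.

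Next I would show the forward map $\Phi\colon S\mapsto\mu(S)$ is well-defined. Because $S$ is closed, any topological sort of $S$ inside $G(\Pi)$ extends to a topological sort of all of $G(\Pi)$: append the rotations of $\Pi\setminus S$ in topological order (using that $G(\Pi)$ is acyclic), noting that closedness forbids any edge from $\Pi\setminus S$ into $S$. By Claim~\ref{claimToplSortsAreChains} this full sort corresponds to a maximal chain, so eliminating the rotations of $S$ in that order from $\mu_0$ yields a genuine stable matching, and the rotations eliminated are exactly $S$; thus $R(\Phi(S))=S$ for whatever sort we picked. Independence of the choice is then immediate from injectivity of $R$: since $R(\Phi(S))=S$ is forced, $\Phi(S)$ is uniquely determined. (Alternatively, any two topological sorts of $S$ differ by adjacent transpositions of non-predecessor pairs, and Claim~\ref{claimToplSortsInterchange} shows each such swap leaves the resulting matching unchanged, as hinted by the footnote to Claim~\ref{claimToplSortsAreChains}.)

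Combining these, $\Phi$ and $R$ are mutually inverse: $R\circ\Phi=\mathrm{id}$ was just shown, and $\Phi\circ R=\mathrm{id}$ follows because $R(\Phi(R(\mu)))=R(\mu)$ together with injectivity of $R$ gives $\Phi(R(\mu))=\mu$. This establishes the bijection. For the order statement I would reuse the initial-segment description with $\mu_i=\Phi(S_i)$ and $S_i=R(\mu_i)$. If $S_1\subseteq S_2$, then for each woman $w$ at least as many of her rotations are eliminated in $\mu_2$ as in $\mu_1$, so $w$ sits at least as high on her stable-partner list, i.e. $\mu_2(w)\succeq_w\mu_1(w)$, which is exactly $\mu_2\ge\mu_1$. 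Conversely, if $\mu_2\ge\mu_1$, then each woman's eliminated rotations in $\mu_1$ form an initial segment contained in the (at least as long) initial segment for $\mu_2$; since every rotation contains a woman, this yields $S_1\subseteq S_2$.

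The step I expect to be the main obstacle is establishing that $R(\mu)$ is well-defined independently of the maximal chain; once the ``initial segment along each woman's stable-partner list'' picture is in place, everything else is bookkeeping. The care needed there is to confirm that a rotation's membership in $R(\mu)$ is consistent across all the women it touches (which holds because $R(\mu)$ comes from an actual chain) and that along any chain a woman's relevant rotations are eliminated in their stable-partner-list order — a fact forced by the requirement that $\mu(w_i)=m_i$ for a rotation to be exposed, so a rotation moving $w$ off $m_i$ can only fire once $w$ has already been carried up to $m_i$.
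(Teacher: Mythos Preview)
Your proof is correct and rests on the same underlying insight as the paper's: for each woman $w$, the rotations touching $w$ are linearly ordered along her stable-partner list (Claim~\ref{claimStablePartnerRotations}), so the matching $\mu$ determines exactly which of them have been eliminated. The organization differs. The paper argues surjectivity (given $\mu$, take any maximal chain through it and let $S$ be the rotations below $\mu$) and injectivity (if $\rho\in S_1\setminus S_2$, any woman in $\rho$ strictly prefers $\mu_1$ to $\mu_2$) separately, and its order argument is essentially your initial-segment picture. You instead build the explicit inverse $R$, verify $R\circ\Phi=\mathrm{id}$, and deduce $\Phi\circ R=\mathrm{id}$ from injectivity of $R$. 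The main thing your route buys is an explicit treatment of the well-definedness of $\Phi$ (independence from the chosen topological sort of $S$), which the paper's proof leaves implicit; your argument via ``extend to a full topological sort, apply Claim~\ref{claimToplSortsAreChains}, then use injectivity of $R$'' handles this cleanly. Conversely, the paper's direct surjectivity/injectivity framing is slightly shorter once one is willing to take well-definedness as evident from Claims~\ref{claimChainsAreToplSorts} and~\ref{claimToplSortsAreChains}.
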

\begin{proof}
  First, we show that this correspondence is surjective.
  Given a matching $\mu$, consider a maximal chain
  $\mu_0 < \mu_1 < \ldots < \mu_k$ containing it, say $\mu = \mu_i$.
  By claim~\ref{claimChainsAreToplSorts}, there exists a corresponding
  sequence of rotations $\rho_0, \rho_1, \ldots, \rho_{k-1}$.
  Consider the set $S = \{\rho_j\}_{j < i}$.
  For each $\rho_j\in S$ and $\rho'$ a predecessor of $\rho$,
  $\rho'$ must also be in $S$ because $\rho_0, \rho_1, \ldots, \rho_{k-1}$
  is a topological sort.
  So $S$ is closed in $G(\Pi)$.
  Furthermore, $\mu = \mu_i$ is exactly given by the successive elimination of
  the rotations $\rho_0, \ldots, \rho_{i-1}$, starting from $\mu_0$.
  So $S$ corresponds to $\mu$.

  Next, we show the correspondence is injective.
  Let $S_1,S_2$ be distinct closed subsets of $G(\Pi)$,
  let matching $\mu_1, \mu_2$ correspond to $S_1, S_2$,
  and without loss of generality take $\rho \in S_1\setminus S_2$.
  Because $\rho$ has been eliminated in $\mu_1$ but not in $\mu_2$,
  any woman $w$ appearing in $\rho$ must prefer their match in $\mu_1$
  to their match in $\mu_2$.
  Thus, $\mu_1$ and $\mu_2$ cannot be the same matching.

  Finally, let $\mu_1, \mu_2\in \L$ correspond to $S_1, S_2 \subseteq G(\Pi)$.
  We have $\mu_1\le\mu_2$ if and only if each woman $w$ 
  does at least as well in $\mu_2$
  as in $\mu_1$. For a fixed woman $w$, this occurs
  if and only if every rotation involving $w$ which appears in $S_1$
  also appears in $S_2$.
  This is equivalent to the condition
  that every rotation which appears in $S_1$ also appearing in $S_2$,
  i.e. $S_1\subseteq S_2$.

  % On the other hand, consider a closed set $S$ of $G(\Pi)$.
  % Consider a topological sort of $S$,
  % say $\rho_0,\ldots, \rho_{i}$,
  % and extend it to a topological sort of the whole $G(\Pi)$,
  % say $\rho_0,\ldots, \rho_{i},\ldots, \rho_{k-1}$.
  % By claim~\ref{claimToplSortsAreChains}, this corresponds to a maximal chain
  % $\mu_0,\mu_1,\ldots,\mu_k$,
  % and $\mu_{i+1}$ is a stable matching which
  % is the elimination of the rotations $\rho_0,\ldots,\rho_{i-1}$.

  % Thus, there is a bijection between the closed subsets of $G$ and the stable
  % matching lattice $\L$.
\end{proof}

\paragraph{Remark:} Because the bijection above respects ordering
(i.e. $\mu_2 \ge \mu_1$ if and only if $S_2 \supseteq S_1$),
the bijection above is actually a \emph{lattice isomorphism}.
So joins and meets in $\L$ correspond to joins and meets in the lattice of 
close subsets of the graph $G(\Pi)$, which is given by
set union and set intersection, respectively.

% \paragraph{Remark:} with some extra work, you can show that this bijection
% even preserves joins and meets, in the sense that the intersection of two sets
% $S$ and $T$ corresponds to the meet of the corresponding matchings.

\section{Efficiently Finding Rotations} \label{sectionAlgo}

  \subsection{A simple example}

% example graphs
  % how to change width and height 
  % \let\oldemph\emph
  % \renewcommand{\emph}[1]{\textbf{\oldemph{#1}}}

\begin{figure}[htp!]
 \begin{minipage}{1.0\textwidth}
  \begin{tabular}[c]{m{6cm} | m{5cm} | m{4cm} }
    \begin{tabular}[h]{c}
    \\
      \begin{tikzpicture}[node distance=2.5cm, semithick, auto, font=\small] 
        \node[state, fill=white] (w1) {$w_1$};
        \node[state, fill=white] (w2) [right of=w1] {$w_2$};
        \path[->] (w1) edge[bend left, above] node {$m_1$} (w2)
            (w2) edge[bend left, below] node {$m_2$} (w1);
        \coordinate (w1') at ($(w1.north) + (-1.4em, +0.8em)$);
        \coordinate (w2') at ($(w2.south) + (+1.4em, -0.8em)$);
        \node [ultra thick, dashed, draw=red, fit= (w1') (w2'), fill=red!20, fill opacity=0.2] (r1) {};   
        \node [yshift=1.5ex, xshift=3.0em, red, font=\small] at (r1.south) {$\rho_1$};
      \end{tikzpicture}
      \\ \\ \\ \\
      \begin{tikzpicture}[node distance=1.75cm, semithick, auto, font=\small, cross/.style={cross out, draw, 
         minimum size=4*(#1-\pgflinewidth), 
         inner sep=0pt, outer sep=0pt}, cross/.default={3.5pt}]
        % If anyone has to edit this, I'm sincerely sorry for the names.
        \node[state, fill=white] (w1) {$w_1$};
        \node[state, fill=white] (w3) [right of=w1] {$w_3$};
        \node[state, fill=white] (w5) [above right of=w3] {$w_4$};
        \node[state, fill=white] (w4) [below right of=w5]  {$w_5$};
        \node[state, fill=white] (w5c) at ($(w1) + (0, -2.4)$) {$w_6$};
        \coordinate (w4') at ($(w4.south) + (+1.4em, -0.8em)$);
        \coordinate (w1') at ($(w1.north) + (-2.0em, 0.9em)$);
        \coordinate (w5c') at ($(w5c.north) + (2.0em, -2.9em)$);

        \node[ultra thick, cross, red] (x) at ($(w1)!0.5!(w3) + (0, 0.4)$) {};
        \node [ultra thick, dashed, draw=gray!70, fit= (w3) (w5) (w4'), fill=gray!20, fill opacity=0.2] (r2) {};
        \node [ultra thick, dashed, draw=red, fit= (w1') (w5c'), fill=red!20, fill opacity=0.2] (r3) {};
        \path[->] (w1) edge[bend left, above left] node {$m_2$} (w3)
                (w3) edge[bend left, above] node {$m_3$} (w5)
                (w5) edge[bend left, below] node {$m_4$} (w4)
                (w4) edge[bend left, below] node {$m_5$} (w3)
                
                (w1) edge[bend left, above] node {$m_2$} (w5c)
                (w5c) edge[bend left, below] node {$m_6$} (w1);

        \node [yshift=-2.0ex, xshift=-3.0em, gray!70, font=\small] at (r2.north) (r2label) {$\rho_2$};
        \node [yshift=2.0ex, xshift=-1.4em, red, font=\small] at (r3.south) (r3label) {$\rho_3$};
      \end{tikzpicture}
    \end{tabular}
    & 
    \begin{tabular}[h]{c}
     
      \begin{tikzpicture}[node distance=1.75cm, semithick, auto, font=\small] 
        % If anyone has to edit this, I'm sincerely sorry for the names.
        \node[state, fill=white] (w2) {$w_2$};
        \node[state, fill=white] (w3) [right of=w2] {$w_4$};
        \node[state, fill=white] (w5) [right of=w3] {$w_5$};
        \node[state, fill=white] (w4) at ($(w3)!0.5!(w5) + (0, -1.4)$) {$w_3$};
        \coordinate (w3') at ($(w3.north) + (-1.2em, 0.9em)$);
        \path[->] (w2) edge[bend left, above] node {$m_1$} (w3)
                  (w3) edge[bend left, above] node {$m_4$} (w5)
                  (w5) edge[bend left, below] node {$m_5$} (w4)
                  (w4) edge[bend left, below] node {$m_3$} (w3)
                  (w3) edge[bend left, above] node {$m_1$} (2.5,1.5);
        \node [ultra thick, dashed, draw=red, fit= (w3') (w5) (w4), fill=red!20, fill opacity=0.2] (r2) {};
        \node [yshift=2.0ex, xshift=+3.0em, red, font=\small] at (r2.south) {$\rho_2$};
      \end{tikzpicture}
      \\ \\ \\ \\ 
      \begin{tikzpicture}[node distance=2.5cm, semithick, auto, font=\small, rect node/.style={draw, minimum size=1cm}, scale = 1] \label{dependency} 
        \node[scale=1.3] (r1) {\large $\rho_1$};
        \node[scale=1.3] (r2) at (2,0) {\large $\rho_2$};
        \node[scale=1.5] (r3)  at (1, 2) {$\rho_3$};
        \path[->] (r3) edge[above left] node {Type 1} (r1)
                  (r3) edge[above right] node {Type 2} (r2);
      \end{tikzpicture} 
      \\ \\ \\
    \end{tabular}
    &
    \begin{tabular}[c]{l}
      \begin{tabular}{c | c c  c c }
        $m_1$ & $w_1$ & \multicolumn{1}{c|}{$w_2$} & $w_4$ \\
        $m_2$ & $w_2$ & \multicolumn{1}{c|}{$w_1$} & $\mathbf{w_3}$ & $w_6$\\
        \cline{2-3}
        $m_3$ & $w_3$ & \multicolumn{1}{c|}{$w_4$} & &\\
        $m_4$ & $w_4$ & \multicolumn{1}{c|}{$w_5$} & &\\
        $m_5$ & $w_5$ & \multicolumn{1}{c|}{$w_3$} & &\\
        \cline{2-3}
        $m_6$ & $w_6$ & $w_1$
      \end{tabular}
      \\ \\
      \begin{tabular}{c | c c c c}
        $w_1$ & \multicolumn{1}{c|}{$m_6$} & $m_2$ & $m_1$ \\
        \cline{2-2}
        $w_2$ & $m_1$ & $m_2$ \\
        \cline{2-4}
        $w_3$  & $m_5$ & {$\mathbf{m_2}$} & $m_3$ \\
        $w_4$ & {$m_3$} & $m_1$ & $m_4$ \\
        $w_5$ & {$m_4$} & $m_5$ \\
        \cline{2-3}
        $w_6$ & $m_2$ & $m_6$
      \end{tabular}
      \\ \\
      \begin{tikzpicture}[scale=1]
        \node (zero) at (0,-1) {$(123456)$};
        \node (l) at (-1,0) {$({\bf21}3456)$};
        \node (r) at (1,0) {$(12{\bf 453}6)$};
        \node (b) at (0,1) {$(214536)$};
        \node (one) at (0,2) {$(2{\bf6}453\bf{1})$};
        \draw (zero) -- (l) -- (b) -- (one);
        \draw (zero) -- (r) -- (b);
      \end{tikzpicture}

    \end{tabular}
  \end{tabular}
  
  \caption{
    A sample instance with rotation poset and stable matching lattice.
    % The rotations are highlighted through borders in the preference lists,
    % and the type 2 predecessor relation is caused by the boldfaced entries.
  }
  \label{fig:posetExample}
  \end{minipage}
\end{figure}

 We start with an example of how to use the facts proven above. Let men and
 women's preference list be as illustrated in the right column of
 Figure~\ref{fig:posetExample}. 
 The borders in the table highlight the preferences which cause the different
 rotations to form, and the boldfaced entries correspond to a type 2 predecessor
 relationship.
 
 % \textbf{Identifying rotations} 
 For the sake of illustration, let us denote a
 stable matching $\mu$ as $(\mu(m_1) \mu(m_2) ... \mu(m_n))$ (e.g. $(123456)$
 means every $m_i$ is matched to $w_i$). 
 It is easy to see that $\mu_0 = (123456)$ 
 is the man-optimal stable matching in this example. 

 Now, imagine running $MPDA(P_{w_1}(\mu_0))$ (where woman $1$ truncate her list
 just above $m_1$). 
 The rejection chain is $(w_1, m_1, w_2, m_2, w_1)$,
 and each woman recieves at most one proposals from a man she prefers to her
 match in $\mu_0$, so by claim~\ref{claimRotationFromMPDA} we've
 discovered the rotation
 $\rho_1 = [(w_1, m_1) , (w_2, m_2)]$.
 % The $MPDA$ algorithm starts from the tentative matching
 % $(123456)$. Next $w_1$ rejects its original stable partner $m_1$, who then
 % propose to $w_2$. $w_2$ accepts the proposal and rejects $m_2$. $m_2$ now
 % propose to its next desired partner $w_1$, who accepts the proposal. Now the
 % algorithm terminates, and we have found a new stable matching $\mu_1 =
 % (213456)$. Since in $MPDA(P_{w_1}(\mu_0))$, no women receives more than one
 % proposal from men who she strictly prefers to her match in $\mu_0$. By
 % Claim~\ref{claimCoveringCondition}, $\mu_1$ covers $\mu_0$, and thus we have
 % discovered a rotation cause the change from $\mu_0$ to $\mu_1$: 
 
 Next we run $MPDA(P_{w_2}(\mu_1))$.
 The rejection chain is $(w_2, m_1, w_4, m_4, w_5, m_5, w_3, m_3, w_4, m_1)$.
 As man $m_1$ failed to find a new partner,
 we know that the result cannot be stable
 (and indeed by claim~\ref{claimCanGoUpIfNotWosm}, $m_1$ is the best
 stable partner of $w_2$).
 However, we still ``learned something'' along the way:
 $w_4$ accepted a proposal from both $m_1$ and $m_3$,
 and the rejection chain included her twice.
 If we had started the rejection chain from $w_4$,
 we could have actually gotten $MPDA(P_{w_4}(\mu_1)) = (214536)$,
 a new stable matching which differs from $\mu_1$ by the rotation
 $\rho_2 = [(w_4, m_4), (w_5, m_5), (w_3, m_3)]$.
 Thus, we can note this rotation $\rho_2$ and also the fact that
 $w_2$ has reached her best stable partner.

 Because of the pair $(m_2, w_1)$, we know $\rho_1$
 is a type 1 predecessor of $\rho_3$.
 Indeed, $\rho_1$ must be eliminated before $\rho_3$,
 because $w_1$ must be paired to $m_2$ before $\rho_3$ could possible 
 happen\footnote{
   For an example of what would happen if you try to eliminate a type 1
   predecessor before its successor, see figure~\ref{fig:chainVSDepend}.
 }.

 % $w_2$ rejects $m_1$, who then proposes to $w_3$. 
 % $w_3$ rejects $m_3$, who proposes to $w_5$, $w_5$ rejects
 % $m_4$, who proposes to $w_3$. Now $w_3$ has received $2$ proposals better than
 % her match in $\mu_1$, thus by Claim~\ref{claimCoveringCondition}, the result of
 % running $MPDA(P_{w_2}(\mu_1))$ couldn't be a cover of $\mu_1$. However, observe
 % that if we had simply run $MPDA$ that truncated $w_3$'s list, no woman will
 % receive two proposals from men who she strictly prefers to her match in
 % $\mu_1$. Indeed, 
 %  which is a cover to $\mu_1$. We have discovered another rotation

 Finally, the rejection chain of $MPDA(P_{w_1}(\mu_2))$ is simply
 $(w_1, m_2, w_6, m_6, w_1)$.
 This corresponds to the last rotation $\rho_3 = [(w_1, m_2), (w_6, m_6)]$.
 Because of the pair $(m_2, w_3)$,
 we know $\rho_2$ is a type 2 predecessor of $\rho_3$.
 Why does this mean that $\rho_2$ must be eliminated before $\rho_3$?
 Consider trying to eliminate $\rho_3$ before $\rho_2$,
 for example, by running $MPDA(P_{w_1}(\mu_1))$.
 The rejection chain is 
 $(w_1, m_2, w_3, m_3, w_4, m_4, w_5, m_5, w_3, m_2, w_6, m_6, w_1)$,
 and $w_3$ got two proposals and accepted them both.
 The rejection chain between those two proposals corresponds to $\rho_2$.
 Thus, trying to eliminate $\rho_3$ \emph{triggered} the elimination of
 $\rho_2$, even though none of the agents in $\rho_3$ appear in $\rho_2$.

 % First note that, in the run of $MPDA(P_{w_1}(\mu_2))$

 % Note, however, that 
 % $w_1$ rejects $m_2$, who
 % proposes to $w_3$. Since $w_3$ is paired with $m_4$ in $\mu_2$, who she likes
 % better than $m_2$, $w_3$ rejects $m_2$. $m_2$ then moves on to propoes to
 % $w_6$, who rejects $m_6$, who proposes to $w_1$, $w_1$ accepts such a proposal,
 % and $MPDA$ terminates with stable matching $\mu_3 = (265341)$, and we have

 % \textbf{Construct the Precedence Graph} Notice that by
 % Claim~\ref{claimChainsAreToplSorts}. $\rho_1, \rho_2, \rho_3$ are all possible
 % rotations for this example. Moreover, $\rho_1 \rho_2 \rho_3$ is a valid
 % topological order for the rotations.  

 % Now let us identify the predecessor relationship between the rotations using
 % Definition~\ref{def:typedEdges}. $\rho_1$ resulted in the pairing $(m_2, w_1)$,
 % which appears in $\rho_3$, thus $\rho_1$ is a type 1 predecessor of $\rho_3$.
 % On the other hand, $\rho_2$ and $\rho_3$ involves disjoint agents. However, in
 % $\rho_2$, $w_3$ changes its stable partner from $m_3$ (which is ranked below
 % $m_2$) to $m_4$ (which is ranked above $m_2$). Thus $\rho_2$ must happen in
 % order for $m_2$ to be rejected from $w_3$ and propose to $w_6$ in $\rho_3$.
 % Intuitively, we can see this by considering $MPDA(P_{w_1}(\mu_1))$ -- before we
 % could complete $\rho_3$,  we would trigger the elimination of $\rho_2$ (as
 % $m_2$ would propose to $w_3$). We conclude that $\rho_2$ is a type 2
 % predecessor of $\rho_3$. 

\subsection{Algorithm description}

  Given what we know, the high-level interpretation of algorithm~\ref{algMosmToWosm} is fairly intuitive.
  The algorithm starts from the man-optimal stable outcome $\mu_0$.
  Along the way, it maintains a matching $\tilde\mu$, which is always stable,
  and initially set to $\mu_0$.
  From that point on, our goal is to make the smallest possible changes upward
  in the lattice $\L$, i.e. to make it so that each new value of $\tilde\mu$
  covers the old value.

  The algorithm works by picking any woman $\hat w$, and simulating the
  proposals and rejections made in $MPDA(P_{\hat w}(\tilde\mu))$ by having $\hat w$
  ``divorce'' her husband and continuing to run deferred acceptance.
  By claim~\ref{claimCoveringCondition}, we get a new matching which covers
  $\tilde\mu$ (i.e. we find a rotation) if and only if
  no woman receives multiple proposals from a man she prefers to her match in
  $\tilde\mu$. We cannot efficiently guarantee that this will hold for the 
  $\hat w$ that we pick. However, we can get around this 
  issue by using the following trick:
  when a woman considers a new proposal, she decides whether to accept 
  as if she were still matched to her partner in $\tilde\mu$,
  even if she has already accepted a proposal that puts her above that 
  man\footnote{
    Again, figure~\ref{fig:chainVSDepend} provides an example of why this
    is necessary. There, if $w_2$ compared $m_3$ against her match in
    $\mu$ (namely $m_1$) instead of in $\tilde\mu$ (namely $m_2$)
    then we would not find two distinct rotations $\rho_1$ and $\rho_2$.
    Instead, we would find the list $[(w_1,m_1), (w_2,m_2), (w_3,m_3)]$,
    which is not a rotation because eliminating it from $\mu_0$
    does not result in a matching which covers $\mu_0$,
    i.e. we would \emph{miss} some stable matchings in between
    $\mu_0$ and the next matching found in $\tilde\mu$.
  }.
  Then, whenever a woman $w^*$ receives a second
  proposal from a man she prefers to her match in $\tilde\mu$,
  we pause for a minute.
  We consider stable matching corresponding to the execution of $MPDA$
  \emph{between these two proposals which $w^*$ receives}.
  This corresponds to running $MPDA(P_{w^*}(\tilde\mu))$ and getting
  a matching $\mu'$ which covers $\tilde\mu$.
  % to a covering relation from $\tilde\mu$, and thus a rotation.
  So we reset $\tilde\mu$ to the equivalent of $\mu'$ and
  record the corresponding rotation in the graph $G$.

  To prevent the algorithm from doing unnecessary repeated work,
  and to efficiently keep track of when we reach the woman-optimal outcome,
  we maintain a set $S$. Whenever a rejection chain starting with $\hat w$ ends
  in an unmatched woman or unmatched man, we know by 
  claim~\ref{subclaimWomanOptPartner}
  that $\hat w$ cannot receive a better stable match. 
  Because we eliminate all cycles along
  the way, every woman after $\hat w$ on the rejection chain 
  would also trigger this same event.
  Thus, each woman on the current rejection chain has reached their optimal match,
  and can be added to $S$.
  The algorithm runs until $S$ is all of $\W$.

  % REVISE PARAGRAPH THIS NEXT PARAGRAPH:::
  Along the way, we keep track of type 1 and type 2 predecessor using a
  straightforward application of their definition.
  For the type 1 predecessors, it suffices to look at which rotations move the
  men, and create predecessor relations between each successive rotations moving
  the same man.
  % More specifically, we use a variable $pred^1_m$ for each man
  % $m$, which stores the most recently found rotation moving $m$.
  For the type 2 predecessors, intuitively we detect under which conditions
  eliminating $\rho_2$ would \emph{force} the elimination of $\rho_1$, because
  some woman who appears in $\rho_1$ would have accepted a proposal that
  a man $m$ makes as he moves through $\rho_2$.
  % , triggering the divorce chain of $\rho_1$.
  To implement this, we label the men on each woman's preference list,
  putting a label $\rho$ for each woman $w$ in $\rho$
  and each man $m$ such that $\rho$ moves $w$ from below $m$ to above $m$.
  % on the men in each individual
  % woman's preference list whenever $\rho$ moves the woman above those men,
  Then, we accumulate the corresponding rotations as the men make proposals
  (i.e. as a man $m$ gets rejected by some woman $w$,
  we label $m$ with any rotation that moved $w$ from below to above $m$).

  \begin{algorithm}
    \caption{Finding the Rotations and Predecessor Graph}\label{algMosmToWosm}
    \hspace*{\algorithmicindent} \textbf{Input}
      A stable matching instance with men $\M$ and women $\W$ \\
    \hspace*{\algorithmicindent} \textbf{Output}
      A direct graph $G$ on the rotations of the instance
  \begin{algorithmic}[1]
    \State Let $\tilde\mu$ be the man-optimal stable matching from $MPDA$;
      Let $\mu$ be a copy of $\tilde\mu$
    \State For each man $m$, let $R(m)$ be the set of women who rejected $m$
    during the run of $MPDA$.
    \State Let $S$ be the set of unmatched women in $\mu$
    % \State\Comment $S$ will be those women who have upgraded to their optimal stable match
    \State Set $pred^1_m = \emptyset$ for each man $m$ \label{lineInitType1Preds}
      \Comment Store the most recent rotation moving $m$
    \State For each woman $w$ and each man $m$ on $w$'s list, label $m$ 
      in the list with $\emptyset$ \label{lineInitType2Preds}

      \ \Comment Store rotations moving $w$ from \emph{below} to \emph{above} men
    \While { $S \ne \W$ }
      % \State Store $\tilde\mu \leftarrow \mu$
      % \Comment $\tilde\mu$ stores the most recent \emph{stable} match encountered
      \State Pick any $\hat w\in \W\setminus S$
        \label{linePickWomanRejecting}
        \Comment These selections define an execution sequence
      \State Let $m = \mu(\hat w)$; let $V = [ (\hat w, m) ]$
      \State Set $\mu(\hat w) = \emptyset$ and add $\hat w$ to $R(m)$
        \Comment $\hat w$ rejects $m$
      \State Let $pred_{m}^2 = \emptyset$ \label{lineInitAccumedType2PredsWHat}
        \Comment Keep track of predecessor rotations

      % \While { $m \ne \emptyset$ }
      \While { $V \ne [\ ]$ }
        \State Let $w \leftarrow$ \Call{NextAcceptingWoman}{$m$}

        \If { $w=\emptyset$ or $w\in S$ }
        \label{lineFirstDeciciveCase}
          \Comment No stable matching exists rotating partners in $V$
          \State Restore $\mu \leftarrow \tilde\mu$
          \State Add all women in $V$ to $S$; Set $V = [\ ]$
            \label{lineAddFinalizedWomen}
          % \State Set $m = \emptyset$

        \ElsIf {$w$ appears in $V$}
          \Comment New rotation found \label{lineNewRotationFound}
          \If { $w \ne \hat w$ }
            % \State Set $m_{float} \leftarrow \mu(w) $
            \State Swap $\mu(w) \leftrightarrow m$
            \Comment $w$ does not reject $m$ yet
            (see Claim~\ref{claimAlgoRuntime}, item~\ref{itemReproposal})
          \EndIf
          \State \Call{BuildNewRotation}{$w$}
          \label{lineBuildRotCalled}

        \Else % {$w$ is not in $V$ or $w\notin S$} 
          \Comment Continue building rejection chain $V$
          \State Append $(w,\mu(w))$ to the end of $V$
          \State Swap $\mu(w) \leftrightarrow m$; Add $w$ to $R(m)$
            \Comment $w$ rejects $\mu(w)$
          \State Let $pred_{m}^2 = \emptyset$
          \label{lineInitAccumedType2PredsMid}
            \Comment Keep track of predecessor rotations
        \EndIf
        % \State Set $\nu(w) \leftarrow \nu(w)+1$ and $t \leftarrow t+1$

      \EndWhile
    \EndWhile

    \ %
      % \State \Comment If $|\M|<|\W|$, then $m$ has had
      %   his proposal accepted by \emph{some} woman $w$
      % \ ( Now $m >_w \tilde\mu(w)$ (or $w = \emptyset$) )
    % \State // 
    \Function{NextAcceptingWoman}{$m$}
      \State Let $w$ be $m$'s most preferred woman not in $R(m)$ (or $\emptyset$)

      \While { $w\ne \emptyset$ and $\tilde\mu(w) >_w m$ }
        \Comment while $w$ has received a better \emph{stable} match
        \State Add $w$ to $R(m)$
        \Comment $w$ rejects $m$
        \State If $w$ labeled $m$ with $\rho$, add $\rho$ to $pred_m^2$
          \label{lineAccumType2Preds}

        \ \Comment If rotation $\rho$ moved 
          $w$ above $m$, then $\rho$ must precede the current rotation
        \State Update $w$ to $m$'s top woman not in $R(m)$ (or set $w$ to $\emptyset$)
      \EndWhile
      \State Return $w$

    \EndFunction

    \ %
    \Function{BuildNewRotation}{$w$}
      \State Suppose $V = [(w_1,m_1), (w_2,m_2), \ldots,
        (w_k,m_k)]$ with $w = w_\ell$ for some $\ell \le k$
      \State Update $\tilde\mu(w_i) = \mu(w_i)$ for each $i=\ell,\ell+1,\ldots, J$
        \label{lineElimRotationInMatch}
        \Comment Eliminate new rotation $\rho^*$
      \State Remove $\rho^* = [(w_\ell,m_\ell),\ldots,(w_k,m_k)]$ from $V$
        \label{lineRhoStar}
      \State Add rotation $\rho^*$ with type 1 predecessors
        $\bigcup_{i=\ell}^k pred_{m_i}^1$
        \label{lineType1PredsAdded}
        \\ \qquad and type 2 predecessors 
        $\bigcup_{i=\ell}^k pred_{m_i}^2 $ to $G$
        \label{lineType2PredsAdded}
        \label{lineFinalizePredecessors}
      \For {each $i=\ell,\ldots,k$ }
        Set $pred_{m_i}^1 = \rho^*$
        \label{lineNewType1Pred}
      \EndFor
      \For {each $i=\ell,\ldots,k$, and for each man \\
      \qquad\qquad $m$ between 
      $m_i$ and $m_{i-1}$ (or $m_\ell$ and $m_k$ if $i=\ell$) on $w_i$'s list}: \label{lineNewType2MarkStart}
        \State $w_i$ labels $m$ with $\rho^*$
      \EndFor \label{lineNewType2MarkEnd}
      % \For {each man $m$ between 
      %   $m_\ell$ and $m_k$  on $w_k$'s list}:
      %   \State $w_k$ labels $m$ with $\rho^*$
    \EndFunction
  \end{algorithmic}
  \end{algorithm}

    \subsection{Proof of correctness}

  Our main procedure is given in full detail as algorithm~\ref{algMosmToWosm}.
  An execution sequence of algorithm~\ref{algMosmToWosm} is defined by the
  choice of rejections that the algorithm triggers, more specifically, by each
  choices of the woman $\hat w$ every time we reach line~\ref{linePickWomanRejecting}.
  As in the case of $MPDA$, we will see that the final result is independent of
  these choices and that the total amount of work done is $O(n^2)$.

  % Essentially what will happen is that our analysis will imply that \emph{some}
  % execution of algorithm~\ref{algMosmToWosm} will result in any stable matching
  % you like. Moreover, the different possibilities for execution sequences
  % are essentially ``stored'' in a format which we'll call the \emph{rotation
  % poset}. The result is, after one ``pass'' through a small subset of the stable
  % matching lattice, \emph{every} stable matching will be represented by a
  % compact data structure.

  \begin{claim}\label{claimAccumulateWomenAtTop}
    At any step of algorithm~\ref{algMosmToWosm},
    every woman in the set $S$ has reached her optimal stable match.
  \end{claim}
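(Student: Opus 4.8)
The plan is to prove the statement as an invariant maintained throughout the execution of algorithm~\ref{algMosmToWosm}, by induction on the algorithm's steps. I will use the converse half of claim~\ref{subclaimWomanOptPartner} as the main engine: if $MPDA(P_w(\tilde\mu))$ fails to return a stable matching, then $w$ is already matched to her optimal stable partner in $\tilde\mu$. I will also rely on the fact that $\tilde\mu$ is always stable, which holds because each modification of $\tilde\mu$ eliminates a rotation exposed in the current $\tilde\mu$ (claim~\ref{claimRotationFromMPDA}).

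For the base case, $S$ is initialized to the set of women unmatched in the man-optimal matching $\tilde\mu$. By the rural hospital theorem (claim~\ref{claimRuralDoctors}), these are exactly the women unmatched in \emph{every} stable outcome, so being unmatched is their optimal (indeed only) stable match, and the invariant holds. I will then record two structural observations that make the induction clean: $S$ only ever grows, and $\tilde\mu$ is modified only inside \textsc{BuildNewRotation}, where only the women appearing in the newly found rotation are re-matched. Since a woman already in $S$ can never be appended to the rejection chain $V$ (when \textsc{NextAcceptingWoman} returns a woman in $S$ the algorithm takes the failure branch on line~\ref{lineFirstDeciciveCase} instead of extending the chain), no woman in $S$ ever appears in a rotation. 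Hence $\tilde\mu(w)$ is frozen once $w$ enters $S$, and it suffices to verify that each woman is at her optimal stable match at the moment she is added to $S$.

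The crux is therefore line~\ref{lineAddFinalizedWomen}, where the whole batch $w_1,\dots,w_k$ of women in the current chain $V$ (with $w_1=\hat w$ and $m_i=\tilde\mu(w_i)$) is added because \textsc{NextAcceptingWoman} returned either $\emptyset$ or a woman already in $S$. For each $i$ I will argue that $MPDA(P_{w_i}(\tilde\mu))$ fails to return a stable matching, so claim~\ref{subclaimWomanOptPartner} gives that $w_i$ is at her optimal stable partner. The key sub-step is faithfulness of the simulation: because the failure branch rather than the rotation branch on line~\ref{lineNewRotationFound} was taken, no woman received two proposals she prefers to her $\tilde\mu$-match along this chain, so the algorithm's ``compare against $\tilde\mu$'' trick coincides with a genuine run of $MPDA(P_{w_1}(\tilde\mu))$; moreover, exactly as in the $(\Leftarrow)$ direction of claim~\ref{claimCoveringCondition}, the rejection chain started from any later $w_i$ coincides with the suffix $(w_i,m_i,w_{i+1},\dots,w_k,m_k)$ and terminates the same way. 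If the chain ends because the last free man exhausts his list (reaching $\emptyset$) or reaches a woman of $\Wsingle$, then claim~\ref{claimRejectionChainEquiv} shows the set of matched agents changes, so the outcome is not stable. If it ends at a woman $w\in S$, I invoke the inductive hypothesis that $w$ is already at her optimal stable partner: continuing would force $w$ to accept a man she strictly prefers to $\tilde\mu(w)$, so any stable matching reached would place $w$ strictly above her optimal stable partner, which is impossible; hence $MPDA(P_{w_i}(\tilde\mu))$ cannot be stable in this case either.

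I expect the main obstacle to be precisely that faithfulness sub-step: carefully justifying that the instrumented simulation (comparison against $\tilde\mu$ rather than the running match $\mu$, the short-circuit upon reaching a woman in $S$, and the fact that $\tilde\mu$ may already have been advanced by rotations found earlier in the same outer iteration) genuinely traces the honest rejection chain of $MPDA(P_{w_i}(\tilde\mu))$ for the \emph{current} $\tilde\mu$, and that these chains agree for every starting woman $w_i\in V$. Once the faithfulness and the ``same failure from every $w_i$'' property are nailed down, the optimality of each $w_i$ follows immediately from claim~\ref{subclaimWomanOptPartner}, and the invariant is preserved.
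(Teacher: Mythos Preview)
Your proposal is correct and follows essentially the same approach as the paper: induct on the algorithm's progress, handle the base case via the rural hospital theorem, observe that after any call to \textsc{BuildNewRotation} every woman remaining in $V$ has received at most one proposal she prefers to her $\tilde\mu$-match, deduce that the rejection chain from each $w_j\in V$ is the suffix of the chain from $\hat w$, and conclude via claim~\ref{subclaimWomanOptPartner} that each $w_j$ is at her optimal stable partner. You are in fact slightly more explicit than the paper in two places: you record that $\tilde\mu(w)$ is frozen once $w\in S$ (so the invariant cannot later be broken), and you spell out why termination at a woman $w\in S\cap\Wmatched$ forces instability by invoking the inductive hypothesis, whereas the paper simply asserts that the resulting matching is unstable.
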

  \begin{proof}
    We prove this claim by induction on the number of iterations (of the outer
    loop in Algorithm~\ref{algMosmToWosm} from line 7 to line 26) the algorithm
    has run. Let $V_i, \tilde\mu_i, S_i$ denote the value of $V, \tilde\mu, S$
    at the end of iteration $i$ respectively. 
    % Notice that the only time $S$
    % changes is at the end of an iteration in line~\ref{lineAddFinalizedWomen}.
    % Thus in order to prove
    % this claim, it suffices to show that for all $i \geq 0$, every woman in
    % $S_i$ has reached her optimal stable match in
    % $\mu_i$.  

    Firstly, $S_0$ is the set of all unmatched women in MOSM. All women in $S_0$
    has already reached their optimal match by the rural hospital
    theorem~\ref{claimRuralDoctors}.

    Next, assume for $k \leq i$, every woman in $S_k$ has reached her optimal
    stable match at the end of iteration $k$. If $S_{i+1} = S_i$, then the same
    must hold for $S_{i+1}$. If $S_{i+1} \neq S_i$, then at the end of iteration
    $i+1$ the algorithm must have entered the if branch on line 
    \ref{lineFirstDeciciveCase}, which adds
    all women in $V$ to $S$ in line~\ref{lineAddFinalizedWomen}. Let $w(V_i)$ be
    the set of all woman in $V_i$. Then $S_{i+1} = S_{i} \cup w(V_{i+1})$. We
    claim that every woman in $w(V_{i+1})$ must have reached their optimal
    stable matching in $\tilde\mu_{i+1}$. 

    Observe that at any point where a woman $w$ receives two proposals from men
    she prefers to her match in $\tilde\mu$, the algorithm enters the if branch
    on line 20, where subroutine {\sc BuildNewRotation} updates $\tilde\mu$.
    After the update, all women in $V$ have received at most one proposal from
    men she prefers to her match in $\tilde\mu$. Thus only the last woman in the
    rejection chain could have received more than one proposal from men she
    prefers to her in $\tilde\mu_{i+1}$. However, the last woman is either
    $\emptyset$ or in $S_i$, and is never added to $V_{i+1}$. Therefore every
    woman in $w(V_{i+1})$ has only received one proposal from men she prefers to
    her match in $\tilde\mu_{i+1}$. 
    
    Let $V_{i+1} = [(w_1, m_1), (w_2, m_2) ... (w_t, m_t)]$. Since $V$ represent
    the rejection chain in $MPDA(P_{w_1}(\tilde\mu_{i+1}))$, we know that
    $MPDA(P_{w_1}(\tilde\mu_{i+1}))$ results in an unstable matching. Moreover,
    since all woman $w_j \in w(V_{i+1})$ only receives one proposal (from men
    $m_{j-1}$) that she prefers to her match in $\tilde\mu_{i+1}$,
    $MPDA(P_{w_j}(\tilde\mu_{i+1}))$ must have the rejection chain $[(w_j, m_j),
    ... (w_t, m_t)]$, and also result in an unstable matching. By
    Claim~\ref{claimCanGoUpIfNotWosm}, $w_j$ is pair with her optimal stable
    partner in $\tilde\mu_{i+1}$ already.

    % This sequence of rejections is the same as running
    % $MPDA(P_{\hat w}(\tilde\mu))$.
    % By~\ref{claimLatticeExploration}, part~\ref{subclaimWomanOptPartner},
    % $\hat w$ has reached her best stable partner.
    % Moreover, Then for any woman $w$ in $V$, the set of proposals would also
    % lead to $MPDA(P_{w}(\tilde\mu))$ resulting in unstable, thus $w$ has
    % gotten best.
    % Thus, induction.
    % TODOTODO: USE REAL WORDS.

    % If $w=\emptyset$ (i.e. the current man $m$ could not find a new partner)
    % then each $w$ in $V$ has reached her optimal match by CLAIMYO --
    % prove that if truncate and result isn't stable then you're at your optimal.
    % By HERES A CLAIM, for each woman $w\in S$, if $w$ rejected her current
    % partner, then the set of matched agents would change or something -- the
    % last claim needs to be if and only if.
    % But, because each rejection in $V$ triggered the next rejection in $V$,
    % this means that for each woman in $V$, if that woman rejected her current
    % partner then the match would become unstable OR HOWEVER YOU WANT TO
    % PHRASE.
  \end{proof}

  \begin{claim} \label{claimAlgoRuntime}
    Algorithm~\ref{algMosmToWosm} terminates and runs in $O(n^2)$ time.
    % Moreover, the number of rotations found is at most
    % ${n \choose 2}$, and the total number of predecessor relations
    % formed is $O(n^2)$.
  \end{claim}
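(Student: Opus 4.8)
The plan is to bound the total work by the total length of all preference lists, which is $O(n^2)$, and to deduce termination from the finiteness of this work. The single most important fact to establish first is a \emph{monotone rejection invariant}: over the \emph{entire} execution (across all outer iterations, and in particular across every $\mu\leftarrow\tilde\mu$ restore), no man is ever rejected by the same woman twice. Equivalently, each set $R(m)$ only grows and every insertion into it is of a genuinely new woman. This invariant is what makes persisting the $R(m)$ sets across restores safe and efficient.

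To prove the invariant, I would first record that the algorithm only moves \emph{up} the lattice: each call to \textsc{BuildNewRotation} replaces $\tilde\mu$ by the elimination of a rotation, which covers it, so each woman's match $\tilde\mu(w)$ is non-decreasing in $\succeq_w$ throughout (this is exactly what is used in claim~\ref{claimAccumulateWomenAtTop}). A woman is inserted into some $R(m)$ in only three places: when $\hat w$ divorces her husband, inside the rejection loop of \textsc{NextAcceptingWoman} (where $w$ rejects $m$ precisely because $\tilde\mu(w)\succ_w m$), and in the ``extend'' branch of the inner loop. In each case the woman $w$ is, at that moment, matched in $\tilde\mu$ to a partner she weakly prefers to the rejected man $m$; since $\tilde\mu(w)$ only improves, $m$ remains strictly below her stable match forever after, so \textsc{NextAcceptingWoman} (which skips all of $R(m)$) never lets $m$ propose to $w$ again, and $w$ is never re-inserted into $R(m)$. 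The delicate case is the ``$w$ appears in $V$'' branch, where the swap $\mu(w)\leftrightarrow m$ is performed but $m$ is deliberately \emph{not} rejected (the re-proposal point flagged in the algorithm); here I would verify that withholding this rejection is exactly what keeps $m$ available to continue the chain without ever violating the single-rejection bound.

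With the invariant in hand the accounting is routine. The total number of insertions into all the $R(m)$ is at most $\sum_m |\mathrm{list}(m)| = O(n^2)$. Every proposal made in \textsc{NextAcceptingWoman} is either a rejection (charged to a distinct insertion into some $R(m)$) or the single returned acceptance, so the total time spent scanning preference lists, including the initial $MPDA$, is $O(n^2)$. Each returned acceptance triggers exactly one inner-loop step, which is a dead-end, a new rotation, or an ``extend''; extends are charged to insertions into $R(m)$, dead-ends add at least one woman to the monotone set $S$ (so there are at most $|\W|$ of them, as a woman in $S$ is never reappended to $V$), and new rotations number at most $|\Pi|\le\binom{n}{2}$ by claim~\ref{claimStablePartnerRotations}. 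Hence the inner loop executes $O(n^2)$ times in total.

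Finally I would bound the bookkeeping inside \textsc{BuildNewRotation}. Summed over all rotations, recording the type-$1$ predecessors and eliminating the rotation in $\tilde\mu$ costs $\sum_{\rho}|\rho|$, which is at most the number of stable pairs and hence $O(n^2)$. The type-$2$ labelling scans, for each pair $(w_i,m_i)$ in a rotation, the men lying between two \emph{consecutive} stable partners of $w_i$ on her list; since these segments tile $w_i$'s list and are therefore disjoint across the rotations containing $w_i$, the total labelling cost is at most $\sum_w |\mathrm{list}(w)| = O(n^2)$. Because $S$ grows monotonically and each outer iteration makes strictly positive progress (at least one fresh insertion into some $R(m)$, or an enlargement of $S$), the algorithm terminates, and combining the bounds above the total running time is $O(n^2)$. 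The main obstacle is the monotone rejection invariant, and in particular checking that keeping the $R(m)$ sets across the restore operations never suppresses a proposal that a later rejection chain genuinely needs.
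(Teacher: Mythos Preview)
Your proposal is correct and rests on the same central fact the paper uses: each woman enters each $R(m)$ at most once over the whole run, so the total number of rejections is $O(n^2)$ and everything else is charged against that. The paper simply asserts this (item (1) of its proof) without justification; you instead argue it carefully, in particular addressing why the persisted $R(m)$ sets survive the restore $\mu\leftarrow\tilde\mu$, which is the genuinely delicate point. Your treatment of proposals also differs slightly: the paper handles the one ``re-proposal per rotation'' as an explicit exception and bounds it by $|\Pi|$, whereas you absorb it into the rejection bound (since the re-proposal of $m_{\ell-1}$ to $w_\ell$ results in a \emph{first} rejection of $m_{\ell-1}$ by $w_\ell$). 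Both routes give $O(n^2)$; yours is a bit more amortized, the paper's a bit more event-by-event, but the substance is the same.
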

  \begin{proof}
    % First, we show that there are at most $n(n-1)/2$ rotations.
    % Observe that a pair $(m,w)$ cannot appear in more than one rotation,
    % because men will never propose again to a woman that rejected them,
    % and $w$ must reject $m$ for $(m,w)$ to appear in a rotation.
    % ((HOW TO PROVE THIS AT THIS STAGE IN TIME?? Make sure this is sound
    % with what we know so far. Well kinda obvious -- one moves
    % up the list and the other moves down -- far away from each other)).
    % However, a man $m$ never appears in a rotation with his partner in the
    % woman-optimal stable outcome, so he can appear in a rotation with at most
    % $n-1$ different women. Finally, observe that at least two pairs must be
    % present in each rotation, giving us the desired upper bound.
    Thoughout the algorithm, at each time step, one of the following events happen: (1) a woman rejects a man $m$, (2) a man propose to a woman $w$, either after being rejected or the man repropose after a rotation has been built, (3) A new rotation $\rho$ is extracted, (4) an earlier rotation is added as type $1$ or type $2$ predecessor of new rotation $\rho$ and (5) women are added to $S$. Each event above takes constant time. We show below that the total number of above events is $O(n^2)$.  
    
    \begin{enumerate}[1)]
        \item 
        Woman $w$ can only reject man $m$ once. Thus in total event (1) happen $O(n^2)$ times.
        \item \label{itemReproposal}
        We'd like to say that (as in the case of deferred acceptance) every man
        proposes to every woman at most once. However, there is an important
        exception to this: when a man
        $m$ is tentatively matched to a woman in $\mu$ (i.e. $m$ is in $V$),
        but the woman receives a new proposal which she accepts,
        then the man must propose to the woman again after the corresponding
        rotation has been created. However, this can happen at most once for
        every rotation, and by claim~\ref{claimStablePartnerRotations}, there
        are $O(n^2)$ rotations. So the total number of proposals made is still
        $O(n^2)$.
        \item 
        There are $O(n^2)$ rotations, and each rotation is found at most once. 
        \item
        Throughout the algorithm, a rotation $\rho$ is added to $pred_m^1$ only when $\rho$ changes $m$'s stable partner to some woman $w$. $m$ can only have $O(n)$ different stable partners, and there are only $O(n)$ men. Thus a rotation can only be added as a type $1$ predecessor $O(n^2)$ times. Similarly, we can count the number of type 2 edges. A rotation $\rho$ is added to $pred_m^2$ only when $w$ labeled $m$ with $\rho$, and $w$ rejects $m$. Moreover, each man on $w$'s list receives at most one label. So a rotation is added as a type 2 predecessor $O(n^2)$ times. 
        \item 
        A fixed woman $m$ can only be added to $S$ once. Thus event (5) occur $O(n)$ times.
    % WHAT TO SAY ABOUT TYPE ONE ROTATIONS???
    
    \end{enumerate}

    As we do a constant amount of work for all these events,
    we conclude that the execution time is $O(n^2)$.

    % NOTENOTENOTE: you need to say something about like::
    % At every unit of time (modulo extra work that gets a bound from the paragraphs above) we either pick a new $\hat w$ or add a woman to the existing $V$. Observe that once an woman is added to $V$, they will either receive a better partner or get added to $S$. Thus we'll terminate.
    
  \end{proof}

  We now prove that algorithm~\ref{algMosmToWosm} traverses a maximal chain from
  the man-optimal to the woman-optimal stable outcome.
  Using the theory built up in section~\ref{sectionRotation},
  this will allow us to fairly easily prove that algorithm~\ref{algMosmToWosm}
  correctly outputs all rotations $\Pi$ and the predecessor digraph $G(\Pi)$.

  \begin{claim} \label{claimExecMaxChain}
    During the execution of algorithm~\ref{algMosmToWosm}, let
    the $MPDA(P) = \mu_0, \mu_1, \ldots, \mu_k$ denote the values
    $\tilde\mu$ in order, and let $\rho_0, \rho_1, \ldots, \rho_{k-1}$
    denote the values of $\rho^*$ inserted into $G$ in order.
    Then each $\mu_i$ is a stable matching, $\mu_k$ is the woman-optimal stable
    match, and $\mu_{i+1}$ is the
    elimination of $\rho_i$ from $\mu_i$ for each $i$.
    In other words, $\mu_0, \mu_1, \ldots, \mu_n$ is a maximal chain
    in the stable matching lattice $\L$, with corresponding
    rotation sequence $\rho_0, \rho_1, \ldots, \rho_{k-1}$.

  \end{claim}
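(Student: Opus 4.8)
The plan is to prove the statement by induction on the number of completed calls to \textsc{BuildNewRotation}, maintaining the invariant that each intermediate value $\mu_i$ of $\tilde\mu$ is stable and that the next value $\mu_{i+1}$ is obtained from $\mu_i$ by eliminating a genuine rotation $\rho_i$ exposed in $\mu_i$. The chain property and the identification of $\mu_k$ with the woman-optimal outcome will then follow by combining this invariant with the earlier claims. The base case is immediate: $\tilde\mu$ is initialized to the man-optimal stable matching $\mu_0$, which is stable. I would also note up front that the only branch besides \textsc{BuildNewRotation} that alters the state is the terminal branch at line~\ref{lineFirstDeciciveCase}, which restores $\mu \leftarrow \tilde\mu$ and adds women to $S$ without changing $\tilde\mu$; hence the listed values $\mu_0, \mu_1, \ldots$ are precisely the successive matchings produced by \textsc{BuildNewRotation}.

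The heart of the inductive step is showing that each call to \textsc{BuildNewRotation} extracts a rotation exposed in the current stable matching $\mu_i = \tilde\mu$. Suppose the call is triggered when \textsc{NextAcceptingWoman} returns a woman $w_\ell$ already present in $V$, and let $\rho^* = [(w_\ell,m_\ell),\ldots,(w_k,m_k)]$ be the tail removed from $V$. The women $w_\ell,\ldots,w_k$ cannot have participated in any previously extracted rotation, since those women were deleted from $V$; thus their $\tilde\mu$-partners still agree with $\mu_i$, and in particular the man $m_\ell$ appended alongside $w_\ell$ satisfies $m_\ell = \mu_i(w_\ell)$, which is exactly the man who first becomes free in $MPDA(P_{w_\ell}(\mu_i))$. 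The crucial observation is that \textsc{NextAcceptingWoman} compares every proposer against $\tilde\mu(w) = \mu_i(w)$ rather than against the tentative partner $\mu(w)$, which is precisely the acceptance rule under the truncated preferences $P_{w_\ell}(\mu_i)$. Moreover, $R(m)$ records only women who at some earlier point strictly preferred their stable partner to $m$; since stable partners only improve as $\tilde\mu$ ascends the lattice (by the inductive hypothesis that $\mu_{j+1}$ covers $\mu_j$ for $j<i$), each such woman still prefers her $\mu_i$-partner to $m$ and would likewise reject $m$ in $MPDA(P_{w_\ell}(\mu_i))$. Together these facts let me conclude that the segment of proposals traced between the two times $w_\ell$ is reached is identical to the rejection chain of $MPDA(P_{w_\ell}(\mu_i))$.

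Next I would check the covering condition. Because the algorithm detects the \emph{first} repeated woman and immediately pauses to build the rotation, no woman along this segment receives a second proposal from a man she strictly prefers to her $\mu_i$-match (the one initial proposal that $w_\ell$ rejects comes from $m_\ell = \mu_i(w_\ell)$, which does not count). Hence claim~\ref{claimCoveringCondition} applies, and by claim~\ref{claimRotationFromMPDA} the list $\rho^*$ is a rotation exposed in $\mu_i$ whose elimination $\mu_{i+1}$ covers $\mu_i$; in particular $\mu_{i+1}$ is stable, which closes the induction. This simultaneously establishes that $\mu_{i+1}$ is the elimination of $\rho_i = \rho^*$ from $\mu_i$ and that $\mu_{i+1}$ covers $\mu_i$ in $\L$.

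Finally I would assemble the global statement. By claim~\ref{claimAlgoRuntime} the algorithm terminates, and it terminates only once $S = \W$; by claim~\ref{claimAccumulateWomenAtTop} every woman in $S$ has reached her optimal stable partner, so the final matching $\mu_k$ must be the woman-optimal stable outcome. Since each $\mu_{i+1}$ covers $\mu_i$ and the sequence runs from the man-optimal matching $\mu_0$ to the woman-optimal matching $\mu_k$, the sequence $\mu_0, \mu_1, \ldots, \mu_k$ is by definition a maximal chain in $\L$ with corresponding rotation sequence $\rho_0, \ldots, \rho_{k-1}$. The main obstacle I anticipate is the bookkeeping in the inductive step: justifying rigorously that the accumulated rejection set $R(m)$ together with the compare-against-$\tilde\mu$ trick make the algorithm's partial execution coincide \emph{exactly} with a clean run of $MPDA(P_{w_\ell}(\mu_i))$, so that claim~\ref{claimRotationFromMPDA} can be invoked verbatim.
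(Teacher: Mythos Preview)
Your proposal is correct and follows essentially the same approach as the paper: induct on the successive values of $\tilde\mu$, argue that the proposals and rejections between two appearances of $w^*$ in $V$ coincide with the rejection chain of $MPDA(P_{w^*}(\tilde\mu))$, invoke claim~\ref{claimCoveringCondition} (via claim~\ref{claimRotationFromMPDA}) to get the covering relation, and finish with claims~\ref{claimAlgoRuntime} and~\ref{claimAccumulateWomenAtTop} to identify $\mu_k$ as woman-optimal. If anything, you supply more of the bookkeeping detail (the role of $R(m)$ and the comparison against $\tilde\mu$ rather than $\mu$) than the paper's own terse proof does, and you correctly anticipate that this bookkeeping is where the real work lies.
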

  \begin{proof}
    First, we show by induction that each $\mu_i$ is stable.
    Assume that $\tilde\mu$ is stable and the {\sc BuildNewRotation}
    function is called on line~\ref{lineBuildRotCalled}.
    Let $w^*$ denote the current value of $w$ at that line
    (so $w^*$ is the woman in $V$ who just accepted a
    proposal from a man she prefers to her match in $\tilde\mu$),
    and let $\rho^*$ be as on line~\ref{lineRhoStar}.
    Consider the sequence of rejections made after the last time
    $\tilde\mu$ was changed.
    No woman in $\rho^*$ other than (possibly) $w^*$ has received multiple
    proposals from a man she preferred to her match in $\tilde\mu$.
    Furthermore, the sequence of rejections and proposals made
    between the first occurrence
    of $w^*$ in $V$ is exactly those made in
    $MPDA(P_{w^*}(\tilde\mu))$, i.e. the rejection chain of $w^*$ starting from
    $\tilde\mu$ corresponds exactly to $\rho^*$.
    By~\ref{claimCoveringCondition}, $\mu' = MPDA(P_{w^*}(\tilde\mu))$ covers
    $\tilde\mu$. But the new value of $\tilde\mu$, set on
    line~\ref{lineElimRotationInMatch}, is exactly $\mu'$
    (which is exactly the elimination of rotation $\rho^*$ from $\tilde\mu$).
    Thus, $\mu_{i+1}$ covers $\mu_i$ for each $i$,
    and $\mu_{i+1}$ is the elimination of $\rho_i$ from $\mu_{i}$.

    Algorithm~\ref{algMosmToWosm} terminates only when $S$ is all of $\W$.
    But by claim~\ref{claimAccumulateWomenAtTop},
    $S$ consist only of women who have reached their optimal stable match in
    $\tilde\mu$. Thus, when the algorithm terminates, $\tilde\mu$ is the
    woman-optimal stable match.

    % Alt addition to this theorem:

    % Moreover, for every maximal chain $\mu_0,\ldots,\mu_k$
    % in $\L$, there exists an execution
    % sequence of algorithm~\ref{algMosmToWosm} which results in $\tilde\mu$
    % taking values $\mu_0,\ldots,\mu_k$ in order.

    % Finally, let $\mu_0 < \mu_1 < \ldots < \mu_k$ be any maximal chain in $\L$.
    % Then $\mu_i$ covers $\mu_{i-1}$ for each $i$.
    % We start with $\tilde\mu = \mu_0$ (the man optimal stable match),
    % so inductively assume that we reach line~\ref{linePickWomanRejecting}
    % while $\tilde\mu = \mu_i$.
    % Pick $\hat w$ to be any woman who receives a better match in $\mu_i$
    % than in $\mu_{i-1}$.
    % By claim~\ref{claimCanGoUpIfNotWosm},
    % the result of $MPDA(P_{\hat w}(\mu_i))$ will be exactly $\mu_{i+1}$.
    % By claim~\ref{claimCoveringCondition}, no woman during the execution of
    % $MPDA(P_{\hat w}(\mu_i))$ will receive multiple proposals from a man she
    % prefers to her match in $\mu_i$.
    % Thus, the sequence of proposals and rejections occurring in
    % algorithm~\ref{algMosmToWosm} will be exactly the same as in
    % $MPDA(P_{\hat w}(\mu_i))$, and the next value of $\tilde\mu$ will be
    % set to $\mu_{i+1}$.
  \end{proof}

  % MAYBE we should prove that agents visit all their stable partners in order,
  % but probably we shouldn't do so here -- it'll be a corollary later. So I've
  % COMMENTED stuff out.

  \begin{claim} \label{claimAlgoCorrect}
    Every rotation of $\Pi$ is found and put into $G$
    over the course of algorithm~\ref{algMosmToWosm}.
    Furthermore, if $\rho_1$ is a predecessor of $\rho_2$ (type 1 or type 2)
    then $\rho_1$ will be found before $\rho_2$.
    Moreover, the set of predecessors in $G$ of every rotation $\rho$ 
    % is the same across all executions of
    % algorithm~\ref{algMosmToWosm}, and these
    are exactly the type 1 and type 2
    predecessors defined above, i.e. $G = G(\Pi)$.
  \end{claim}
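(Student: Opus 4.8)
The plan is to dispatch the three assertions in increasing order of difficulty, leaning on Claim~\ref{claimExecMaxChain} (which says the algorithm walks a maximal chain $\mu_0 < \cdots < \mu_k$ with rotation sequence $\rho_0, \ldots, \rho_{k-1}$) together with the structural results of Section~\ref{sectionRotation}. The first two assertions are essentially immediate. Since the algorithm realizes a maximal chain, Claim~\ref{claimChainsAreToplSorts}(\ref{subclaimAllRotationsAppear}) tells us every rotation of $\Pi$ appears exactly once among $\rho_0, \ldots, \rho_{k-1}$, so every rotation is found and inserted into $G$. For the second assertion, Claim~\ref{claimChainsAreToplSorts}(\ref{subclaimPredecesorsBefore}) says $\rho_0, \ldots, \rho_{k-1}$ is a topological sort of $G(\Pi)$, so any (type 1 or type 2) predecessor of $\rho^*$ appears earlier in the chain and is therefore found first.

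The real work is the third assertion, $G = G(\Pi)$, which I would prove by fixing a rotation $\rho^* = \rho_j$ and checking that the predecessor sets attached to it on lines~\ref{lineType1PredsAdded} and~\ref{lineType2PredsAdded} coincide exactly with its type 1 and type 2 predecessors from Definition~\ref{def:typedEdges}. For type 1, I would use the invariant that $pred_m^1$ always holds the \emph{most recent} rotation that moved man $m$, i.e.\ the rotation that placed $m$ at his current partner $\tilde\mu(m)$ (maintained on line~\ref{lineNewType1Pred}), together with the fact that a stable pair is created by at most one rotation (Claim~\ref{claimStablePartnerRotations}(\ref{subclaimAtMostOnceRotate})). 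If $\rho^*$ moves $m_i$ away from $w_i$, then $(w_i, m_i)$ appears in $\rho^*$, and the unique rotation that could have moved $m_i$ \emph{to} $w_i$ is precisely the value stored in $pred_{m_i}^1$ (or $\emptyset$, when $w_i$ is $m_i$'s man-optimal partner and no such predecessor exists). Matching against the definition shows $\bigcup_i pred_{m_i}^1$ is exactly the type 1 predecessor set.

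For type 2 I would argue soundness and completeness separately. The labeling step (lines~\ref{lineNewType2MarkStart}--\ref{lineNewType2MarkEnd}) records, for each rotation $\rho_1$, a pair $(w,m)$ with $\rho_1$ moving $w$ from below to above $m$; and $\rho_1$ is accumulated into $pred_m^2$ (line~\ref{lineAccumType2Preds}) exactly when, during $\rho^*$'s rejection chain, the free man $m$ proposes to $w$ and is rejected because $w$ already holds a strictly better stable partner. Soundness is direct: such a rejection forces $w$ to lie strictly between the two partners $m$ takes on in $\rho^*$, so $\rho^*$ moves $m$ from above to below $w$, while the label certifies $\rho_1$ moves $w$ from below to above $m$ -- exactly a type 2 relation. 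For completeness, given a genuine type 2 predecessor $\rho_1$ of $\rho^*$ witnessed by $(w,m)$, the label of $m$ by $w$ is already in place (as $\rho_1$ precedes $\rho^*$ by the second assertion), and since $\rho_1$ has been eliminated, $w$'s current stable partner beats $m$, so $w$ rejects $m$ and triggers the accumulation -- \emph{provided} $m$ actually proposes to $w$ during the construction of $\rho^*$.

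The crux, and the step I expect to be the main obstacle, is precisely this proviso: I must show that each man $m_i$ moved by $\rho^*$ proposes, during $\rho^*$'s rejection chain, to every woman strictly between his old partner $w_i$ and his new partner $w_{i+1}$ (in particular to the witness $w$). This reduces to a ``frontier'' invariant: when $\rho^*$ moves $m_i$ from $w_i$, the man $m_i$ has not yet proposed strictly below $w_i$, so no such woman sits in $R(m_i)$. The danger is a reverted rejection chain (line~\ref{lineFirstDeciciveCase}) in which $m_i$ proposed past $w_i$ but the working matching $\mu$ was rolled back while $R(m_i)$ was not. I would rule this out by observing that $m_i$ can descend below $w_i$ only after being freed by $w_i$; but any woman who frees a man inside a chain that dead-ends is added to $S$ (line~\ref{lineAddFinalizedWomen}) and, by Claim~\ref{claimAccumulateWomenAtTop}, has reached her woman-optimal partner and never changes again. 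Since $\rho^*$ \emph{does} change $w_i$'s partner, we have $w_i \notin S$ when $\rho^*$ forms, so no reverted descent of $m_i$ below $w_i$ can have occurred. This establishes the frontier invariant, completing the completeness argument for type 2, and the same invariant rules out spurious proposals and hence spurious type 2 edges.
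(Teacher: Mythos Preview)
Your proposal is correct and matches the paper's approach: the first two assertions via Claims~\ref{claimExecMaxChain} and~\ref{claimChainsAreToplSorts}, then checking the bookkeeping for $pred_m^1$ and $pred_m^2$ against Definition~\ref{def:typedEdges}. Your treatment of the type~2 case is in fact more careful than the paper's --- the paper simply asserts that ``$m$ is rejected by each woman between $w$ and $w'$ on his list'' during the current chain, whereas you identify and justify the underlying frontier invariant (that $R(m_i)$ contains no woman strictly below $w_i$ when $m_i$ enters $V$) via $S$ and Claim~\ref{claimAccumulateWomenAtTop}; the only case you leave implicit is when the earlier chain produced a rotation rather than dead-ending, but that is immediate since such a rotation would have moved $m_i$'s $\tilde\mu$-partner strictly below $w_i$, contradicting $\tilde\mu(m_i)=w_i$ at the time $\rho^*$ is built.
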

  \begin{proof}
    The first two statements now readily follow from the previous claim
    and claim~\ref{claimChainsAreToplSorts} (parts
    \ref{subclaimAllRotationsAppear} and
    \ref{subclaimPredecesorsBefore} respectively).

    We know that for each rotation $\rho^*$,
    each predecessor of $\rho^*$ has certainly been found
    by the time we construct $\rho^*$.
    We now show that the predecessors of $\rho^*$ are appropriately marked.
    The type 1 predecessors are added on line~\ref{lineType1PredsAdded},
    and they are exactly the most recently found rotations moving man $m$
    (as consistently updated on line~\ref{lineNewType1Pred}).
    But the most recent rotation moving $m$ must be the unique rotation
    which moved $m$ to his current match $w$, where $(w,m)$ appears in $\rho^*$.
    So the type 1 predecessors of $\rho^*$ are accurately marked.

    For the type 2 predecessors, consider a man $m$ who appears in $\rho^*$
    and moves from $w$ to $w'$. From the time $m$ entered $V$
    (in line~\ref{lineInitAccumedType2PredsWHat}
    or~\ref{lineInitAccumedType2PredsMid}), we added to $pred^2_m$
    (on line~\ref{lineAccumType2Preds}) each
    rotation $\rho$ such that $m$ was rejected by a woman $w$ and
    $\rho$ moved $w$ from below $m$ to above $m$
    (all such rotations are predecessors of $\rho^*$,
    and thus have already been found and appropriately marked on
    lines~\ref{lineNewType2MarkStart} to~\ref{lineNewType2MarkEnd}).
    As $m$ is rejected by each woman between $w$ and $w'$ on his list,
    this covers all possible type 2 predecessors of $\rho^*$,
    so the type 2 predecessors are accurately marked
    on line~\ref{lineType2PredsAdded}.

    % . As $m$ proposes to exactly 

    % For the type 1 predecessors, there is at most one for each man who appears
    % in $\rho^*$, and because 
    % The last statement follows from the fact that the definition of
    % type 1 and type 2 predecessor is exactly what we
    % used to assign $pred_m^1$ and $pred_m^2$ sets in the algorithm
  \end{proof}

  We can now immediately conclude from claims~\ref{claimAlgoRuntime} 
  and \ref{claimAlgoCorrect}
  that the rotation predecessor graph $G(\Pi)$ can be
  correctly and efficiently computed.
  \begin{theorem} \label{theoremAlgRuntime}
    Algorithm \ref{algMosmToWosm} computes $G(\Pi)$ in $O(n^2)$ time. 
  \end{theorem}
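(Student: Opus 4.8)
The plan is to observe that this theorem is simply the conjunction of the two claims just established, so the proof amounts to assembling them rather than introducing any new argument. First I would dispatch correctness of the output. Claim~\ref{claimAlgoCorrect} already guarantees three things: every rotation of $\Pi$ is discovered and inserted as a vertex of $G$; whenever $\rho_1$ is a predecessor of $\rho_2$ the rotation $\rho_1$ is found first (so the insertion order is well-defined and predecessors are available when needed); and the edges attached to each rotation $\rho^*$ are precisely its type 1 and type 2 predecessors. Since by Definition~\ref{def:typedEdges} the vertices of $G(\Pi)$ are exactly the rotations in $\Pi$ and its directed edges are exactly the (type 1 or type 2) predecessor relations, these conclusions together give the equality $G = G(\Pi)$ as directed graphs, so the algorithm's output is correct.

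For the running time, I would simply cite Claim~\ref{claimAlgoRuntime}, which shows that the algorithm terminates and performs $O(n^2)$ total work. Combining the two claims then yields that $G(\Pi)$ is both produced correctly and produced within $O(n^2)$ time, which is exactly the assertion of the theorem.

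The substance of this result lives entirely in the two supporting claims, so there is no genuine obstacle in the theorem statement itself; the proof is a packaging step. The one point worth stating explicitly is that ``computing $G(\Pi)$'' decomposes cleanly into (i) emitting the correct vertex and edge set, certified by Claim~\ref{claimAlgoCorrect}, and (ii) doing so within the stated bound, certified by Claim~\ref{claimAlgoRuntime}. In particular I would be careful to note that Claim~\ref{claimAlgoCorrect} certifies equality of the \emph{labeled directed} structures $G$ and $G(\Pi)$ --- matching both the vertex set and the full predecessor edge set --- and not merely agreement on which rotations appear.
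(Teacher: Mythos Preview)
Your proposal is correct and matches the paper's own treatment: the paper states that the theorem follows immediately from Claims~\ref{claimAlgoRuntime} and~\ref{claimAlgoCorrect}, with no additional argument. Your decomposition into correctness of the output ($G=G(\Pi)$ via Claim~\ref{claimAlgoCorrect}) and the $O(n^2)$ running time (via Claim~\ref{claimAlgoRuntime}) is exactly the intended packaging.
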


  % \input{Conclusion.tex} 

 % \input{old_section_6.tex}

    % It's not clear if the authors of \cite{AshlagiUnbalancedCompetition17} realize this.

  % \begin{proof}
  %   Induct on the length of $V$ and the size of $S$.
  %   If $|V| = 1$, then 
  %   UUGH well this still gets a bit messy. Probably best to use known proofs.
  %   I think it would work IF you show a converse of the ``lattice neighbors
  %   covering'' thing: all guys hasse-adjacent in the lattice differ by an
  %   improvement cycle.
  % \end{proof}

\bibliography{MatchingMechs}{}
\bibliographystyle{alpha}

\appendix

\section{Contrasting Our Proofs With Those of~\cite{GusfieldStableStructureAlgs89}}
  \label{sectionContrastGI}

The primary difference between our approach and the original is that
\cite{GusfieldStableStructureAlgs89} starts by
defining a partial order on the set of rotations $\Pi$,
then constructs a directed acyclic graph $G$ and proves that the 
transitive closure of $G$ gives the correct order on $\Pi$.
% (going so far as to
% build up a theory of ``the minimal differences of a rings of sets'').
In contrast, we start with $G = G(\Pi)$ and are able to prove directly that
$G(\Pi)$ represents the set of all stable matchings.
% Then, when they turn to algorithmic considerations, they define a graph
% equivalent to our $G(\Pi)$.
% By starting with $G(\Pi)$, we achieve a more streamlined and
% intuitive presentation.
% Most of our proofs are essentially equivalent to those from
% \cite{GusfieldStableStructureAlgs89}, but the proof approach for
% claim~\ref{claimToplSortsAreChains} is new (it replaces lemma 3.2.4 in
% \cite{GusfieldStableStructureAlgs89}, and its usage of
% claim~\ref{claimToplSortsInterchange} is one of the crucial steps
% allowing us to skip a lot of the ``theory building''
% and work with $G(\Pi)$ without first defining a partial order on $\Pi$).
% We also focus more explicitly on the concept of
% \emph{covering relations} in the stable matching lattice
% (though the lattice-theoretic properties
% needed are quite simple).

% 2) We base our proofs on GI89's graph G(M) instead of on a partial order defined
% directly on Pi. This is the primary "simplifying step".  I give more detail
% below.

Our definition of rotations is similar to that
of~\cite{GusfieldStableStructureAlgs89}.
The only difference
is that we define rotations to be the difference between two stable matchings
where one covers the other, whereas~\cite{GusfieldStableStructureAlgs89}
essentially defines rotations as rejection chains of MPDA
(where no woman receives multiple proposals from a
better man than her old match) without explicitly mentioning MPDA
(GI89 must prove that rotations give covering relations in their
lemma 2.5.5, although they do not explicitly use the term ``covering'').

\cite{GusfieldStableStructureAlgs89} defines a partial order relation on $\Pi$
via the partial order
relation on their ``minimal difference on a ring of sets''.  In turn, the ordering
on the minimal differences is defined via the lattice ordering, restricted to
the ``irreducible elements (other than the man-optimal)''.
Building the theory of these partial orders occupies
\cite{GusfieldStableStructureAlgs89} for the entirety of their Chapter 2
(pages 67 to 102).
We believe that intuition is lost through these layers of definition.

There are two steps to proving that the transitive closure of $G$ is exactly
the partial order on $\Pi$.  First, every edge in $G$ should be related in $\Pi$.
We find that the essence of the proof in~\cite{GusfieldStableStructureAlgs89}
(lemma 3.2.3) goes through without referencing a partial order on $\Pi$ at all.
We capture this with our claim~\ref{claimChainsAreToplSorts},
which shows that any valid sequence of rotations respects $G$.

The second step is to show that every relation in $\Pi$ is in the transitive
closure of $G$.
\cite{GusfieldStableStructureAlgs89} heavily relies on the existing 
partial order on $\Pi$ for this step of their proof (lemma 3.2.4)
because they prove that any \emph{immediate}
predecessor (i.e. a covering relation) in $\Pi$ must be related in $G$.
% TODO: footnote some stuff about how they do this.

% (This
% follows by looking at what must happen if mu exposes rho (and not rho') and
% mu/rho exposes rho'. You can then prove that rho is a type 1 or type 2
% predecessor of rho'.)

We are able to skip this crucial reliance on an order on $\Pi$ via our
claim~\ref{claimToplSortsAreChains},
which proves that every ordering of rotations which respects $G$ corresponds
to a maximal chain in the stable matching lattice.
The key lemma we use is claim~\ref{claimToplSortsInterchange},
which shows that if two rotations are adjacent in some topological sort
and do not have an edge between them, then that pair of rotations
can be swapped (while maintaining the property that the topological
sort corresponds to a maximal chain).

In the end, our method amounts to showing that the collection of topological
sorts of $G$ is in a bijection with the linear extensions of $\Pi$,
and using a ``swapping'' lemma like claim~\ref{claimToplSortsInterchange}
to prove the correspondence in one direction.
The linear extensions of $\Pi$ are particularly natural objects in our case
(as they correspond to maximal chains in the stable matching lattice).
To the best of our knowledge, this strategy for proving that a graph gives
a certain transitive closure has not been used before.
It may be useful in other situations involving distributive lattices where
maximal chains in the lattice are easy to reason about.

As in~\cite{GusfieldStableStructureAlgs89}, we deliberately avoid mentioning
Birkhoff's representation theorem. While this classical theorem immediately
shows the existence of \emph{some} partial order which represents any
distributive lattice, it does not show how to find this representation
or give any structure regarding what the elements of the partial order are.
% algorithmic applications need rotations!!
Indeed, Birkhoff's theorem by itself could not even show that the partial order
$\Pi$ is polynomial-size.

% The only new observation we need is that, if two "adjacent" rotations are NOT
% type 1 or type 2 predecessors, they they can be eliminated in either order.  It
% follows that any sequence of rotations that respects G(M) is a valid sequence of
% rotation eliminations in the stable matching lattice.  Thus we essentially prove
% that G(M) characterizes the partial order on Pi, without every having to define
% the partial order on Pi in the first place.

% ====
% They key step that allows us to directly use G(M) is our claims 4.6 and 4.7,
% which works by show that any order that respects G(M) can be a sequence of rotation
% eliminations.

\section{A Minor Error in \cite{GusfieldStableStructureAlgs89}}
  \label{sectionCorrection}
  The \emph{minimal-differences}
algorithm of \cite{GusfieldStableStructureAlgs89} 
(more precisely, figure 3.2 on page 110) correctly identifies all
of the rotations in a stable matching instance. However, there is a slight error
in the construction of the order relations for the rotation poset.
In particular, once the rotations are found (via an algorithm essentially
equivalent to our algorithm~\ref{algMosmToWosm},
but without keeping track of predecessor relations), they propose Algorithm~\ref{algRotationOrderGI} as shown below.

\begin{algorithm} 
  \caption{Construct predecessor relations}\label{algRotationOrderGI}
\begin{algorithmic}[1]
  \For {Each rotation $\rho$ and pair $(m_i,w_i)\in \rho$}
    \State Label $w_i$ in $m_i$'s preference list with a type 1 $\rho$ label
    \For {Each $m$ strictly between $m_i$ and $m_{i-1}$ on $w_i$'s list}
      % \Comment $w_i$ upgrades $m_i$ to $m_{i-1}$
      \State Label $w_i$ in $m$'s preference list with a type 2 $\rho$ label
    \EndFor
  \EndFor
  \For {Each man $m$}
    \State Set $\rho^* = \emptyset$
    \For {Each woman $w$ on $m$'s preference list, in order}
      \If {$w$ has a type 1 label of $\rho$}
        \If { $\rho^* \ne \emptyset$ } 
          Add $\rho^*$ as a predecessor of $\rho$
        \EndIf
        \State Set $\rho^* = \rho$
      \EndIf
      \If { $w$ has a type 2 label of $\rho$}
        \If {$\rho^*\ne \emptyset$}
          Add $\rho$ as a predecessor of $\rho^*$
        \EndIf
      \EndIf
    \EndFor
  \EndFor
\end{algorithmic}
\end{algorithm}

The idea behind this algorithm is reasonable:
certainly the type 1 labels in any man's chain should be related in the poset
(as the man needs to reach a certain partner before the next rotation can be
found).
Furthermore, suppose a type 2 label $\rho_2$ is between two type 1 labels,
$\rho_1$ and $\rho_3$. We know that $\rho_1$ moved $m$ from
his partner in $\rho_1$ to his partner is $\rho_3$,
as men propose in their preference order at most once to each woman.
Along the rejection chain from his partner in $\rho_1$
to his partner in $\rho_3$, $m$ would propose to some woman $w$ in $\rho_2$, and
$w$ likes $m$ better than her match in $\rho_2$. Thus, the rejection chain of
$\rho_1$ will certainly trigger $\rho_2$, and $\rho_1$ must be a predecessor of
$\rho_2$.

However, the above reasoning fails in certain cases. Namely, in the case
where there is a type 1 label $\rho^*$ followed by a type 2 label $\rho$
on woman $w$,
but in $\rho^*$ man $m$ does not move from above $w$ to below $w$.
In this case, the rejection sequence $\rho^*$ does not actually trigger rotation
$\rho$.

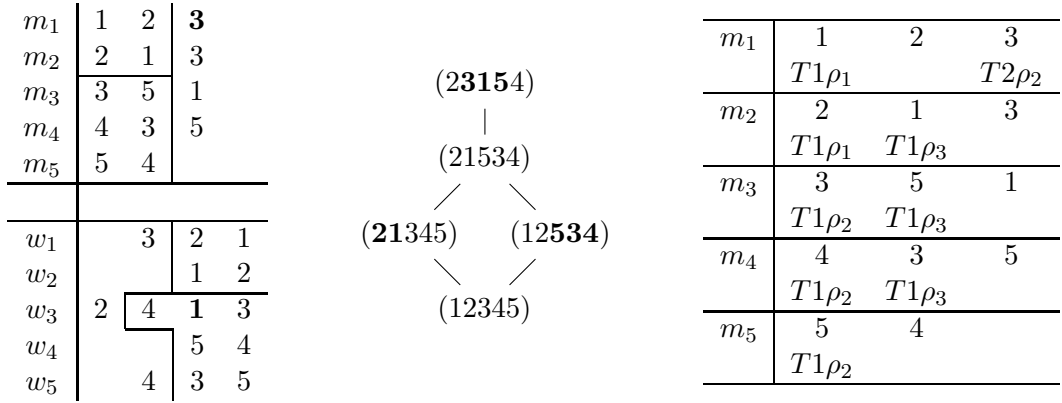
\begin{figure}[htp] 
  \centering
  \begin{tabular}{c | c c c c}
    $m_1$ & 1 &\multicolumn{1}{c|}{2} & \textbf{3} \\
    $m_2$ & 2 &\multicolumn{1}{c|}{1} & 3 \\
    \cline{2-3}
    $m_3$ & 3 &\multicolumn{1}{c|}{5} & 1 \\
    $m_4$ & 4 &\multicolumn{1}{c|}{3} & 5 \\
    $m_5$ & 5 &\multicolumn{1}{c|}{4} & \\
    \hline
    & & & & \\
    \hline
    $w_1$ & & \multicolumn{1}{c|}{3} & 2 & 1 \\
    $w_2$ & & \multicolumn{1}{c|}{ } & 1 & 2 \\
    \cline{3-5}
    $w_3$ & \multicolumn{1}{c|}{2} & 4 & \textbf{1} & 3 \\
    \cline{3-3}
    $w_4$ & & & \multicolumn{1}{|c}{5} & 4 \\
    $w_5$ & & 4 & \multicolumn{1}{|c}{3} & 5 \\
  \end{tabular}
  \qquad
  \begin{tabular}{c}
    \begin{tikzpicture}[scale=1]
      \node (zero) at (0,-1) {$(12345)$};
      \node (l) at (-1,0) {$({\bf21}345)$};
      \node (r) at (1,0) {$(12{\bf534})$};
      \node (b) at (0,1) {$(21534)$};
      \node (one) at (0,2) {$(2{\bf315}4)$};
      \draw (zero) -- (l) -- (b) -- (one);
      \draw (zero) -- (r) -- (b);
    \end{tikzpicture} 
  \end{tabular}
  \qquad
  \begin{tabular}{c| ccc}
    \hline
    $m_1$ & 1 & 2 & 3 \\
      & $T1\rho_1$ & & $T2\rho_2$ \\
    \hline
    $m_2$ & 2 & 1 & 3 \\
      & $T1\rho_1$ & $T1\rho_3$ & \\
    \hline
    $m_3$ & 3 & 5 & 1 \\
      & $T1\rho_2$ & $T1\rho_3$ & \\
    \hline
    $m_4$ & 4 & 3 & 5 \\
      & $T1\rho_2$ & $T1\rho_3$ & \\
    \hline
    $m_5$ & 5 & 4 & \\
      & $T1\rho_2$ & & \\
    \hline
  \end{tabular}
  \caption{A tricky case for algorithm~\ref{algRotationOrderGI},
    including the labeled preference lists. }
  \label{figTrickyPreds}
\end{figure}

For a concrete counterexample, consider the stable matching instance
in figure~\ref{figTrickyPreds}, drawn alongside its lattice $\L$ (with matchings
written as the by writing the partner of $w_1, w_2,\ldots, w_5$ in order).
The rotations of this instance are 
$\rho_1=[(1,2),(2,1)]$, $\rho_2=[(3,3),(4,4),(5,5)]$, and
$\rho_3=[(2,1),(4,3),(3,5)]$, and $\rho_1$ and $\rho_2$ are both type 1
predecessors of $\rho_3$.
The labeled preference list of the men, given by applying
algorithm~\ref{algRotationOrderGI}, is also drawn in figure~\ref{figTrickyPreds}.

% \begin{wrapfigure}{r}{5cm}
%   \caption{The labeled men's preference lists}
%   \label{figTrickyLists}
% \end{wrapfigure}

The above algorithm causes $\rho_1$ to be marked as a predecessor of $\rho_2$, even
though they are independent and both exposed in the man-optimal stable matching.
Our algorithm~\ref{algMosmToWosm} circumvents this problem
by storing the type 1 and
type 2 labels in different places, and detecting the required orderings
on the rotations more directly. 

Another way around this problem, which is more similar to
\cite{GusfieldStableStructureAlgs89}'s algorithm~\ref{algRotationOrderGI},
would be to write ``type 1 end markers'' for the final type 1 label
in each man's preference list (note that this problem can only happen
for type 2 labels after the final type 1 label, because if there is another type
1 label after the type 2 label, the man must actually move below the woman 
$w$ where the type 2 label was marked).
More specifically, for the last type 1 label on
$m$'s list, say of a rotation $\rho$, mark ``type 1 end'' on the woman $w$ for
which $\rho$ moves $m$ to $w$. Then, ignore any type 2 labels after the ``type 1
end'' mark.

% More specifically, the type 1 labels are stored,
% one for each man, and the type 2 labels appear on the preference lists of the
% women.

\end{document}